\theoremstyle{definition}
\newtheorem{proposition}{Proposition}
\newtheorem{corollary}{Corollary}
\newcommand{\tr}{\mathrm{tr}}
\newcommand{\cA}{{\mathcal A}}
\newcommand{\cB}{{\mathcal B}}
\newcommand{\cC}{{\mathcal C}}
\newcommand{\cE}{{\mathcal E}}
\newcommand{\cF}{{\mathcal F}}
\newcommand{\cG}{{\mathcal G}}
\newcommand{\cL}{{\mathcal L}}
\newcommand{\cN}{{\mathcal N}}
\newcommand{\cP}{{\mathcal P}}
\newcommand{\cT}{{\mathcal T}}
\newcommand{\cV}{{\mathcal V}}
\newcommand{\cZ}{{\mathcal Z}}
\def\inv{{\mbox{\tiny -1}}}
\newcommand\beq{\begin{equation}}
\newcommand\eeq{\end{equation}}
\newcommand{\be}{\begin{equation}}
\newcommand{\ee}{\end{equation}}
\newcommand{\bes}{\begin{eqnarray}}
\newcommand{\ees}{\end{eqnarray}}
\def\ot{{\,\otimes \,}}
\def\act{\rhd}
\def\vphi{{\varphi}}
\newcommand{\one}{\mbox{$1 \hspace{-1.0mm}  {\bf l}$}}
      \def\nn{{\nonumber}}
\def\tr{{\mathrm{tr}}}
\def\act{{\, \triangleright\, }}
\newcommand{\su}{\mathfrak{su}}
\newcommand{\so}{\mathfrak{so}}
\newcommand{\SU}{\mathrm{SU}}
\newcommand{\U}{\mathrm{U}}
\newcommand{\SO}{\mathrm{SO}}
\def\extd{\mathrm {d}}
\newcommand{\e}{\epsilon}
\newcommand\acts\triangleright
\newcounter{letter} \newcounter{numeral} \newcounter{Numeral}
\newcommand\Tr{\mathrm{Tr}}
\def\vphi{\varphi}
\def\e{\mbox{e}}
\def\E{\mbox{E}}
\def\extd{\mathrm {d}}
\def\ve{\varepsilon}
\newcommand\maps{\colon}
\newtheorem{theo}{Theorem}
\newtheorem{lemma}[theo]{Lemma}
\begin{document}

%

%

\title{Bubbles and jackets: new scaling bounds in topological group field theories}

\author{Sylvain Carrozza}\email{sylvain.carrozza@aei.mpg.de}\affiliation{Laboratoire de Physique Th\'{e}orique, CNRS UMR 8627,
Universit\'{e} Paris XI, F-91405 Orsay Cedex, France, EU}\affiliation{Max Planck Institute for Gravitational
Physics, Albert Einstein Institute, Am M\"uhlenberg 1,
14476 Golm, Germany, EU}

\author{Daniele Oriti}\email{daniele.oriti@aei.mpg.de}\affiliation{Max Planck Institute for Gravitational
Physics, Albert Einstein Institute, Am M\"uhlenberg 1,
14476 Golm, Germany, EU}


\begin{abstract}
\bigskip 
\bigskip
We use a reformulation of topological group field theories in 3 and 4
dimensions in terms of variables associated to vertices, in 3d, and
edges, in 4d, to obtain new scaling bounds for their Feynman
amplitudes. In both 3 and 4 dimensions, we obtain a bubble bound
proving the suppression of singular topologies 
with respect to the first terms in the perturbative expansion (in the cut-off). We also
prove a new, stronger jacket bound than the one currently available in
the literature. We expect these results to be relevant
for other tensorial field theories of this type, as well as for group
field theory models for 4d quantum gravity.
\end{abstract}

\keywords{Models of Quantum Gravity, Lattice Models of Gravity, Matrix Models, Non-Commutative Geometry}

\maketitle

\section{Introduction}
Group field theories and tensor models \cite{iogft, ProcCapeTown, tensorReview, GFTfluid, vincentTensor} are developing more and more as promising candidates for a microscopic description of quantum
spacetime and gravity. They extend to higher dimensions the idea of space time as emerging from a superposition of cellular complexes, generated as the perturbative expansion of a field theory, that is
realized successfully in 2 dimensions  by matrix models \cite{mm}. In this sense they have the same goal of the dynamical triangulations approach to quantum gravity \cite{DT}. Indeed, their Feynman
expansion generates a sum over d-dimensional complexes weighted by amplitudes that characterize the dynamical content of the specific model considered, while the combinatorial structure of the Feynman
diagrams follows directly from the way field arguments are paired in the (interaction term in the) action. 

In the simplest tensor models \cite{tensor}, the basic object is a rank-d tensor whose indices take values in a finite set of dimension $N$, which can be represented graphically as a (d-1)-simplex
with its $d$ (d-2)-faces labelled by the $d$ indices. The action possesses a combinatorially non-local vertex in which $d+1$ tensors are coupled by tracing over common indices respecting the
combinatorial pairing of faces in a d-simplex built by gluing $d+1$ (d-1)-simplices. The amplitudes of such theory, reflecting the simplicity of the data set, are simple functions of the combinatorial
structure of the underlying simplicial complexes. Still, the models constructed on this simple basis are extremely rich, and a wealth of interesting results have been obtained recently about them
\cite{tensorReview,uncoloring}. These include a detailed understanding of the combinatorial structure of the d-complexes arising in perturbative expansion \cite{tensorReview,uncoloring}, and of the
large-N limit of the models \cite{large-N},  the discovery of their symmetries in such limit \cite{razvanVirasoro}, and interesting universality results \cite{razvanUniversality}.  Many of these
results have been obtained thanks to the introduction of {\it colours} in the formalism, that is for multi-tensor models in which $d+1$ complex tensors (each corresponding to a different "colour") enter
the fundamental interaction term (thus each d-simplex has (d+1) faces of different colours) and no colour change is allowed during propagation (thus d-simplices interact only through faces of the same
colour). Equivalently \cite{uncoloring}, colours can be replaced by a labelling of indices of {\it unsymmetric} complex tensors, which are then restricted to have only interactions that satisfy a
$\otimes_d U(N)$ symmetry (so-called {\it invariant} tensor models \cite{uncoloring}). 

Another set of models is obtained when the domain space for the tensorial objects is chosen to be a Lie group manifold, while keeping the same combinatorial structure of action and Feynman diagrams.
This variety of tensorial models goes usually under the name of {\it group field theories} (GFT). The group structure brings in a new set of tools for the analysis or the construction of the models,
including for example harmonic analysis \cite{mikecarlo} and non-commutative Fourier transforms \cite{majidfreidel, ad_nc}. Most important, as far as quantum gravity is concerned, it brings the tensor
formalism in direct contact with discrete classical and quantum geometry, and with other approaches to quantum gravity, such as loop quantum gravity \cite{lqg} and spin foam models \cite{sf}, as well
as simplicial quantum gravity path integrals \cite{regge, 1stRegge}. This is true, in particular, for GFT models in which the fundamental field is assumed to satisfy an additional symmetry property,
that is an invariance under the diagonal shift of the $d$ group arguments of the field by the same group element. The first result of this imposition is to introduce a coupling between different
arguments of the field (the resulting model is not {\it independently distributed} anymore, in the language of matrix models), which is then reflected into a very different dependence of the
amplitudes of the models on the combinatorial structure of the underlying complex. Another result of this imposition (which can be turned into a modification of the kinetic and/or interaction kernels,
rather than of the field itself) is to introduce a gauge connection at the level of the Feynman amplitudes, which can be interpreted (when the Lie group is chosen appropriately) as a discrete gravity
connection and allows then a similar interpretation for the same group arguments of the GFT field. The boundary states (GFT observables) of the model take then the form of holonomy functionals as in
lattice gauge theory and loop quantum gravity, and their harmonic expansion in group representations turns them into spin networks, identified in loop quantum gravity as candidate quantum states of
geometry. The reformulation of the same GFTs, via non-commutative Fourier transform, in Lie algebra variables brings instead the simplicial geometry of the same models to the forefront and expresses
their Feynman amplitudes as simplicial gravity path integrals \cite{ad_nc, GFT-BC, GFT-Holst}, on top of allowing the identification of the GFT analogue of discrete diffeomorphisms \cite{GFTdiffeos}.

Last, we mention that the simple tensor and GFT models with trivial (pure mass) propagators can be also understood as the static ultralocal limit of {\it dynamical} models which include derivative
operators on the group manifold, and thus are not {\it identically distributed}. These type of models, dubbed {\it tensor field theories} and whose matrix analogue is, for example, the
Grosse-Wulkenhaar model \cite{GW}, were first introduced (with different motivations) in the group field theory context in \cite{generalizedGFT}. They have the advantage of allowing a nice definition
of scales in terms of eigenvalues of the kinetic operators, that give rise to a genuine renormalization group flow. Indeed, they have been argued to arise from the radiative corrections of topological
(and ultralocal) GFT models \cite{ValentinJoseph}, and, more recently, some simple models of this type in both 3 and 4 dimensions (with $U(1)$ group) have been shown to be renormalizable to all orders
in perturbation theory \cite{josephvincent,josephsamary}. 

\

Renormalization is indeed one of the aspects of group field theories that is currently at the centre of much research activity. Establishing renormalizability of interesting models for quantum gravity
is crucial for considering them as well-posed field theories, and a better understanding of their renormalization flow will help characterizing their properties at different scales and in different
regimes. It is also important for testing the critical behaviour of tensor and group field theories and, possibly, realizing the {\it geometrogenesis} hypothesis \cite{GFTfluid,vincentTensor} , that
is the hypothesis that a continuous extended spacetime emerges as a result of a phase transition of an underlying model of this type (as it happens in matrix models\footnote{See also \cite{graphity}
for the first such proposal in a different context, and \cite{leejoao} for the first attempt to extract possible phenomenological consequences of this scenario in early cosmology}).

A crucial ingredient for any study of group field theory renormalization is a good understanding of the scaling behaviour and divergence structure of GFT amplitudes, and thus of the GFT sum over
complexes which includes not only all simplicial regular topologies, but also singular configurations that do not satisfy manifold conditions because they contain conical singularities.  Many power
counting results have been proven recently, with a special focus on topological group field theories \cite{GFTrenorm, josephvincent, large-N, vertex, matteovalentin}. These include: very general power
counting theorems based on the lattice gauge theory formulation of the GFT Feynman amplitudes \cite{matteovalentin}, showing how these amplitudes depend on both topological properties (e.g. the
fundamental group) of the corresponding cellular complex and on the details of its combinatorial structure; the scaling bounds that are at the root of the results on the large-N limit of such models
\cite{large-N}, in which we now know that only a restricted class of complexes corresponding to spherical manifolds dominate; analyses targeting restricted, more focused topological issues like the
relative weight of pseudomanifolds over manifolds in the GFT perturbative expansion \cite{vertex}. Answering these questions is crucial for considering tensor and GFT models as physical models of
quantum space, given that the smooth structure and trivial topology of macroscopic space has to be explained, rather than assumed, in a formalism that includes more general discrete configurations in
its microscopic dynamics; it is also a test for the growing maturity of this field, since it is exactly the difficulties in tackling them that stopped further progress in the early days of tensor
models. 

\

The work presented in \cite{vertex} is interesting for another reason as well. Just as the analysis of \cite{matteovalentin}, it rests entirely on the geometric interpretation  of both the GFT field
and action, and of the resulting Feynman amplitudes. Thus its results depend crucially on the added structures and, most important, on the added symmetries of such GFT models as compared to the simple
tensor models, for which we are not aware of similar results. In particular, it uses a reformulation of the same field theory in which the arguments of the field (corresponding to a triangle in 3d)
label vertices rather than edges of the same triangle, a reformulation that was in turn suggested by the analysis of symmetries in GFT \cite{GirelliLivine}, and especially discrete diffeomorphisms
\cite{GFTdiffeos}, which act indeed as local translations of such vertices. In this reformulation, the GFT amplitudes factorize differently than in edge variables and in particular put to the
forefront the bubble structure of the underlying complex, making the analysis of its possible conical singularities straightforward. 

In this paper we build on this earlier result in two ways. First, we perform again the same construction of \cite{vertex} for the 3d (Boulatov) model \cite{boulatov} in a different regularization,
which allows a simpler proof of the bubble bound, showing generic suppression of pseudomanifolds; for the same model, then, we show how the vertex formulation also allows a proof of new jacket bounds
that end up being actually {\it stronger} than the ones on which the large-N results were based \cite{large-N}. Second, we generalize the analysis to the 4d (Ooguri) model \cite{ooguri}. In this
context, the relevant reformulation, that we detail, involves going from variables associated to the triangles on the boundary of the tetrahedron corresponding to the GFT field to variables
corresponding to its edges. Again this is suggested by the topological symmetry studied at the GFT level in \cite{GFTdiffeos}. Using this reformulation, we can then take advantage of the modified
factorization of the GFT amplitudes to prove the 4d generalization of the above results for the 3d model: a bubble bound showing how singular pseudomanifolds are suppressed with respect to the leading order, 
and a stronger jacket bound. 

It is interesting to point out also that, as we will notice again in the following, the structure of the GFT amplitudes in our vertex/edge reformulation  matches nicely the one obtained from the
effective 1-colour dynamics of the underlying coloured GFT. Finally, we notice that the construction we present can be generalized to higher dimensions, so we expect that interesting bounds in any
dimensions could be recovered recursively, in the same way as we obtain here the
jacket bound in 4d from bounds in 3d; also, because our proofs do not rely on invariance of the amplitudes under so-called dipole moves (that are often crucially used in dealing with tensor model
amplitudes) nor on other forms of topological invariance, they can therefore be expected to be more relevant for non-topological models like those for 4d quantum gravity \cite{GFT-BC, GFT-Holst,
GFT-EPRL}. We return to the possible extensions of our results for GFT models of 4d gravity in the conclusions.

\section{3d case: the Boulatov model}

In this section, we start by recalling the construction of the vertex representation of the Boulatov model, introduced in \cite{GFTdiffeos} and further developed in \cite{vertex}. We then use this
representation to derive various scaling bounds, including bounds on
singularities as in \cite{vertex, vertex_short}, and also jacket bounds, which dictate the $1/N$ expansion of the model \cite{large-N}. Interestingly, our method allows to prove a stronger bound than
the usual one. The decay we will obtain is indeed governed by the maximal genus possessed by the jackets of a given graph, as opposed to the average of their genera.
\
The second improvement with respect to \cite{vertex} is due to the use of a heat-kernel regularization instead of sharp cut-offs, which improves the clarity of the presentation and simplifies the
derivation of our results.
 
 \subsection{Vertex representation}


The coloured Boulatov model \cite{cboulatov} can be defined by the action:
\bes \label{action_boulatov}
S[\vphi_\ell]&=& S_{kin} [\vphi] + S_{int}[\vphi] ,\\
S_{kin} [\vphi_\ell] &=& \frac{1}{2} \int [\extd g_i]^3 \sum_{\ell=1}^4   \, \vphi_\ell(g_1, g_2, g_3) \overline{\vphi_{\ell}}(g_1, g_2, g_3) ,\\
S_{int}[\vphi_\ell] &=& \lambda \int [\extd g_{i} ]^6 \, \vphi_1(g_1, g_2, g_3) \vphi_2(g_3, g_5, g_4) \vphi_3(g_5, g_2, g_6) \vphi_4(g_4, g_6, g_1)
\nn \\
&+& \overline{\lambda} \int [\extd g_{i} ]^6 \, \overline{\vphi_1}(g_1, g_2, g_3) \overline{\vphi_2}(g_3, g_5, g_4) \overline{\vphi_3}(g_5, g_2, g_6) \overline{\vphi_4}(g_4, g_6, g_1).
\ees
The four complex fields $\vphi_\ell$, labelled by a colour index $\ell \in \{ 1, \cdots , 4\}$, are defined on $\SU(2)^{3}$. They are interpreted as quantized triangles, and their $\SU(2)$ variables as
parallel transports of an $\su(2)$ connection along paths going from the centre of each triangle to the centre of one of its edges. This geometric interpretation is implemented via the following gauge
invariance under the diagonal action of $\SU(2)$:
\beq\label{gauge_v}
\forall h \in \SU(2),  \qquad \vphi(hg_1, hg_2, hg_3)  \, = \, \vphi(g_1, g_2, g_3) .
\eeq
The action itself has the following geometrical interpretation: the interaction terms encode the gluing of four triangles along their edges so as to form a tetrahedron, and the kinetic terms identify
triangles of the same colours (see Fig. \ref{propa_vertex_v} for pictorial representation as stranded diagrams). In the perturbative expansion of the models, these two elementary operations will
generate simplicial complexes labelling the Feynman amplitudes, which we will wish to
interpret as discrete spacetimes. 

\begin{figure}[h]
  \centering
  \subfloat[Kinetic term]{\label{propa_edge_v}\includegraphics[scale=0.5]{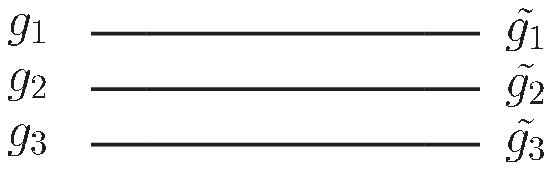}}                
  \subfloat[Interaction term]
{\label{interaction_edge_v}\includegraphics[scale=0.5]{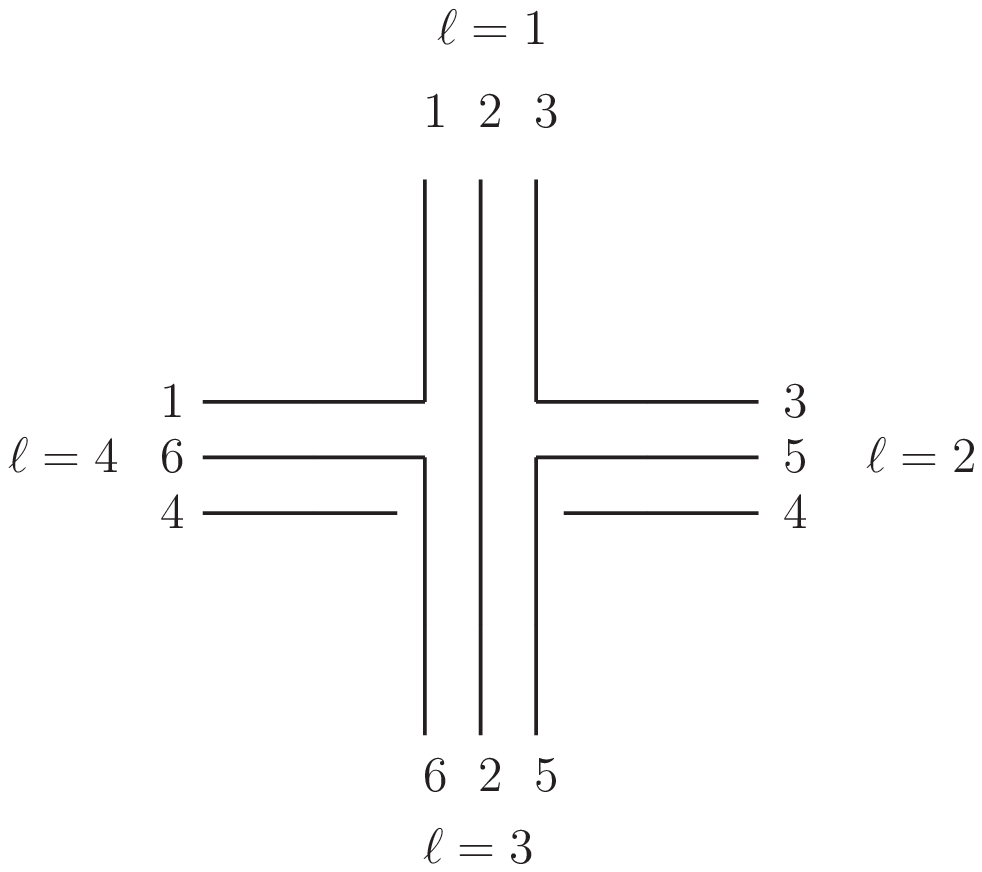}}
  \caption{Kinetic and interaction kernels represented as stranded graphs.}
  \label{propa_vertex_v}
\end{figure}

\

For clarity of the presentation, we will restrict in the following to functions on $\SO(3) \sim \SU(2) / \mathbb{Z}_{2}$, identified as functions $f$ on $\SU(2)$ such that $f(g) = f(-g)$ for all $g \in
\SU(2)$. This will allow us to use a group Fourier transform \cite{majidfreidel}, which is bijective in the case of $\SO(3)$, but not in the case of $\SU(2)$. Note however that we could work with the
full $\SU(2)$
group, at the price of using the generalised Fourier transform introduced in \cite{karim_fourier}. As this would introduce heavy notations without changing the conclusions of the present paper, we
refrain
from doing so and work within the simplified framework.

\
The group Fourier transform for $\SO(3)$ maps functions $f \in L^{2}(\SO(3))$ to functions $\hat{f}$ defined on the Lie algebra $\so(3) \sim \mathbb{R}^{3}$:
\beq
\widehat{f}(x) \equiv \int \extd g f(g) \e_{g}(x)\,, 
\eeq 
where $\e_g$ are plane-waves, i.e maps from $\so(3)$ to $\U(1)$. There is no canonical choice for these maps, since they rely on a choice of coordinates on the group. We will adopt the following,
frequently used, definition: 
\beq
\forall g \in \SU(2) \,, \qquad \e_g \maps \, x \mapsto e^{\rm{i} \, \Tr(x g)}
\eeq 
where $\Tr$ is defined in terms of the usual trace $\tr$ on $2 \times 2$ matrices as: $\Tr(x g) = {\rm{sign}}(\tr g) \tr(xg)$. The image of the Fourier transform is endowed with a non-commutative
structure, given by a $\star$-product, defined on plane-waves as:
\beq
\forall g_1\,, g_2 \in \SU(2)\,, \qquad \e_{g_1} \star \e_{g_2} \equiv \e_{g_{1} g_{2}}\,,
\eeq
and extended by linearity to the whole image of the Fourier transform. This $\star$-product reflects the group structure of $\SU(2)$, and has the important property of being dual, under group Fourier
transform, to the convolution of functions in $L^{2}(\SO(3))$, in the sense that:
\beq
\forall f_1\,, f_2 \in L^{2}(\SO(3))\,, \qquad  \left( \widehat{f}_{1} \star \widehat{f}_{2} \right)(x)\,=\, \int dg \left[ \int dh f_1(g h^{-1})\,f_2(h)\right] e_g(x)
\eeq

which also implies that integral of the point-wise product of two functions on the group manifold is dual to the integral over $\mathbb{R}^3$ of the $\star$-product of their Fourier transforms:
\beq
\int \extd g f_{1}(g^{-1}) f_{2}(g) = \int \extd x \left( \widehat{f}_{1} \star \widehat{f}_{2} \right)(x)\,.
\eeq
This duality allows to express the Boulatov model as a non-commutative field theory, with fields defined on three copies of $\so(3)$, and an action with the same combinatorial structure as in group
space, expect that point-wise products are replaced by $\star$-products (under integration). We refer to \cite{ad_nc, GFTdiffeos} for details. 

\

Exploiting symmetry considerations put forward in \cite{GFTdiffeos}, the Boulatov model has been reformulated in terms of group or algebra data associated to the vertices of the quantum triangles
\cite{vertex}, as
opposed to the edges in the usual formulation. One can map any left-invariant field $\phi \in L^{2}(\SO(3)^{3})$ to a function $\Upsilon[\phi]$ of three $\so(3)$ elements, defined as:
\beq
\Upsilon[\phi](v_{21}, v_{13}, v_{32}) \equiv \int \extd g_1 \extd g_2 \extd g_3 \phi(g_1, g_2, g_3) \e_{g_2^\inv g_1}(v_{21}) \e_{g_1^\inv g_3}(v_{13}) \e_{g_3^\inv g_2}(v_{32})\,.
\eeq
The variable $v_{ij}$ is interpreted as a position variable for the vertex shared by the edges $i$ and $j$, and generates a (quantum group) symmetry which, at the level of the amplitudes, induces the
well known discrete diffeomorphism symmetry of discrete $BF$ theory. 

The geometry behind this reformulation, and the resulting symmetry is the following. The intrinsic geometry of a triangle in $\mathbb{R}^3$ is fully characterized, up to rotations, by specifying three
edge vectors constrained to close (this corresponds to the usual formulation of the model in terms of Lie algebra elements associated to edges of the triangle) or the three positions of its vertices.
Of course, a simultaneous translation of all vertices simply corresponds to a change of the origin of the coordinates in $\mathbb{R}^3$ and should be immaterial, as far as the geometry is concerned.
This is indeed the case. In fact, the function $\Upsilon[\phi]$ is itself invariant under simultaneous translations of its variables. 
Because this translation is given by a quantum group acting on products of representations, in order to define it properly we need to specify an ordering of the arguments of the field. We therefore
interpret $\Upsilon[\phi]$ as a tensor product of representations: 
\beq
\Upsilon[\phi] = \int \extd g_1 \extd g_2 \extd g_3 \phi(g_1, g_2, g_3) \e_{g_2^\inv g_1} \ot \e_{g_1^\inv g_3} \ot \e_{g_3^\inv g_2}\,.
\eeq
Notice that specifying the ordering of the three arguments of the field is itself equivalent to specifying a colouring of the fields entering the GFT action, and thus of the stranded diagrams generated
by it \cite{uncoloring}.
Introducing the translation $T_{\ve}$, defined on tensor products of plane-waves by:
\beq
\cT_{\ve} \act \left( \e_{g_1}(x_1) \ot \cdots \ot \e_{g_N}(x_N ) \right) \equiv \bigstar_{\ve} \left( \e_{g_1}(x_1 + \ve) \ot \cdots \ot \e_{g_N}(x_N + \ve) \right) \equiv \e_{g_1 \cdots g_N}(\ve)\,
\left( \e_{g_1}(x_1) \ot \cdots \ot \e_{g_N}(x_N) \right)\,,
\eeq
we immediately verify that:
\beq
\cT_{\ve} \act \Upsilon[\phi](v_{21}, v_{13}, v_{32}) = \Upsilon[\phi](v_{21}, v_{13}, v_{32})\,.
\eeq
One can prove that the transformation map $\Upsilon$ is a bijection between the space of left-invariant functions in $L^{2}(\SO(3)^{3})$ and its image, that is the set of functions of three $\so(3)$
elements invariant under
$\cT_{\ve}$. We will not detail this point here, but the analogous transformation will be detailed for the 4d case in the next section. This property ensures that
the theory can fully be formulated in vertex variables. The net result can be expressed with group variables $G_{ij} \equiv g_{i}^{\inv}
g_j$, Fourier duals of the $v_{ij}$. In this space, the configuration fields are distributions $\widetilde{\psi}_\ell$ of the form:
\beq
\widetilde{\psi}_\ell (G_1, G_2, G_3) = \delta(G_1 G_2 G_3) \psi_\ell(G_1, G_2, G_3)\,, 
\eeq
where $\psi_\ell$ are regular functions. We refer to \cite{vertex} for more details of this reformulation. In terms of the newly defined fields, the action takes the form:
\bes
S[\psi] &=&  \frac{1}{2} \sum_{\ell} \int [\extd G^\ell_i] \psi_\ell(G^\ell_1, G^\ell_2, G^\ell_3) \delta(G^\ell_1 G^\ell_2 G^\ell_3) \overline{\psi}_\ell(G^\ell_1, G^\ell_2, G^\ell_3) \nn \\
&+& \lambda \int [\prod_{l \neq l'} \extd G^l_{l'}] \cV(G^l_{l'}) \psi_1^{234} \psi_2^{431} \psi_3^{412} \psi_4^{132} \label{action_vertex} \\
&+& \overline{\lambda} \int [\prod_{\ell \neq \ell'} \extd G^l_{l'}] \cV(G^l_{l'}) {\overline{\psi}}_1^{\lower0.1in \hbox{\footnotesize234}} {\overline{\psi}}_2^{\lower0.1in \hbox{\footnotesize431}}
{\overline{\psi}}_3^{\lower0.1in \hbox{\footnotesize412}} {\overline{\psi}}_4^{\lower0.1in \hbox{\footnotesize 132}}\nn ,
\ees
with the notation convention $\psi_{\ell}^{ijk} \equiv \psi_{\ell}(G^{\ell}_i, G^{\ell}_j, G^{\ell}_k)$, and a vertex function defined by:
\bes
\cV(G^l_{l'}) &=& \delta(G^{1}_{2} G^{1}_{3} G^{1}_{4})\delta(G^{2}_{4} G^{2}_{3} G^{2}_{1})\delta(G^{3}_{4} G^{3}_{1} G^{3}_{2})\delta(G^{4}_{1} G^{4}_{3} G^{4}_{2}) \nn \\
		&& \delta(G^{4}_{2} G^{3}_{2} G^{1}_{2})\delta(G^{4}_{3} G^{1}_{3} G^{2}_{3}) \delta(G^{1}_{4} G^{3}_{4} G^{2}_{4}). 
\ees 
The notations in the vertex function are as follows: upper indices label triangles or equivalently fields, and lower indices are associated to vertices, with the convention that a colour $\ell$ labels
the vertex opposite to the triangle of the same colour $\ell$ (see Fig. \ref{color_conventions_v}). 

Notice the appearance of a kernel of the kinetic term (i.e. the distributional part of the fields $\widetilde{\psi}_\ell$). This is the Fourier dual of the translation invariance of the fields we
described in the Lie algebra
representation. It is interpreted as a consistency constraint on the three group elements associated to a quantized triangle: their ordered product needs by construction to be trivial. 

The vertex
function consists (first line) in the same distributional factors of the fields $\widetilde{\psi}_{\ell}$, imposing the mentioned consistency conditions, while the second lines are flatness conditions
associated to paths around the vertices of the tetrahedra, hence guaranteeing flatness of the connection in the boundary of the tetrahedron (see Fig. \ref{tetrahedron_v}). Note that only three of
these flatness constraints appear in the interaction kernel, while a tetrahedron has four vertices. This is
obviously because only three of these flatness constraints are independent. We can therefore choose any triplet of these four constraints\footnote{The fourth one, associated to the vertex of colour 1
being
simply $\delta(G^{2}_{1} G^{3}_{1} G^{4}_{1})$.} to express the same distribution, implementing all four constraints. 

\begin{figure}[h]
  \centering
  \subfloat[Coloring of vertices]{\label{color_conventions_v}\includegraphics[scale=0.5]{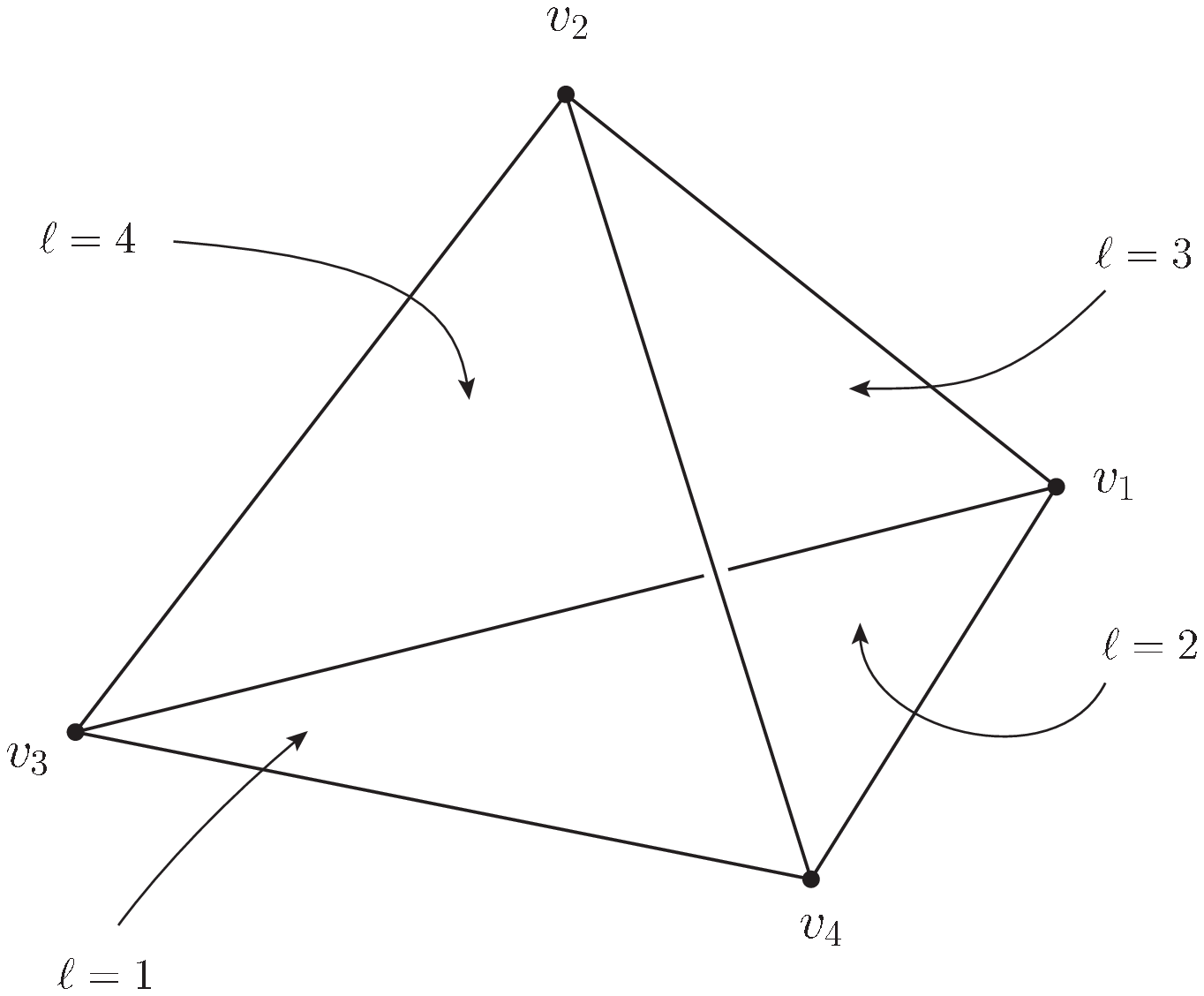}}                
  \subfloat[Holonomy variables]
{\label{variables_v}\includegraphics[scale=0.5]{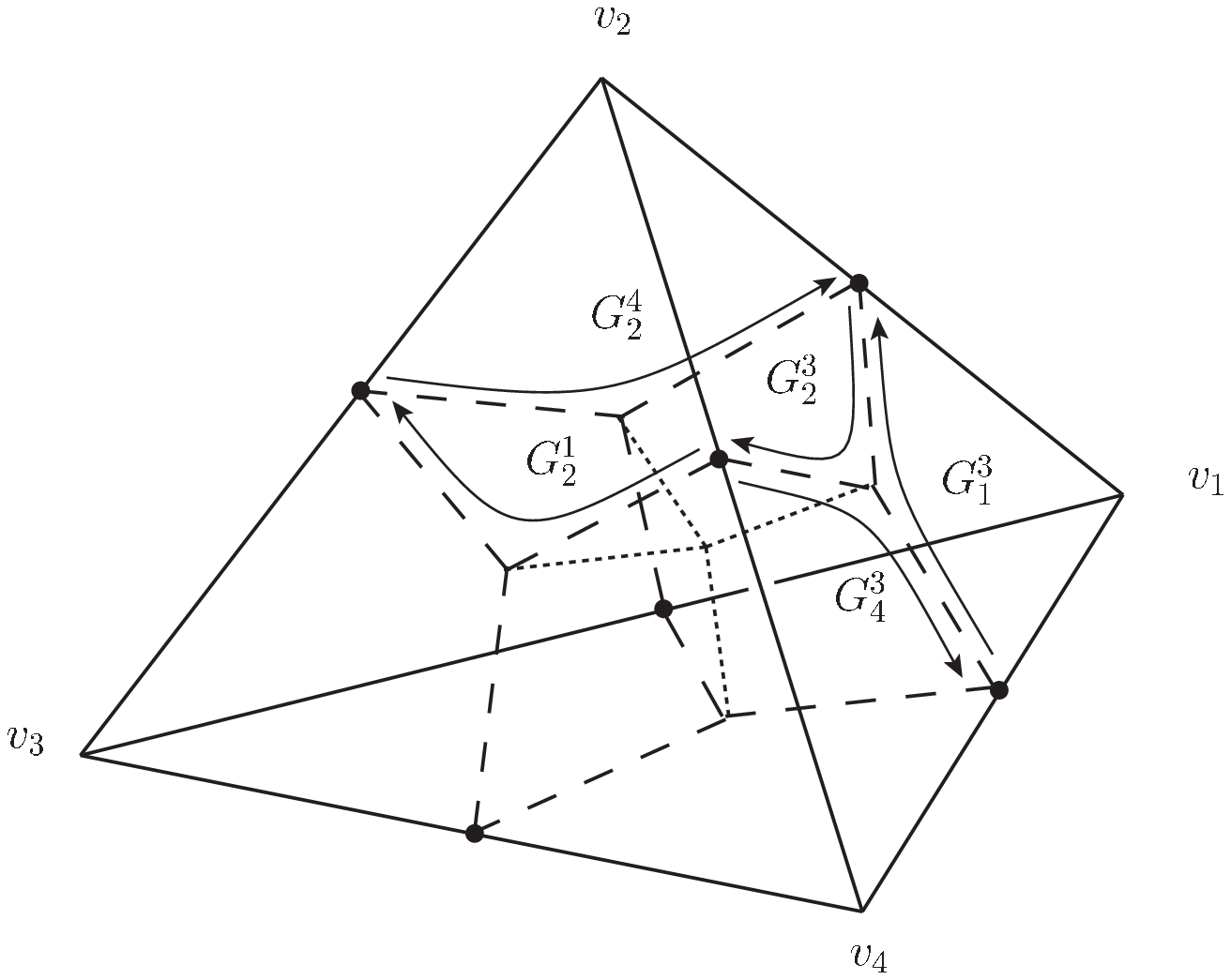}}
  \caption{Coloring conventions and group variables associated to one tetrahedral interaction.}
  \label{tetrahedron_v}
\end{figure}

\

We define the quantum theory via a partition function $\cZ$, defined by a path integral over the space of configuration fields $\widetilde{\psi}_\ell$, weighted by the exponential of (minus) the
action $S$. To make sense of this integral over a space of distributions, we resort to the use of a non Gaussian measure $\mu_{\cP}$ over the space of regular fields $\psi_\ell$ whose covariance is
given by the kernel of the kinetic term: 
\beq
\int \extd \mu_{\cP}(\overline{\psi}, \psi) \, \overline{\psi_\ell(g_1, g_2, g_3)} \psi_{\ell'}(g_1', g_2' , g_3') \equiv \delta(g_1 g_2 g_3) \, \delta_{\ell, \ell'}  \prod_{i = 1}^{6} \delta(g_i
g_i'^{\inv}) \,.
\eeq
This combines the ill-defined measure over the fields $\widetilde{\psi}_\ell$ with the Gaussian part of the integrand into a well-defined measure over the fields $\psi_\ell$. Only the exponential of
the interaction part of the action remains to be integrated, to give a suitable definition of $\cZ$:
\bes\label{partition_vertex}
\cZ &\equiv& \int \extd \mu_{\cP}(\overline{\psi}, \psi) \, \e^{- V[\overline{\psi}, \psi]} \\
V[\overline{\psi}, \psi] &\equiv&  \lambda \int [\extd G] \,\delta(G^{4}_{2} G^{3}_{2} G^{1}_{2})\delta(G^{4}_{3} G^{1}_{3} G^{2}_{3}) \delta(G^{1}_{4} G^{3}_{4} G^{2}_{4})\, \psi_1^{234} \psi_2^{431}
\psi_3^{412} \psi_4^{132}  \; \; + \; \; {\rm{c.c}}.
\ees
A couple of remarks are in order. First, only the flatness part of the kernel of the interaction has been used in the definition of $V$. This is because the distributional nature of the configuration
fields $\widetilde{\psi}_{\ell}$ has already been taken care of in the measure. Were we to integrate $S_{int}$ and not $V$, we would pick up products of equal distributions in the amplitudes, hence
further divergences. Second, at this formal level, which
of the four flatness constraints we use to define $V$ does not matter (see the resulting graphical representation in Fig. \ref{int_vertex_v}). In the regularized theory, to which we turn in the next
subsection, this is not the case, and we expect different choices to give amplitudes with the same scaling behaviour but differing by factors of order $1$ (in the cut-off). That is why, contrary to
the regularization chosen in \cite{vertex}, we use here a symmetric regularization in the colour indices.

\begin{figure}
\centering
\includegraphics[scale=0.5]{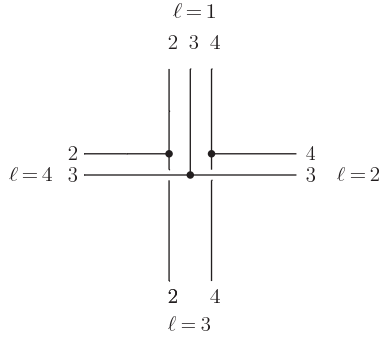}
\caption{Combinatorics of the interaction in vertex variables, in a form suitable for factorization of bubbles of colour $1$.}\label{int_vertex_v}
\end{figure}

\

Formally, we can expand the partition function as a power series in $\lambda \overline{\lambda}$:
\beq
\cZ = \sum_{\cG} \frac{\cA^{\cG}}{\rm{sym}(\cG)}\,.
\eeq
where each amplitude $\cA^{\cG}$ is a monomial in the expansion parameter, labelled by a Feynman graph $\cG$, and $\rm{sym}(\cG)$ is the order of the set of automorphisms of the same graph. 

Before moving on to the analysis of these amplitudes, let us summarize some properties of the diagrams emerging in the above perturbative expansion.
\begin{itemize}
\item The main point of the introduction of colours in GFT is that the Feynman graphs are (in this 3d case) $4$-coloured graphs, which are
well-studied objects in combinatorial topology \cite{FerriGagliardi, Vince_gene, Vince_2d}\footnote{See also \cite{francesco}, in which some of the results are translated in a GFT language.}. We will
call lines and nodes the elementary objects of these graphs, as opposed to edges and vertices which will be used to describe elements of their dual simplicial complexes. 

\item Each line of a $4$-coloured graph is associated to a field of a given colour, and two types of $4$-valent nodes correspond to the two interaction terms in $V$, hence two types of tetrahedra. This
point additionally makes the graphs bipartite, and orientable. 

\item These 4-coloured graphs, encoding gluings of tetrahedra along their boundary triangles are generically dual
to oriented simplicial complexes of a special kind, called pseudomanifolds \cite{r_lost}. They are a generalization of triangulated manifolds, that can be topologically singular at the vertices of
the triangulation, in the sense that the neighbourhood of a vertex does not need to be homeomorphic to a $3$-ball. 

\item Moreover, any lower dimensional simplex of the simplicial complex is dual to a
subgraph of the graph $\cG$. In particular, vertices of colour $\ell$\footnote{Recall that we label both triangles and vertices opposite to them in each tetrahedron with the same colour label $\ell$.}
are in one-to-one correspondence with connected components of the graph obtained from $\cG$ after deleting all the lines of colour
$\ell$. Each such component is called a $3$-bubble (or residue in the mathematical literature). 

\item As a $3$-coloured graph, this $3$-bubble is dual to a triangulated closed surface around the dual vertex.  A property
of special relevance to this paper is that a 3-bubble is a 2-sphere if and only if its dual vertex is regular. Therefore, a simplicial complex generated by our perturbative GFT expansion is a manifold
if and only if all the 3-bubbles of its dual 4-coloured graph are dual to 2-spheres. 

\item In the same way, one generally has a notion of $k$-bubbles, obtained by deleting $(4 -
k)$ colours, and dual to $k$-simplices. Since we are mainly interested in the triangulated surfaces dual to the $3$-bubbles, we will simply call them bubbles throughout the discussion of the 3d model.
\end{itemize}

\

In \cite{vertex}, a factorization of the integrand of the amplitudes in terms of bubble contributions was proven. For any colour $\ell$, one can indeed express the amplitude associated to a graph
$\cG$ of order $\cN$ as:
\beq\label{amplitudes_v}
\cA^{\cG} = (\lambda \overline{\lambda})^{\frac{\cN }{2}}  \int [\extd G]^{\frac{3 \cN}{2}} \left( \prod_{b \in \cB_{\ell}} \prod_{v \in V_{b}} \delta\left( \overrightarrow{\prod_{f \in
\triangle^{b}_{v}}} (G_{v}^{f})^{\epsilon^{f}_{v}}\right)  \right) \left( \prod_{f \in \cF_{\ell}} \delta\left( \overrightarrow{\prod_{v \in f}} G_{v}^{f}\right)\right)\, ,
\eeq
where $\cB_{\ell}$ is the set of bubbles of colour $\ell$, $V_b$ the set of vertices in a bubble $b$, $\triangle^{b}_{v}$ the set of triangles in a bubble $b$ that share one of its vertices $v$, and
finally
$\cF_{\ell}$ is the set of triangles of colour $\ell$ in the complex. 

The geometrical interpretation of this expression is very natural: the bubble terms in the first parenthesis encode flatness around
each of the vertices of the triangulated surface, and the terms in the second parenthesis encode the consistency conditions in triangles of colour $\ell$. 

One thus obtains an effective description
of the model in terms of triangulated 3-cells whose boundaries have colour $\ell$, the bubbles. 

In Fig. \ref{ex_bubble}, we illustrate this result by showing a portion of a bubble $b$, dual to a vertex $v_b$. The effective interaction term associated to $b$ imposes a trivial holonomy around the
vertex $v \in b$.

\begin{figure}[h]
\centering%
\includegraphics[scale=0.5]{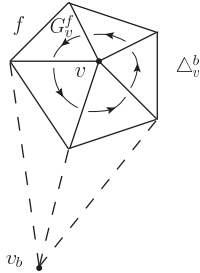}
\caption{Portion of a bubble $b$ dual to $v_b$: in $\triangle^{b}_{v}$, the holonomy around $v$ is imposed to be trivial.}
\label{ex_bubble}
\end{figure}

\

In the regulated theory, this factorization allows to derive nice bounds on the amplitudes: on the one hand, one can
understand the effect of singularities at vertices of colour $\ell$, as was already proven in \cite{vertex}; on the other hand, one can derive so-called jacket bounds, which give information about the
global topology of the 4-coloured graph, and of its dual simplicial complex (in fact, they were crucial to the discovery of the $1/N$ expansion of tensor and GFT models).

  \subsection{Regularization and general scaling bounds}

The regularization we will use in this paper will differ from the one used in the previous work \cite{vertex} in two respects: a) we will only regulate 
what corresponds to the dynamics of $\widetilde{\psi_\ell}$ fields, as we will
show that amplitudes are finite without any regularization of the distributional factors they contain, which are encoded in the measure $\mu_{\cP}$; b) instead of a sharp cut-off in the harmonic
expansion of the $\delta$-functions, we will use heat-kernels. Let us
elaborate a bit on this second point. The $\delta$-function on $\SU(2)$ expands as:
\beq
\delta(g) = \sum_{j \in \frac{\mathbb{N}}{2}} (2 j + 1) \chi_{j}(g) \,.
\eeq
On the other hand, the heat-kernel at time $t > 0$ on $\SU(2)$ is given by:
\beq
\delta_{t}(g) = \sum_{j \in \frac{\mathbb{N}}{2}} (2 j + 1) \e^{- t j(j+1)} \chi_{j}(g) \,,
\eeq 
where $\chi_j$ are the characters of $\SU(2)$. We see that $\delta_t$ converges to the $\delta$-distribution when $t \to 0$, and therefore provides us with a regularization of the $\delta$-function.
The three properties of this regularization we will use in this paper are the following. First, $\delta_t$ is a positive function, which will be very convenient for bounding amplitudes. Second, these
functions behave nicely with respect to the convolution product:
\beq
\int \extd g \delta_{t_1}(g_1 g^{\inv}) \delta_{t_2}(g g_2) = \delta_{t_1 + t_2} (g_1 g_2)\,.
\eeq 
Finally, when $t \to 0$, we have the following asymptotic formula:
\beq
\delta_{t} (g) \underset{t \to 0}{\sim} {\Lambda_{t}}^{3} \,  \e^{- \frac{|g|^2}{4 t}}\,,
\eeq
where $|g|$ is the Riemannian distance between $g$ and the identity, and $\Lambda_{t} \equiv (4 \pi t)^{- 1 / 2}$. In particular we will use the fact that $\delta_{t} (\one) \sim {\Lambda_{t}}^{3}$ at
small $t$.

\

We define the regularized theory by the following partition function:
\bes\label{partition_vertex_reg}
\cZ_{t} &\equiv& \int \extd \mu_{\cP}(\overline{\psi}, \psi) \, \e^{- V_{t}[\overline{\psi}, \psi]} \\
V_{t}[\overline{\psi}, \psi] &\equiv&  \lambda \int [\extd G] \,\frac{\delta_{t}(G^{2}_{1} G^{3}_{1} G^{4}_{1})\delta_{t}(G^{4}_{2} G^{3}_{2} G^{1}_{2})\delta_{t}(G^{4}_{3} G^{1}_{3} G^{2}_{3})
\delta_{t}(G^{1}_{4} G^{3}_{4} G^{2}_{4})}{\delta_{t}(\one)} \, \psi_1^{234} \psi_2^{431} \psi_3^{412} \psi_4^{132}  \;  +  \; {\rm{c.c}}.
\ees

\

Before moving on to the analysis of the regularized amplitudes, let us discuss briefly the role of the regularization chosen.
Note, in fact, that we have chosen a symmetric regularization in the colours, hence the re-introduction of the flatness constraint around the vertex of colour $1$, together with the
appropriate rescaling. The main advantage of such a symmetric regularization is that discussing bounds from different bubble factorizations will be made easier, as will be detailed shortly. 

On the one hand, this regularization is slightly different from the natural scheme one would use in edge variables (see for instance \cite{ValentinJoseph}, which also largely motivated the heat-kernel
regularization used in this paper). This is because it does not correspond to a regularization of the $\delta$-functions imposing flatness of the wedges dual to edges in the GFT interaction vertex
(thus encoding the piece-wise flatness of the simplicial complexes generated in the GFT expansion). On the other hand, regularizations that make bubble factorizations explicit after transformation to
edge variables are very natural also when these last ones are used. For example, we could use a non-symmetric regularization of the interaction 
\beq\label{nonsym_reg}
V_{t}^{1}[\overline{\psi}, \psi] \equiv  \lambda \int [\extd G] \, \delta_{t}(G^{4}_{2} G^{3}_{2} G^{1}_{2})\delta_{t}(G^{4}_{3} G^{1}_{3} G^{2}_{3})
\delta_{t}(G^{1}_{4} G^{3}_{4} G^{2}_{4}) \, \psi_1^{234} \psi_2^{431} \psi_3^{412} \psi_4^{132}  \;  +  \; {\rm{c.c}},
\eeq
which allows to write the amplitude in a factorized form similar to (\ref{amplitudes_v}), with $\ell = 1$. This corresponds exactly, in edge variables, to regularizing only the $\delta$-functions
associated to edges which do not contain the vertex of colour $1$:
\beq
S_{int,1}^{t}[\vphi] = \lambda \int [\extd g ]^9 \,\delta_{t}(g_4 g_4'^{\inv})
\delta_{t}(g_5 g_5'^{\inv}) \delta_{t}(g_6 g_6'^{\inv}) \, \vphi_1(g_1, g_2, g_3) \vphi_2(g_3, g_5, g_4) \vphi_3(g_5', g_2, g_6) \vphi_4(g_4', g_6', g_1)\;  +  \; {\rm{c.c}}.
\eeq
We see, then, that equation (\ref{partition_vertex_reg}) interpolates between the four possible regularizations of this type. 

\

The main fact that makes this form of the regularized amplitude (\ref{partition_vertex_reg}) convenient for our purposes is that, using the positivity of the heat-kernel, one can bound any of the four
flatness constraints by $\delta_{t}(\one)$, and obtain bounds on amplitudes which have the same expression as with a regularization of the non-symmetric type (\ref{nonsym_reg}), but maintaining the
symmetry among colours manifest up to this last step.

One can then show that, for any graph $\cG$ and any choice of colour $\ell_1$, the regularized amplitude $\cA^{\cG}_{t}$ admits the
following bound:
\beq\label{amplitudes_v_reg}
|\cA^{\cG}_{t}| \leq (\lambda \overline{\lambda})^{\frac{\cN }{2}}  \int [\extd G]^{\frac{3 \cN}{2}} \left( \prod_{b \in \cB_{\ell_1}} \prod_{v \in V_{b}} \delta_{\langle v , b \rangle t}\left(
\overrightarrow{\prod_{f \in \triangle^{b}_{v}}} (G_{v}^{f})^{\epsilon^{f}_{v}}\right)  \right) \left( \prod_{f \in \cF_{\ell_1}} \delta\left( \overrightarrow{\prod_{v \in f}}
G_{v}^{f}\right)\right)\, ,
\eeq
where $\langle v , b \rangle$ denotes the number of triangles in a bubble $b$ that contains the vertex $v$.  

\

Let us now repeat how this formula allows to derive interesting scaling bounds (see \cite{vertex} for more details). 

The general idea is that we would like to integrate out or otherwise remove the remaining propagator constraints, which are in a sense non-local quantities (from the point of view of the bubbles) that
prevent us from trivially integrating the amplitudes. 

A simple way of doing it is to pick up a second colour label $\ell_2 \neq \ell_1$, and bound all the flatness constraints associated to vertices of colour $\ell_2$  by their value at the identity. 

In the resulting bound, all
the propagator constraints will then have an independent variable of colour $\ell_2$, allowing us to trivially integrate them. 

We are finally left with two $\phi^{3}$ graphs, corresponding to the strands in
the two remaining colours $\ell_3$ and $\ell_4$. 

Now, each connected component of such a graph is dual to a vertex (of the same colour) of the simplicial complex. Therefore, integrating a tree in each of these components,
and bounding the final expression by its value at the identity, we arrive at:
\beq
|\cA^{\cG}_{t}| \leq (\lambda \overline{\lambda})^{\frac{\cN }{2}} 
\left( \prod_{b \in \cB_{\ell_1}} \prod_{v \in V_{b}(\ell_2)} \delta_{\langle v , b \rangle t}\left( \one \right)  \right)
\left( \prod_{v \in V_{\ell_3} \cup V_{\ell_4}} \delta_{ |v| t}\left( \one \right) \right)\,,
\eeq
where for any $v \in V_{\ell_3} \cup V_{\ell_4}$
\beq
|v| \equiv \sum_{b \in \cB_{\ell_1}\, , \, b \supset v} \langle v , b \rangle\,.
\eeq
is equal to the number of tetrahedra in the simplicial complex that contain $v$. Finally, remarking that:
\beq
\forall a > 0, \qquad \frac{\delta_{at}(\one)}{\delta_{t}(\one)} \underset{t \to 0}{\longrightarrow} a^{- 3/2}
\eeq 
we can rewrite this bounds using powers of heat-kernels with the same parameter, for instance $t$. This allows to show that for any constant $K$ such that
\beq
K > K_0 \equiv \left( \prod_{b \in \cB_{\ell_1}} \prod_{v \in V_{b}(\ell_2)} \langle v , b \rangle^{- 3/2}  \right)
\left( \prod_{v \in V_{\ell_3} \cup V_{\ell_4}}  |v|^{- 3/2} \right)
\eeq
we asymptotically have:
\bes\label{bound_v}
|\cA^{\cG}_{t}| &\leq& K \,(\lambda \overline{\lambda})^{\frac{\cN}{2}} \, [\delta_{t}(\one)]^{\gamma} \nn \\
\gamma &=& \sum_{b \in \cB_{\ell_1}} |V_{b}(\ell_2)| + |\cB_{\ell_3}| + |\cB_{\ell_4}| \,.
\ees

This bound and the following combinatorial fact for connected graphs:
\beq\label{lemma_comb}
\forall \ell \neq \ell'\,, \; |\cB_{\ell'}| + |\cB_{\ell}| - {\sum_{b \in \cB_{\ell}}} |V_{b}(\ell')| \leq 1\,,
\eeq
whose proof can be found in \cite{vertex}, will be the two main ingredients of the following section, where we will derive bubble and jacket bounds.

  \subsection{Bubble and jacket bounds}

In this section, we will give two kinds of bounds on the divergence degree $\gamma_{\cG}$ of a connected vacuum graph $\cG$. In this paper, we adopt the following definition for the divergence degree:
\beq
\gamma_{\cG} = \inf\{ \gamma \in \mathbb{R} \, / \,  \lim (\delta_{t}(\one)^{- \gamma} \cA^{\cG}_{t}) < + \infty \}\,.
\eeq 

\subsubsection{Bubble bounds}
Let us first derive (in this new regularization) the bubble bounds also derived in \cite{vertex}. Starting from (\ref{bound_v}), the derivation is straightforward. 

As a first step, we simply apply (\ref{lemma_comb}) with $(\ell , \ell') = (\ell_1 , \ell_3)$ and $(\ell , \ell') = (\ell_1 , \ell_4)$. This
gives:
\bes
\gamma_{\cG} &\leq& 2 - 2 |\cB_{\ell_{1}}| + {\sum_{b \in \cB_{\ell_1}}} \left(|V_{b}(\ell_2)| + |V_{b}(\ell_3)| + |V_{b}(\ell_4)|\right)\\
&=& 2 - 2 |\cB_{\ell_{1}}| + {\sum_{b \in \cB_{\ell_1}}} |V_{b}|\,.
\ees

We then use the definition of the genus of a bubble $b \in \cB_{\ell}$
\beq
2 - 2 g_b \equiv |V_{b}| - |E_{b}| + |F_{b}|\,,
\eeq
and the combinatorial relation between the number of edges and faces
\beq
2 |E(b)| = 3 |F(b)|
\eeq
to write:
\beq
{\sum_{b \in \cB_{\ell_1}}} |V_{b}| = 
{\sum_{b \in \cB_{\ell_1}}} \left(2 - 2 g_b + \frac{|F_{b}|}{2}\right) = 
2 |\cB_{\ell_1}| + \frac{\cN}{2} - 2 {\sum_{b \in \cB_{\ell_1}}} g_b \,.
\eeq
Therefore:
\beq
\gamma_{\cG} \leq 2 + \frac{\cN}{2} - 2 {\sum_{b \in \cB_{\ell_1}}} g_b \,.
\eeq
In order to obtain a bound that is uniform in the number of interaction vertices (dual tetrahedra), one can simply rescale the GFT coupling constant (this is the same rescaling used in the large-N
expansion) as
\beq\label{rescaling_v}
\lambda \to \frac{\lambda}{\sqrt{ \delta_{t}(\one) }}\,. 
\eeq

We then summarize our result in the following proposition:
\begin{proposition}
Using the rescaling (\ref{rescaling_v}), the divergence degree $\gamma_{\cG}$ of any connected vacuum graph $\cG$ of the coloured Boulatov model verifies, for any colour $\ell$:
\beq
\gamma_{\cG} \leq 2 - 2 {\sum_{b \in \cB_{\ell}}} g_b \,.
\eeq 
\end{proposition}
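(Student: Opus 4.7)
The plan is to track the chain of inequalities initiated in the construction of (\ref{bound_v}), collapse the combinatorial data into the bubble genera, and then exploit the coupling rescaling (\ref{rescaling_v}) to kill the $\cN$-dependence. Concretely, I would start from $\gamma_{\cG} \leq \sum_{b \in \cB_{\ell_1}} |V_b(\ell_2)| + |\cB_{\ell_3}| + |\cB_{\ell_4}|$, valid for any ordering of the four colours, and observe that the two free-colour terms $|\cB_{\ell_3}|$ and $|\cB_{\ell_4}|$ are not yet organized as sums over bubbles of colour $\ell_1$; the first task is therefore to push them inside such a sum using (\ref{lemma_comb}).

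Applying (\ref{lemma_comb}) twice, with $(\ell,\ell')=(\ell_1,\ell_3)$ and then $(\ell,\ell')=(\ell_1,\ell_4)$, each free-colour term is replaced by $1 - |\cB_{\ell_1}| + \sum_{b \in \cB_{\ell_1}} |V_b(\ell_3)|$ and analogously for $\ell_4$. Summing the three contributions and using that $|V_b(\ell_2)| + |V_b(\ell_3)| + |V_b(\ell_4)| = |V_b|$ --- a bubble of colour $\ell_1$ being, by definition, a $3$-coloured graph with no node of colour $\ell_1$ --- I reach the intermediate bound $\gamma_{\cG} \leq 2 - 2|\cB_{\ell_1}| + \sum_{b \in \cB_{\ell_1}} |V_b|$.

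The remaining step is purely geometric. Each bubble $b \in \cB_{\ell_1}$ is dual to a closed triangulated surface, hence $|V_b| - |E_b| + |F_b| = 2 - 2 g_b$, and the fact that every face is a triangle gives the incidence relation $2|E_b| = 3|F_b|$. Solving for $|V_b|$ and summing --- using that the total number of triangles of colour $\ell_1$ equals $\cN$, since every interaction vertex contributes exactly one such triangle to one bubble of colour $\ell_1$ --- yields $\sum_{b \in \cB_{\ell_1}} |V_b| = 2|\cB_{\ell_1}| + \cN/2 - 2\sum_{b} g_b$. Substituting back gives $\gamma_{\cG} \leq 2 + \cN/2 - 2\sum_{b \in \cB_{\ell_1}} g_b$, and the rescaling (\ref{rescaling_v}) shifts the divergence degree down by $\cN/2$, delivering the announced inequality. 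The main subtlety I anticipate is purely bookkeeping: keeping the colour labels straight in the repeated use of (\ref{lemma_comb}), and cleanly justifying the partition identity $|V_b| = \sum_{\ell' \neq \ell_1} |V_b(\ell')|$; everything else is a standard Euler-characteristic computation combined with the coloured-graph/triangulation dictionary recalled earlier.
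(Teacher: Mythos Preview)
Your proposal is correct and follows essentially the same route as the paper: start from (\ref{bound_v}), apply (\ref{lemma_comb}) twice with $(\ell,\ell')=(\ell_1,\ell_3)$ and $(\ell_1,\ell_4)$ to reach $\gamma_{\cG}\leq 2-2|\cB_{\ell_1}|+\sum_b |V_b|$, rewrite $|V_b|$ via the Euler formula and $2|E_b|=3|F_b|$, use $\sum_b |F_b|=\cN$, and finish with the rescaling. The only cosmetic point is your phrase ``triangles of colour $\ell_1$'': the faces $F_b$ of the bubble surface are dual to the nodes of the $3$-coloured bubble graph (i.e.\ to tetrahedra of $\cG$), so the identity $\sum_{b\in\cB_{\ell_1}}|F_b|=\cN$ holds because every node of $\cG$ sits in exactly one bubble of colour $\ell_1$ --- your count and justification are correct, just the label is slightly off.
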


These bounds have moreover been shown \cite{vertex}  to be {\it optimal}, in the following sense \cite{vertex}: for any positive integers $(g_1 , \cdots ,g_n)$, there exists a vacuum connected graph
$\cG$ such that its
bubbles of colour $\ell$ have genera $(g_1 , \cdots ,g_n)$, and $\gamma_{\cG} = 2 - 2 \sum_{i = 1}^{n} g_i$.

\

Finally, as an immediate corollary of the previous proposition, on can give a bound in terms of the number of pointlike singularities of a given colour:
\begin{corollary}
With the rescaling of the coupling constant (\ref{rescaling_v}), the divergence degree $\gamma_{\cG}$ of any connected vacuum graph $\cG$ verifies, for any colour $\ell$:
\beq
\gamma_{\cG} \leq 2 (1 - N^{s}_{\ell}) \,,
\eeq
where $N^{s}_{\ell}$ is the number of singular vertices of colour $\ell$. 
\end{corollary}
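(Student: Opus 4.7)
The plan is to derive this corollary as a direct consequence of the preceding proposition by using the topological characterization of singular vertices in terms of bubble genera, a characterization already recalled in the excerpt.

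First, I recall the key fact listed earlier: a $3$-bubble of colour $\ell$, viewed as a triangulated closed surface, is a $2$-sphere if and only if its dual vertex of colour $\ell$ is regular. Equivalently, a vertex of colour $\ell$ is singular if and only if the genus $g_b$ of the associated bubble $b \in \cB_{\ell}$ satisfies $g_b \geq 1$. Hence, denoting by $\cB^s_{\ell} \subset \cB_{\ell}$ the subset of bubbles dual to singular vertices of colour $\ell$, we have $|\cB^s_{\ell}| = N^s_{\ell}$ and
\beq
\sum_{b \in \cB_{\ell}} g_b \;=\; \sum_{b \in \cB^s_{\ell}} g_b \;\geq\; |\cB^s_{\ell}| \;=\; N^s_{\ell},
\eeq
since all terms $g_b$ are non-negative integers, and each term in the singular subset is at least $1$.

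Next, I apply the preceding proposition, which (after the rescaling (\ref{rescaling_v})) asserts $\gamma_{\cG} \leq 2 - 2 \sum_{b \in \cB_{\ell}} g_b$ for any choice of colour $\ell$. Combining with the inequality above immediately yields
\beq
\gamma_{\cG} \;\leq\; 2 - 2 \sum_{b \in \cB_{\ell}} g_b \;\leq\; 2 - 2 N^s_{\ell} \;=\; 2(1 - N^s_{\ell}),
\eeq
which is the claimed bound.

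There is essentially no obstacle here: the corollary is a one-line consequence of the proposition once the genus-versus-regularity dichotomy for $3$-bubbles is invoked. The only subtle point to state clearly is that the bound holds for any choice of colour $\ell$ independently, so one obtains in fact the stronger statement $\gamma_{\cG} \leq 2(1 - \max_{\ell} N^s_{\ell})$; however, since the corollary is phrased colour by colour, no further work is required.
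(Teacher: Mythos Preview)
Your proof is correct and is precisely the argument the paper has in mind: the corollary is stated as an immediate consequence of the preceding proposition, and your write-up simply spells out the one-line reasoning (singular vertex $\Leftrightarrow$ $g_b \geq 1$, hence $\sum_{b \in \cB_\ell} g_b \geq N^s_\ell$) that the paper leaves implicit.
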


To summarize once more: in the perturbative expansion of the Boulatov GFT model, singular simplicial complexes are generically suppressed, uniformly in the number of tetrahedra, with respect to
to the leading order, which is populated by regular manifolds only. In particular, they have all convergent amplitudes.

\subsubsection{Jacket bounds}

We now focus on the jacket bounds, which are also crucial to the $1/N$ expansion. We show not only that they can also be deduced from our framework, in a rather straightforward way, but also that the
vertex reformulation of the model allows to derive a stronger bound than the known one. 

\

Jackets are two dimensional closed and orientable surfaces
embedded in a simplicial complex \cite{jimmy}. In the simplicial complex dual to a 4-coloured graph $\cG$, we have three different jackets, each labelled by a pair $(\sigma , \sigma^{\inv})$ of cyclic
permutations of the colour set \cite{FerriGagliardi, Vince_gene}. We can identify them as follows. We assign a colour $(\ell \ell')$ to any edge between two vertices of colours $\ell$ and $\ell'$. Then,
to each
tetrahedron, one associates a rectangle, whose edges are so constructed: for any colour $\ell$, the middle point of the edge of the tetrahedron of colour $(\sigma(\ell) \sigma(\ell + 1))$ is
joined to the middle point of the edge of colour $(\sigma(\ell +1) \sigma(\ell + 2))$\footnote{The additions of colours are of course understood modulo $4$.}. An example is given in Fig. \ref{jacket_v}.
Gluings of tetrahedra induce gluings of
these elementary rectangles, providing a quadrangulated surface: this is a jacket. The construction just outlined also makes clear why there are three possible jackets that can be embedded in the
simplicial complex. Also, one can show that these three jackets correspond to three possible reductions of the Boulatov model to a matrix model by reduction with respect to the diagonal gauge
invariance (closure constraint) at the level of the action \cite{jimmy}.

\begin{figure}[h]
\centering
\includegraphics[scale=0.5]{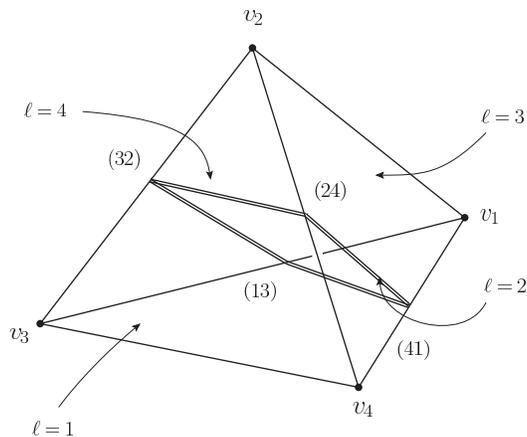}
\caption{Elementary building block of the jacket associated to the cycle $\sigma = (1\,4\,2\,3)$ (double lines).}
\label{jacket_v}
\end{figure}

\
Before moving on to the computation of the bound, we need to express the genus of a jacket $J = (\sigma , \sigma^{\inv})$ in terms of combinatorial quantities of $\cG$. This is the purpose of the
following lemma:
\begin{lemma}
The jacket of colour $J = ( \sigma , \sigma^{\inv} )$ of a coloured graph $\cG$ has genus:    
\beq\label{g_jacket}
g_J = 1 + \frac{1}{2} \left( |T| - \sum_{\ell} |E(\sigma(\ell) \sigma(\ell + 1))| \right) \,,
\eeq
with $T$ and $E(i)$ respectively the sets of tetrahedra and edges of colour $i$ of the simplicial complex dual to $\cG$. 
\end{lemma}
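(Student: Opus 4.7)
The plan is to compute the Euler characteristic of $J$ directly from its cellular structure as a quadrangulated surface, and then invert the relation $\chi(J) = 2 - 2 g_J$, which is valid since $J$ is closed and orientable by construction.

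First I would fix the three elementary counts. The faces of $J$ are precisely the rectangles built out of tetrahedra, one per tetrahedron, so $|F_J| = |T|$. For the vertices, I would observe that by the construction of the rectangle associated to a tetrahedron, each of its four corners sits at the midpoint of an edge of the tetrahedron of one of the four colours $(\sigma(\ell)\sigma(\ell+1))$, for $\ell = 1,\dots,4$ (indices mod~$4$). When two tetrahedra are glued along a triangle of the dual complex, the midpoints of their common edges are identified, and hence each edge of the complex whose colour appears as one of the $(\sigma(\ell)\sigma(\ell+1))$ contributes exactly one vertex of $J$. The remaining two edge colours, of the form $(\sigma(\ell)\sigma(\ell+2))$, do not appear in the cycle $\sigma$ and thus do not produce jacket vertices. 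This gives
\beq
|V_J| \,=\, \sum_{\ell} |E(\sigma(\ell)\sigma(\ell+1))| \, .
\eeq
For the edges of $J$, each rectangle has four sides, and by construction of the cycle $\sigma$ each such side is shared with exactly one side of a rectangle in an adjacent tetrahedron (the one glued across the relevant triangle). Hence $|E_J| = 4|T|/2 = 2|T|$.

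Assembling these counts,
\beq
\chi(J) \,=\, |V_J| - |E_J| + |F_J| \,=\, \sum_\ell |E(\sigma(\ell)\sigma(\ell+1))| \,-\, 2|T| \,+\, |T| \, ,
\eeq
and equating this to $2-2g_J$ immediately yields the formula \eqref{g_jacket}.

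The only genuine subtlety — and thus the step I would expect to need the most care — is the identification $|V_J| = \sum_\ell |E(\sigma(\ell)\sigma(\ell+1))|$. One must carefully check that gluings of tetrahedra along their common triangles do implement the expected identifications of rectangle corners: the corner of the rectangle in one tetrahedron sitting at the midpoint of an edge of colour $(ij)$ is glued to the corresponding corner in every other tetrahedron sharing that same edge, and that this holds bijectively between edges of the four admissible colours and vertices of $J$. The bipartite/orientable structure of the 4-coloured graph ensures this works uniformly; after that is verified, the edge count follows from the same gluing rule and the face count is immediate.
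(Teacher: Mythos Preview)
Your proof is correct and follows essentially the same route as the paper: compute the Euler characteristic of the quadrangulated jacket from the counts $|F_J|=|T|$, $|V_J|=\sum_\ell |E(\sigma(\ell)\sigma(\ell+1))|$, $|E_J|=2|T|$, and invert $\chi_J=2-2g_J$. The only cosmetic difference is that the paper obtains $|E_J|$ by identifying jacket edges with triangles of the complex and using $|t|=2|T|$, whereas you count rectangle sides directly as $4|T|/2$; the two are equivalent.
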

\begin{proof}
The jacket $J$ is an orientable surface, hence its genus is related to its Euler characteristic by:
\beq
2 - 2 g_J = \chi_J = |\cF_J| - |\cE_J| + |\cV_J| \,,
\eeq 
where $\cF_J$,$\cE_J$ and $\cV_J$ are the sets of faces, edges and vertices of $J$. But, by construction:
\beq
|\cF_J| = |T| \;, \qquad |\cE_J| = |t| \;, \qquad |\cV_J| = \sum_{\ell} |E(\sigma(\ell) \sigma(\ell + 1))| \;,
\eeq
where $t$ is the set of triangles in the simplicial complex. Since each triangle is shared by two tetrahedra, we also have $|t| = 2 |T|$, and the result follows.
\end{proof}

\

Let us now consider a connected vacuum graph $\cG$, and one of its jackets $J = (\sigma , \sigma^{\inv})$. We can use the previous lemma to write $g_J$ data suited to bubble factorizations.
Indeed, for any distinct colours $\ell$ and $\ell'$, one immediately has:
\beq
|E(\ell \ell')| = \sum_{b \in \cB_{\ell}} |V_{b}(\ell')|\,, 
\eeq 
since to any bubble $b$ dual to the vertex $v_\ell$, and vertex $v_{\ell'} \in V_{b}(\ell')$, one uniquely associates the edge $(v_{\ell} v_{\ell'}) \in E(\ell \ell')$.
Therefore:
\beq
g_J = 1 + \frac{\cN}{2} - \frac{1}{2} \sum_{\ell} \sum_{b \in \cB_{\sigma(\ell)}} |V_{b}(\sigma(\ell + 1))|  \,.
\eeq

We can now try to make $g_J$ appear in the bounds we computed so far. Applying (\ref{lemma_comb}) to (\ref{bound_v}), with $(\ell , \ell') = (\ell_3 , \ell_3)$, we obtain:
\beq\label{ineq_3d}
\gamma_\cG \leq 1 + {\sum_{b \in \cB_{\ell_1}}} |V_{b}(\ell_2)| + {\sum_{b \in \cB_{\ell_3}}} |V_{b}(\ell_4)|\,.
\eeq
Averaging this expression and
\beq
\gamma_\cG \leq 1  + {\sum_{b \in \cB_{\ell_2}}} |V_{b}(\ell_3)| + {\sum_{b \in \cB_{\ell_4}}} |V_{b}(\ell_1)|\,,
\eeq
we obtain:
\beq
\gamma_\cG \leq 2 + \frac{\cN}{2} - g_J\,, 
\eeq
with $\sigma = (\ell_1 \, \ell_2 \, \ell_3 \, \ell_4)$. As for the bubble bound, a jacket bound that is uniform in the number of GFT interaction vertices (tetrahedra of the simplicial complex) is
obtained by a simple rescaling of the coupling constant, and by the same power of the cut-off used in the case of bubble bound. We
summarize our result in the following proposition: 
\begin{proposition}
With the rescaling of the coupling constant (\ref{rescaling_v}), the convergence degree $\gamma_{\cG}$ of any connected vacuum graph $\cG$ verifies, for {\it any} of its jackets $J$:
\beq
\gamma_{\cG} \leq 2 - g_J \,.
\eeq 

In particular, the following bound holds:

\beq
\gamma_{\cG} \leq 2 - \sup_{J} g_J \,.
\eeq 
\end{proposition}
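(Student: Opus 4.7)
My plan is to combine the general scaling bound (\ref{bound_v}), the bubble-counting inequality (\ref{lemma_comb}), and the combinatorial formula for $g_J$ derived just above the proposition, and then symmetrize over two compatible color orderings so that the right-hand side assembles exactly into the jacket-genus expression.

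First I would fix a jacket $J=(\sigma,\sigma^{\inv})$ with $\sigma=(\ell_1\,\ell_2\,\ell_3\,\ell_4)$ and apply (\ref{bound_v}) with this color ordering. Applying (\ref{lemma_comb}) with $(\ell,\ell')=(\ell_3,\ell_4)$ to rewrite $|\cB_{\ell_3}|+|\cB_{\ell_4}|$ then yields
\[
\gamma_\cG \leq 1 + \sum_{b\in\cB_{\ell_1}}|V_b(\ell_2)| + \sum_{b\in\cB_{\ell_3}}|V_b(\ell_4)|,
\]
which is precisely (\ref{ineq_3d}). I would then rerun the same two steps with the cyclic relabelling $(\ell_2,\ell_3,\ell_4,\ell_1)$, producing the companion bound in which the bubble-sums run along the consecutive pairs $(\ell_2,\ell_3)$ and $(\ell_4,\ell_1)$ of $\sigma$.

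Averaging these two inequalities produces on the right-hand side $1+\tfrac{1}{2}\sum_\ell\sum_{b\in\cB_{\sigma(\ell)}}|V_b(\sigma(\ell+1))|$. By the formula for $g_J$ displayed above the proposition, this expression equals $2+\cN/2-g_J$, so $\gamma_\cG\leq 2+\cN/2-g_J$. The coupling-constant rescaling (\ref{rescaling_v}) then absorbs the $\cN/2$ into the redefined amplitude, leaving the uniform bound $\gamma_\cG\leq 2-g_J$. The second (supremum) inequality is immediate, since the first holds separately for each of the three jackets.

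The step to be most careful about is the pairing of color choices in the two applications of (\ref{bound_v}). One needs the two resulting right-hand sides, after the use of (\ref{lemma_comb}), to feature exactly the four sums $\sum_{b\in\cB_{\sigma(\ell)}}|V_b(\sigma(\ell+1))|$ for $\ell=1,\dots,4$, and no others; this is precisely what the cyclic shift by two along $\sigma$ guarantees, since the two orderings $(\ell_1,\ell_2,\ell_3,\ell_4)$ and $(\ell_2,\ell_3,\ell_4,\ell_1)$ together cover all four consecutive pairs of $\sigma$ once. The strengthening compared to the standard jacket bound, governed by the \emph{maximal} rather than \emph{average} genus, is a direct consequence of the fact that a single jacket is enough to drive the decay, once the bubble factorization of (\ref{bound_v}) and the freedom to choose the first color $\ell_1$ are exploited.
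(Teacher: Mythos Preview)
Your proposal is correct and follows essentially the same route as the paper: derive (\ref{ineq_3d}) from (\ref{bound_v}) via (\ref{lemma_comb}) with $(\ell,\ell')=(\ell_3,\ell_4)$, obtain the companion inequality by cycling the colour labels, average, and compare with the jacket-genus formula before absorbing $\cN/2$ via the rescaling (\ref{rescaling_v}). One small terminological slip: the relabelling $(\ell_1,\ell_2,\ell_3,\ell_4)\to(\ell_2,\ell_3,\ell_4,\ell_1)$ you write down is a cyclic shift by \emph{one}, not two, but the computation you describe is the right one and yields exactly the four consecutive pairs of $\sigma$.
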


We note that, as anticipated, this bound is stronger than the usual jacket bound, proven in \cite{large-N}:
\beq
\gamma_{\cG} \leq 2 - \frac{1}{3} \sum_{J} g_J \,.
\eeq 

\

We know already that, if a 3d complex has a jacket with genus zero, the complex is of spherical topology (trivial fundamental group) \cite{tensorReview}. And we know that the above jacket bound is
useful to unravel the topological properties of the various terms appearing in the expansion in powers of the cut-off. 

We conclude by noting also that the same bound could give further insights into the topology of the dominant terms of the same expansion, due to the following fact.

Just as we know that $g = \inf_{\cG , J} g_J$ is a topological invariant, called regular genus, similarly, $\tilde{g} = \inf_{\cG} \sup_{J \in \cG} g_J$ is also well defined, and a topological
invariant
by definition (the $\inf$ is taken over the equivalence class of graphs representing a given topology). If $\tilde{g}$ and $g$ are not identical, then our results allow to derive a non-trivial
topological bound in terms of $\tilde{g}$.

\section{4d case: the Ooguri model}

In this section, we extend the previous results to four dimensions, namely to the (coloured) Ooguri model. Like the Boulatov model, it is a GFT quantization of topological BF theory, and its Feynman
amplitudes can be written either as the simplicial path integral of such theory or as its state sum quantization. Because 4d gravity models are constructed by constraining the data appearing in such
model, either at the level of the GFT action or directly at the level of its Feynman amplitudes, we see the results presented in this section as a first step towards performing a similar analysis in
4d gravity models.

\
We will first show that the coloured Ooguri GFT model, usually formulated in terms of group-theoretic data associated to triangles in $4$-dimensional simplicial complexes \cite{cboulatov}, can be
equivalently written with data associated to edges in the same simplicial complexes. 

Similarly to the Boulatov model, such a formulation will allow to factorize the amplitudes in terms of bubbles (here the $4$-bubbles), and to use new computation tools to derive bounds on the
regularized amplitudes. 

The two main results of this construction will be again: a) a bound on topologically singular vertices, resulting in a clear separation between leading order graphs corresponding to regular manifolds
and sub-dominant graphs associated to non-manifold configurations; b) a new proof and an improvement of the so-called jacket bound \cite{large-N}, which moreover does not rely on topological moves
(dipole contractions). 

\subsection{Edge representation}


\subsubsection{Action and partition function}
The coloured Ooguri model is a field theory of five complex scalar fields $\{\vphi_\ell\ , \ell=1, \ldots ,5\}$, each of them defined over four copies of $\SO(4)$, which respect the following 
gauge invariance condition:
\beq \label{gauge}
\forall h \in \SO(4),  \qquad \vphi_\ell(hg_1, hg_2, hg_3, hg_4)  \, = \, \vphi_\ell(g_1, g_2, g_3, g_4) .
\eeq
Like in three dimensions, they are interpreted as quantized building blocks of spatial geometry, here tetrahedra. The $\SO(4)$ variables are interpreted as parallel transports of an 
$\SO(4)$ connection from the centre of the tetrahedra to the centres of their boundary triangles. The action encodes the gluing of five tetrahedra to form 
a four-simplex via the interaction term, while the kinetic parts mimic the identification of two tetrahedra along their boundary triangles: 
\bes \label{action}
S[\vphi]&=& S_{kin} [\vphi] + S_{int}[\vphi] ,\\
S_{kin} [\vphi] &=& \frac{1}{2} \int [\extd g_i]^4 \sum_{\ell=1}^5   \, \vphi_\ell(g_1, g_2, g_3, g_4) \overline{\vphi_{\ell}}(g_1, g_2, g_3, g_4) ,\\
S_{int}[\vphi] &=& \lambda \int [\extd g_{i} ]^{10} \, \vphi_1(g_1, g_2, g_3, g_4) \vphi_2(g_4, g_5, g_6, g_7) 
\vphi_3(g_7, g_3, g_8, g_9) \vphi_4(g_9, g_6, g_2, g_{10}) \vphi_5(g_{10}, g_8, g_5, g_1) \nn \\
& & + \; \; {\rm{c.c}}. 
\ees
A graphical representation of the two terms of this action is given in Fig. \ref{vertex_propa_e}.


\begin{figure}[h]
  \centering
 \includegraphics[scale=0.5]{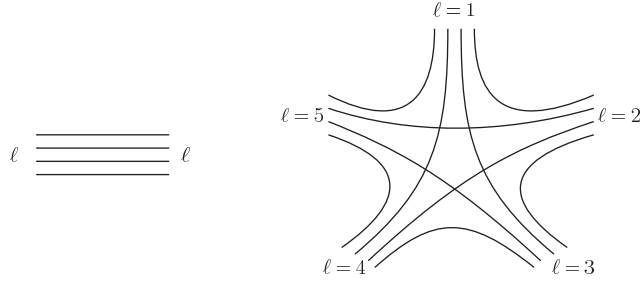}
  \caption{Combinatorics of the kinetic (left) and interaction (right) kernels in the usual (triangle) formulation.}
  \label{vertex_propa_e}
\end{figure}

There is also a metric representation \cite{ad_nc} of the same model in terms of Lie algebra variables, obtained via a group Fourier transform, this time for the group $\SO(4) \simeq ( \SU(2) \times
\SU(2) ) /
\mathbb{Z}_2$. As in the 3d case, we will however restrict ourselves to $\SO(3) \times \SO(3)$, for which a simple invertible group Fourier transform is available, and the generalised framework
\cite{karim_fourier} is not needed. As before, functions on $\SO(3)$ will be identified with functions $f$ on $\SU(2)$ such that $f(g) = f(-g)$. We adopt the notation $g = (g^{+}, g^{-}) \in \SU(2)
\times \SU(2)$,
and similarly for Lie algebra elements, and introduce the plane-waves
\beq
\forall g \in \SU(2) \times \SU(2) \,, \qquad \E_g \maps \, (\su(2) \times \su(2)) \ni x \mapsto \e_{g^{+}}(x^{+}) \, \e_{g^{-}}(x^{-})\,.
\eeq 
The group Fourier transform is given by
\beq
\widehat{f}(x) \equiv \int \extd g f(g) \E_{g}(x)\,, 
\eeq 
and sends the convolution product on $L^{2}(\SO(3) \times \SO(3))$ to a $\star$-product, defined on plane-waves as:
\beq
\forall g_1\,, g_2 \in \SU(2) \times \SU(2) \,, \qquad \E_{g_1} \star \E_{g_2} \equiv \E_{g_{1} g_{2}}\,.
\eeq
Besides these definitions, we will explicitly use the fact that the $\delta$-distributions on $\SO(3) \times \SO(3)$ can be decomposed in plane-waves as:
\beq
\delta(g) = \int \extd x \E_{g}(x)\,.
\eeq 
Dually, a non-commutative $\delta$-distribution on $\so(3) \times \so(3)$ can be defined by:
\beq
\delta_{\star}(x) = \int \extd h \E_{h}(x)\,,
\eeq
which verifies, for any algebra function $f$:
\beq
\left( \delta_{\star} \star f \right) (x) = f(0) \delta_{\star}(0)\,.
\eeq

\

In analogy with what has been done in three dimensions, we would like to use the gauge invariance condition to re-express the action in terms of fields whose arguments are associated to simplices of
one dimension less: in this case, from triangles to edges. 

The main difference with the 3d situation however, is that the numbers of edges and triangles in a tetrahedron do not match: a tetrahedron consists in four
triangles,  but six edges. 

At the level of a field $\vphi_\ell(g_1, g_2, g_3, g_4)$, whose variables are associated to 4 different triangles, this translates into the fact that one can construct six independent
edge variables $G_{ij}$ from pairs of triangle variables $g_i$. For example: 
\bes
G_{41} = g_4^{\inv} g_1 \;, &\qquad G_{42} = g_4^{\inv} g_2 \;, &\qquad G_{43} = g_4^{\inv} g_3 \;, \nn \\
G_{12} = g_1^{\inv} g_2 \;, &\qquad G_{23} = g_2^{\inv} g_3 \;, &\qquad G_{31} = g_3^{\inv} g_1 \;.
\ees
This means that, in order to match the number of degrees of freedom in the two representations, we will have to use more constraints than in the Boulatov model. These constraints will reflect
geometrical 
conditions on the holonomies in a tetrahedron. Remarking that the variable $G_{ij}$ represents the holonomy from the centre of the triangle $i$ to the centre of the triangle $j$, we see that
for any distinct indices $i$, $j$ and $k$, we have:
\beq
G_{ij} G_{jk} G_{kl} = \one \,,
\eeq 
where from now on we use the notation: $G_{ij} \equiv G_{ji}^{\inv}$. There are a priori four such equations to impose (one for any triplet $\{ i, j, k\}$, i.e. one for any vertex of the tetrahedron).
However, only three of them are independent, since for example:
\beq
\left\{
    \begin{array}{lll}
        G_{12} G_{24} G_{41} &=& \one \\
        G_{23} G_{34} G_{42} &=& \one \; \Longrightarrow \; G_{12} G_{23} G_{31} = \one \\
	G_{31} G_{14} G_{43} &=& \one
    \end{array}
\right. 
\eeq

\
This suggests to introduce new fields $\psi_\ell : \SO(4)^{\times 6} \rightarrow \mathbb{C}$, implicitly defined by\footnote{From now on, we will denote by $\delta$ the $\delta$-distribution on
$\SO(3) \times \SO(3)$, and by $\delta^{\SO(3)}$ (resp. $\delta^{\SU(2)}$) its counterpart on $\SO(3)$ (resp. $\SU(2)$).}:
\bes\label{naive}
\vphi_\ell(g_1, g_2, g_3, g_4) &=& \vphi_\ell(G_{41}, G_{42}, G_{43}, \one) \nn \\
 &\equiv& \int \extd G_{12} \extd G_{23} \extd G_{31} 
\delta(G_{12} G_{24} G_{41}) \delta(G_{23} G_{34} G_{42}) \delta(G_{31} G_{14} G_{43}) \\
&\times& \psi_\ell(G_{41}, G_{42}, G_{43}, G_{12}, G_{23}, G_{31})\,. \nn
\ees
This idea can be made precise using the group Fourier transform previously introduced. For any function $\phi \in L^{2}((\SO(3) \times \SO(3))^{\times 4})$, we define a function $\Upsilon[\phi]$ of
six $\so(3)
\times \so(3)$ elements:
\bes
\Upsilon[\phi]( x_{41}, x_{42} , x_{43} , x_{12} , x_{23} , x_{31})
\equiv \int [\extd g_i ] \phi(g_1 , g_2 , g_3 , g_4) && \E_{{g_4}^{\inv} g_1}(x_{41}) \E_{{g_4}^{\inv} g_2}(x_{42}) \E_{{g_4}^{\inv} g_3}(x_{43}) \nn \\ 
&\times& \E_{{g_1}^{\inv} g_2}(x_{12}) \E_{{g_2}^{\inv} g_3}(x_{23}) \E_{{g_3}^{\inv} g_1}(x_{31})\,.
\ees
As in the 3d case, such a function is invariant under a simultaneous (deformed) translation of all its arguments; in this case, however, no geometric interpretation of the variables appearing as field
arguments nor of such invariance can be given, due to the fact that we are not dealing with geometric tetrahedra, but simply with combinatorial simplices to which variables from the classical phase
space of discrete BF theory are associated. The variables do not describe a geometric tetrahedron, and the translation symmetry of each field has nothing to do with the  translation of the vertices
(or the edges) of the tetrahedra in some embedding into $\mathbb{R}^4$; in fact, it is generated by a Lie algebra element of $\so(4)$ and not by a vector in $\mathbb{R}^4$. We note also that the
origin of this symmetry of the GFT field itself, in both this new edge formulation and in the standard triangle formulation, was already pointed out in \cite{GFTdiffeos} and lies in the fact that the
translation/diffeomorphism symmetry of BF theory in 4d (respectively, 3d) (which motivates also our edge (vertex) formulation) is reducible: not all edge (vertex) translations are independent and there is a
trivial global symmetry of the theory under simultaneous translation of all the edges (vertices) of the simplicial complex. In the 4d case, there is an additional reducible component of the edge
translation symmetry, also pointed out in \cite{GFTdiffeos}, which is the one we find at work here: the simultaneous translation of all the (variables associated to the) edges of a tetrahedron that
share a vertex by a Lie algebra element associated to this vertex.

And as in the 3d case, a proper description of the invariance of the field under such deformed translations (in the Lie algebra) requires that we interpret the products of plane-waves as tensor
products, taken in
the order in which we wrote them\footnote{The same remark about the equivalence of specifying an ordering of arguments and colouring, that we made in the 3d model, applies here as well
\cite{uncoloring}.}. 

With this convention in mind, $\Upsilon[\phi]$ is invariant under the following symmetries (only three of them being independent):
\bes
\Upsilon[\phi] &\mapsto& \cT^{142}_{\ve} \act \Upsilon[\phi](x_{ij}) = \bigstar_{\ve} \Upsilon[\phi](x_{41} - \ve, x_{42} + \ve, x_{43} , x_{12} - \ve, x_{23} , x_{31} )\,, \\
\Upsilon[\phi] &\mapsto& \cT^{243}_{\ve} \act \Upsilon[\phi](x_{ij}) = \bigstar_{\ve}
\Upsilon[\phi](x_{41}, x_{42} - \ve, x_{43} + \ve , x_{12} , x_{23} - \ve, x_{31} )\,, \\
\Upsilon[\phi] &\mapsto& \cT^{143}_{\ve} \act \Upsilon[\phi](x_{ij}) = \bigstar_{\ve}
\Upsilon[\phi](x_{41} - \ve, x_{42}, x_{43} + \ve, x_{12}, x_{23} , x_{31} + \ve)\,, \\
\Upsilon[\phi] &\mapsto& \cT^{123}_{\ve} \act \Upsilon
[\phi](x_{ij}) = \bigstar_{\ve}
\Upsilon[\phi](x_{41}, x_{42}, x_{43} , x_{12} + \ve, x_{23} + \ve , x_{31} + \ve)\,.
\ees
These transformations correspond to a simultaneous translation of the Lie algebra variables associated to three edges sharing a vertex in a quantum tetrahedron by the same Lie algebra variables
associated to such common vertex. For instance, $\cT^{142}$ translates the three edges sharing the vertex
of colour $3$.

\
Let us call this space of such invariant fields $\mathbb{T} \equiv {\rm{Im}}(\Upsilon)$, and $\mathbb{D}={\rm{Inv}}((\SO(3) \times \SO(3))^{\times 4})$ the space of fields in $L^{2}((\SO(3) \times
\SO(3))^{\times 4})$ that satisfy the
gauge invariance (\ref{gauge}). We now prove that the map from usual variables to the Lie algebra edge variables is one-to-one.
\begin{proposition}
$\Upsilon$ is a bijection between $\mathbb{D}$ and $\mathbb{T}$. Its inverse maps any $\widetilde{\phi} \in \mathbb{T}$ to: 
\bes
\Upsilon^{\inv}[\widetilde{\phi}](g_i)
\equiv \int [\extd x_{ij}]^{3} 
\left( E_{g_1^{\inv} g_4}(x_{41}) E_{g_2^{\inv} g_4}(x_{42}) E_{g_3^{\inv} g_4}(x_{43}) E_{g_2^{\inv} g_1}(x_{12}) E_{g_3^{\inv} g_2}(x_{23}) E_{g_1^{\inv} g_3}(x_{31})\right) \star
\widetilde{\phi}(x_{ij})\,,
\ees
where only three $x_{ij}$ are being integrated, the others being fixed to any value.
\end{proposition}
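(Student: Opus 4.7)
The strategy is to exhibit the inverse explicitly and verify the two compositions $\Upsilon^{\inv} \circ \Upsilon = \mathrm{Id}_{\mathbb{D}}$ and $\Upsilon \circ \Upsilon^{\inv} = \mathrm{Id}_{\mathbb{T}}$ by direct substitution, following the same pattern used for the vertex map in 3d.

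First I would confirm that $\Upsilon$ genuinely sends $\mathbb{D}$ into $\mathbb{T}$, i.e.\ that $\Upsilon[\phi]$ is invariant under each of the four deformed translations $\cT^{ijk}_{\ve}$. Translating the three variables associated with a given tetrahedron vertex pulls out, through the $\star$-product on plane-waves, a factor $\E_{G}(\ve)$ whose underlying group element $G$ telescopes to $\one$ by construction: for $\cT^{142}_{\ve}$ one gets $(g_4^{\inv}g_1)^{\inv} \cdot (g_4^{\inv}g_2) \cdot (g_1^{\inv}g_2)^{\inv} = \one$, and the three other translations work in an identical way. Only three of the four are independent, matching the reducibility observed in \cite{GFTdiffeos}.

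The core step is the verification of $\Upsilon^{\inv} \circ \Upsilon = \mathrm{Id}_{\mathbb{D}}$. Substituting the definition of $\Upsilon[\phi]$ in the proposed inverse and choosing to integrate, say, over $x_{41},x_{42},x_{43}$ (those incident to the reference triangle of colour $4$), each $\star$-product of paired plane-waves followed by integration produces a group $\delta$-function,
\begin{equation}
\int \extd x_{4i} \; \E_{g_i^{\inv}g_4}(x_{4i}) \star \E_{h_4^{\inv}h_i}(x_{4i}) = \delta\bigl(g_i^{\inv}g_4 h_4^{\inv}h_i\bigr), \qquad i=1,2,3,
\end{equation}
enforcing $h_i = h_4 g_4^{\inv} g_i$. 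Feeding these three constraints into the three non-integrated plane-wave pairs makes the underlying group elements cancel, $h_i^{\inv}h_j = g_i^{\inv}g_j$, so that $\E_{g_j^{\inv}g_i}(x_{ij}) \star \E_{g_i^{\inv}g_j}(x_{ij}) = \E_{\one}(x_{ij}) = 1$ for each remaining pair. What is left is $\int \extd h_4 \; \phi(h_4 g_4^{\inv}g_1, h_4 g_4^{\inv}g_2, h_4 g_4^{\inv}g_3, h_4)$, which by the gauge invariance (\ref{gauge}), applied with $h = h_4 g_4^{\inv}$, collapses to $\phi(g_1,g_2,g_3,g_4)$ (the normalized Haar measure on $h_4$ contributing unity).

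The main subtlety, and the point where membership in $\mathbb{T}$ is really used, is to show that the formula for $\Upsilon^{\inv}$ is independent of the chosen triplet of $x_{ij}$'s to be integrated (with the other three fixed to arbitrary values). Any two such choices are related by the action of one of the $\cT^{ijk}_{\ve}$ translations, and the $\cT$-invariance of $\widetilde{\phi} \in \mathbb{T}$ guarantees the equality of the resulting values; this is the one place where a naive computation could look ill-defined and needs the structure of $\mathbb{T}$ as an input. Once well-definedness of $\Upsilon^{\inv}$ is secured, the opposite composition $\Upsilon \circ \Upsilon^{\inv} = \mathrm{Id}_{\mathbb{T}}$ is automatic: injectivity of $\Upsilon$ follows from the calculation above, and surjectivity onto $\mathbb{T}$ holds by the very definition of $\mathbb{T} = \mathrm{Im}(\Upsilon)$. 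Beyond this point the argument is bookkeeping, the only care needed being the consistent ordering of the non-commutative $\star$-products with the tensor-product convention on plane-waves.
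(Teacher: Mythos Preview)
Your proof is correct and the first composition $\Upsilon^{\inv}\circ\Upsilon=\mathrm{Id}_{\mathbb{D}}$ is handled exactly as in the paper: integrate the three $x_{4i}$ to get three group $\delta$-functions, observe that the remaining plane-wave pairs collapse to $\E_{\one}=1$, and use gauge invariance to absorb the leftover Haar integration.

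The only genuine difference is in how you treat the second composition. The paper computes $\Upsilon\circ\Upsilon^{\inv}[\widetilde{\phi}]$ directly: the four $g_i$-integrals produce noncommutative $\delta_{\star}$'s, and after solving three of them one finds literally $\bigl(\cT^{142}_{x_{12}-x_{12}'}\,\cT^{243}_{x_{23}-x_{23}'}\,\cT^{\cdot}_{\cdot}\bigr)\act\widetilde{\phi}$, which equals $\widetilde{\phi}$ by $\cT$-invariance. You instead short-circuit this computation by noting that $\mathbb{T}\equiv\mathrm{Im}(\Upsilon)$ makes $\Upsilon$ surjective by definition, so a left inverse is automatically a two-sided inverse; the $\cT$-invariance is then invoked only to secure well-definedness of the formula for $\Upsilon^{\inv}$ (independence of which three $x_{ij}$ are integrated and of the fixed values of the others). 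Both routes are valid and use the same ingredient ($\cT$-invariance of elements of $\mathbb{T}$) at slightly different places; your version is a bit more economical, while the paper's explicit calculation has the advantage of displaying concretely how the translation symmetries emerge from the group integrations.
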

\begin{proof}
Let us call $\widetilde{\Upsilon}$ the map defined by the previous formula, and show that $\widetilde{\Upsilon} \circ \Upsilon$ and $\Upsilon \circ \widetilde{\Upsilon}$ are the identity.

\
We first choose $\phi \in \mathbb{D}$, and check that $\widetilde{\Upsilon} \circ \Upsilon [\phi] = \phi$. Using the definitions, we immediately have:
\bes
\widetilde{\Upsilon} \circ \Upsilon [\phi](g_i) = \int [\extd g_{i}'] \phi(g_i') 
\int [\extd x_{ij}]^{3} E_{g_1^{\inv} g_4 g_4'^{\inv} g_1'}(x_{41}) E_{g_2^{\inv} g_4 g_4'^{\inv} g_2'}(x_{42}) E_{g_3^{\inv} g_4 g_4'^{\inv} g_3'}(x_{43}) \\
E_{g_2^{\inv} g_1 g_1'^{\inv} g_2'}(x_{12}) E_{g_3^{\inv} g_2 g_2'^{\inv} g_3'}(x_{23}) E_{g_1^{\inv} g_3 g_3'^{\inv} g_1'}(x_{31})
\ees
The integration over the $x_{ij}$ give three $\delta$-functions. For example, if we choose $x_{41}$ , $x_{42}$ and $x_{43}$ as integrating variables, we obtain:
\bes
\widetilde{\Upsilon} \circ \Upsilon [\phi](g_i) = \int [\extd g_{i}'] \phi(g_i') 
 \delta(g_1^{\inv} g_4 g_4'^{\inv} g_1') \delta(g_2^{\inv} g_4 g_4'^{\inv} g_2') \delta(g_3^{\inv} g_4 g_4'^{\inv} g_3') \\
E_{g_2^{\inv} g_1 g_1'^{\inv} g_2'}(x_{12}) E_{g_3^{\inv} g_2 g_2'^{\inv} g_3'}(x_{23}) E_{g_1^{\inv} g_3 g_3'^{\inv} g_1'}(x_{31})\,.
\ees
We remark that the three $\delta$-functions impose that $g_i g_i'^{\inv}$ is independent of $i$, therefore the three remaining plane-waves are equal to $1$. We can finally introduce a resolution of
the identity $1 = \int \extd h \delta(h g_4 g_4'^{\inv})$, and obtain:
\bes
\widetilde{\Upsilon} \circ \Upsilon [\phi](g_i) &=& \int [\extd g_{i}']  \phi(g_i') \int \extd h \prod_{i = 1}^{4} \delta(h g_i g_i'^{\inv}) \\
&=& \int \extd h \phi(h g_1, h g_2, h g_3, h g_4) \\
&=& \phi(g_1, g_2, g_3, g_4)\,.
\ees
Note that we used the gauge invariance of $\phi$ in the last line.

\
Now, let us take $\widetilde{\phi} \in \mathbb{T}$, and show that $\Upsilon \circ \widetilde{\Upsilon} [\widetilde{\phi}] = \widetilde{\phi}$. We have:
\bes
\Upsilon \circ \widetilde{\Upsilon} [\widetilde{\phi}](x_{ij}) = \int [\extd x_{ij}']^{3} \int [\extd g_i]
\left( E_{g_1^{\inv} g_4}(x_{41}') E_{g_2^{\inv} g_4}(x_{42}') E_{g_3^{\inv} g_4}(x_{43}') E_{g_2^{\inv} g_1}(x_{12}') E_{g_3^{\inv} g_2}(x_{23}') E_{g_1^{\inv} g_3}(x_{31}')\right) \nn \\
\star \, \widetilde{\phi}(x_{ij}') \E_{{g_4}^{\inv} g_1}(x_{41}) \E_{{g_4}^{\inv} g_2}(x_{42}) \E_{{g_4}^{\inv} g_3}(x_{43}) \E_{{g_1}^{\inv} g_2}(x_{12}) \E_{{g_2}^{\inv} g_3}(x_{23})
\E_{{g_3}^{\inv} g_1}(x_{31})\,. 
\ees
Each integral with respect to a variable $g_i$ gives a noncommutative $\delta$-function involving six different Lie algebra elements. For example, the integral over $g_1$ gives a $\delta_{\star}(
x_{41} - x_{41}' + x_{12}' - x_{12} + x_{31} - x_{31}')$. The three others are $\delta_{\star}( x_{42} - x_{42}' + x_{12} - x_{12}' + x_{23}' - x_{23})$, $\delta_{\star}( x_{43} - x_{43}' + x_{23} -
x_{23}' + x_{31}' - x_{31})$, and $\delta_{\star}( x_{41}' - x_{41} + x_{42}' - x_{42} + x_{43}' - x_{43})$. After integration of variables $x_{4j}'$, one obtains:
\beq
\Upsilon \circ \widetilde{\Upsilon} [\widetilde{\phi}](x_{ij}) = ( \cT^{142}_{x_{12} - x_{12}'} \cT^{243}_{x_{23} - x_{23}'} \cT^{243}_{- x_{31} + x_{31}'} ) \act \widetilde{\phi}(x_{ij})\,,
\eeq
which, thanks to the invariance of the field $\widetilde{\phi}$, ends the proof\footnote{Notice that this definition of the transformation from triangle to edge variables relies on the gauge
invariance of the GFT field and on its imposition as a projection operator (thus involving group averaging of the field over the action of the diagonal $\SO(4)$); the above proof makes use of this
invariance and of the compactness of the group itself. As such it would not go through unmodified in the Lorentzian case, where the relevant group, $\SO(3,1)$ would be non-compact. This would in fact
introduce additional divergences that would need to be regulated. However, the generalization of our reformulation of the theory to the Lorentzian setting seems clearly feasible on
geometric/combinatorial grounds, and we expect it to require only additional care in being defined properly.}.
\end{proof}

\

This proposition ensures that an edge formulation is indeed possible. Moreover, the translation invariances of the fields in $\mathbb{T}$ guarantee that, to construct the edge representation in group
space, one just needs to plug (\ref{naive}) in (\ref{action}). In terms of the new fields $\psi_\ell$, the action can be written:
\bes \label{action_edge}
S_{kin} [\psi] &=& \frac{1}{2} \int [\extd G]^6 \sum_{\ell=1}^5   \, \psi_\ell(G^\ell_{41}, G^\ell_{42}, G^\ell_{43}, G^\ell_{12}, G^\ell_{23}, G^\ell_{31}) 
\overline{\psi_{\ell}}(G^\ell_{41}, G^\ell_{42}, G^\ell_{43}, G^\ell_{12}, G^\ell_{23}, G^\ell_{31}) \nn \\
&& \times \delta(G^\ell_{12} G^\ell_{24} G^\ell_{41}) \delta(G^\ell_{23} G^\ell_{34} G^\ell_{42}) \delta(G^\ell_{31} G^\ell_{14} G^\ell_{43}) ,\\
S_{int}[\psi] &=& \lambda \int [\extd G]^{30} \, \psi_1(G_{25}^{1}, G_{24}^{1}, G_{23}^{1}, G_{54}^{1}, G_{43}^{1}, G_{35}^{1}) \psi_2(G_{31}^{2}, G_{35}^{2}, G_{34}^{2}, G_{15}^{2}, G_{54}^{2},
G_{41}^{2}) \nn \\
&& \psi_3(G_{42}^{3}, G_{41}^{3}, G_{45}^{3}, G_{21}^{3}, G_{15}^{3}, G_{52}^{3}) \psi_4(G_{53}^{4}, G_{52}^{4}, G_{51}^{4}, G_{32}^{4}, G_{21}^{4}, G_{13}^{4}) \psi_5(G_{14}^{5}, G_{13}^{5},
G_{12}^{5}, G_{43}^{5}, G_{32}^{5}, G_{24}^{5}) \nn \\
&&\times \left( \prod_{\ell = 1}^{5} \nu(G^{\ell}) \right) \delta(H_{345}) \, \delta(H_{514}) \, \delta(H_{125}) \, \delta(H_{123}) \, \delta(H_{234}) \, \delta(H_{253}) \nn \\
&+& \; \; {\rm{c.c}}, \nn
\ees
where of course $G_{ij}^\ell$ is the (group) variable associated to the edge shared by the triangles $i$ and $j$ in the tetrahedron of colour $\ell$ (thus opposite to the vertex of the same colour).
In this formula, $\nu(G^{\ell})$ is the measure factor associated to the field $\ell$ (i.e. a product of three $\delta$-functions as in the kinetic term), and $H_{ijk}$ is defined by:
$H_{ijk} \equiv G_{jk}^{i} G_{ij}^{k} G_{ki}^{j}$. These $H$'s are holonomies around edges in the boundaries of the 4-simplex corresponding to the GFT interaction vertex, in the very same way as the
3d case was giving flatness conditions around vertices in
boundaries of tetrahedra. As remarked earlier, there are ten edges in a 4-simplex, hence ten flatness conditions to impose (one for each choice of triplet of distinct colours $i$, $j$ and $k$).
However,
only six of them are independent, which is why the same number of $\delta$-functions of this type appear in the interaction. For the same reason, one is free to choose any set of six independent
$H_{ijk}$ to write the distribution encoding flatness. As in 3d, this freedom will prove very useful.

\

The partition function is defined through a path integral, with the propagator encoded in the covariance of a Gaussian measure $\mu_{\cP}$:
\bes
\int \extd \mu_{\cP}(\overline{\psi}, \psi) \, \overline{\psi_\ell(g_1, \cdots, g_6)} \psi_{\ell'}(g_1', \cdots, g_6') &\equiv& \nu(g_1, \cdots , g_6) \, \delta_{\ell, \ell'}  \prod_{i = 1}^{6}
\delta(g_i g_i'^{\inv}) \nn \\
&=& \delta(g_4 g_2^{\inv} g_1) \delta(g_5 g_3^{\inv} g_2) \delta(g_6 g_1^{\inv} g_3) \, \delta_{\ell, \ell'}  \prod_{i = 1}^{6} \delta(g_i g_i'^{\inv})\,;
\ees
with respect to which we integrate the exponential of the interaction part of the action\footnote{Note however that the interaction part does not include the constraints $\nu$, since they are already
imposed in the propagator.}:
\bes\label{partition_edge}
\cZ &\equiv& \int \extd \mu_{\cP}(\overline{\psi}, \psi) \, \e^{- V[\overline{\psi}, \psi]} \\
V[\overline{\psi}, \psi] &\equiv&  \lambda \int [\extd G] \, \cV(\{ H \}) \, \psi_1 \psi_2 \psi_3 \psi_4 \psi_5  \; \; + \; \; {\rm{c.c}}.
\ees
We kept variables of integration implicit in $V$, and called $\cV$ the distribution encoding flatness in (\ref{action_edge}). The kernel $\cV$ is represented as a stranded graph in Fig.
\ref{int_e_fig}\footnote{In order to limit the number of crossings, we have reorganized the strands of the different fields.}.

\begin{figure}
\centering
\includegraphics[scale=0.5]{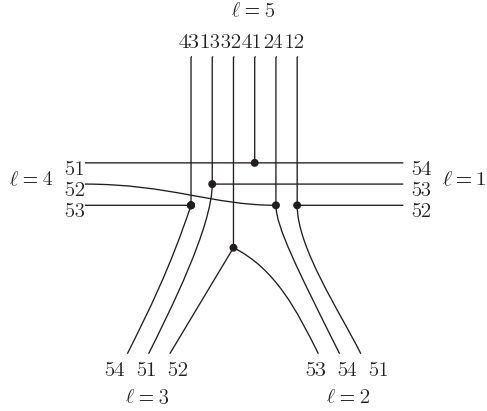}
\caption{Combinatorics of the interaction in edge variables, in a form suitable for factorization of bubbles of colour $5$.}\label{int_e_fig}
\end{figure}

\subsubsection{Amplitudes}
In this section, we explore a similar route as in the 3d case, and propose a way to write amplitudes in such a way that the integrand factorizes into bubble contributions. At this stage, everything is
formal (the amplitudes are generically divergent in absence of some regularization), but will be a precious guide in order to derive bounds once a cut-off is added.

\
We first pick up a given colour (say $\ell=5$) and choose the six edges of the tetrahedron labelled by $5$ to impose the flatness conditions in any given 4-simplex. This corresponds to working with
holonomies $H_{ijk}$
such that one of the indices is equal to 5. We can for example express the distribution $\cV$ in (\ref{action_edge}) as:
\beq
\cV = \delta(H_{345}) \, \delta(H_{514}) \, \delta(H_{125}) \, \delta(H_{315}) \, \delta(H_{245}) \, \delta(H_{253})\,.
\eeq
This expression involves 18 independent variables $G^{\ell}_{i j}$, 6 of them with $\ell = 5$, and 3 for each $\ell \neq 5$. Therefore, when computing the amplitude of a coloured graph $\cG$, all the
propagators with $\ell \neq 5$ will have three strands with free endpoints, that can be integrated straightforwardly. As a result, the closure conditions associated to tetrahedra of colour $5$ will
have trivial contributions. Let us verify it for a tetrahedron of colour $1$. The integral that we have to compute in this case is of the form:
\beq
\cA^{\cG} = \int [\extd G] \delta(G^{1}_{54} G^{1}_{42} G^{1}_{25}) \delta(G^{1}_{43} G^{1}_{32} G^{1}_{24}) \delta(G^{1}_{35} G^{1}_{52} G^{1}_{23}) R(\{ G \}) \,,
\eeq  
where $R$ does not depend on $G^{1}_{42}$, $G^{1}_{23}$ and $G^{1}_{34}$, since $H_{423}$ does not explicitly appears in the expression we chose for $\cV$. Successively integrating these variables we
have: 
\bes
\cA^{\cG} &=& \int [\extd G]  \delta(G^{1}_{43} G^{1}_{32} G^{1}_{25} G^{1}_{54}) \delta(G^{1}_{35} G^{1}_{52} G^{1}_{23}) R(\{ G \}) \nn \\
&=& \int [\extd G]  \delta(G^{1}_{43} G^{1}_{35} G^{1}_{52} G^{1}_{25} G^{1}_{54})  R(\{ G \})\nn \\
&=& \int [\extd G]  R(\{ G \})\,,
\ees
which shows that the amplitude is unchanged if all the propagators of colour $1$ are replaced by trivial ones.
\
Therefore, as in the 3d case, we are lead to simplified expressions for the amplitudes, where all propagators of colour $\ell \neq 5$ are trivially integrated. This allows to factorize the integrand of
the amplitude into bubbles (of colour $5$) contributions. Since within these bubbles all the propagators are effectively trivial, we can contract each connected component of strands in a bubble to one
node. This is easily understood by looking at Fig. \ref{int_e_fig}: in each bubble, all the internal strands are part of lines of colours $\ell \neq 5$, that is those that effectively contain only
three strands; since the constraints associated to these propagators have been integrated with respect to the deleted strands, the remaining strands encode simple convolutions of $\delta$-functions
that can be successively performed. The last $\delta$-distribution in a connected component of strands encodes flatness of its dual edge. We call $\cB_5$ the set of bubbles of colour $5$, $E_b$ the set
of edges in a bubble $b$, $T_5$
the set of tetrahedra of colour $5$. With these notations, the amplitude of $\cG$ (with $\cN$ nodes) can be written as:
\bes\label{amplitude_edge}
\cA^{\cG} &=& (\lambda \overline{\lambda})^{\frac{\cN}{2}} \int [\extd G]^{3 \cN} \left(\prod_{b \in \cB_5}  \prod_{e \in E_b}  \delta( \overrightarrow{\prod_{\tau \supset e}}
(G^{\tau}_{e})^{\epsilon^{\tau}_{e}})\right) 
\left( \prod_{\tau \in T_5} \delta(G^{\tau}_{43} G^{\tau}_{31} G^{\tau}_{14}) \delta(G^{\tau}_{32} G^{\tau}_{21} G^{\tau}_{13}) \delta(G^{\tau}_{24} G^{\tau}_{41} G^{\tau}_{12}) \right) \,,
\ees
where $\epsilon^{\tau}_{e} = \pm 1$ depending on orientation conventions. 

\

The different types of bounds we will prove in the next section will rely on two different ways of trivializing the interaction kernels, so that the integration of the last propagators coupling the
variables appearing in different bubbles in the above expression can be performed. The two strategies give an expression for the amplitudes in which different combinatorial substructures in the
5-coloured graph (and in its dual simplicial complex) are singled out, and will be used to obtain a bound referring to the 4-bubbles in one choice, and a bound referring to jackets in the other.  

In a first strategy, we will trivialize three constraints associated to a tree of edges in each tetrahedron, which will allow to integrate all the remaining propagators. This will lead
straightforwardly to a bubble bound. In a second strategy, we will trivialize the constraints associated to the three edges of a same triangle in each tetrahedron. Only two $\delta$-functions per
propagator will be easily integrable in this case, and the remaining ones will allow to factorize the integrand in terms of Boulatov integrands, henceforth giving a bound involving Boulatov amplitudes
of
$4$-bubbles. This will instead lead to a jacket bound.


  \subsection{Regularization and general scaling bounds}
The amplitudes as written are of course divergent and need to be regularized to be given rigorous meaning.  

A nice aspect of the latter formulation of the Ooguri model lies in the fact that the constraints associated to edges need not to be regularized in order to make the amplitudes finite, as it will be
shortly proven. In other words, only the dynamics of the theory is affected by the cut-off procedure, and not the kinematical space of fields in terms of which the theory is defined. As in the 3d
case, we will use a heat-kernel regularization of the $\delta$-distributions, that with respect to a sharp cut-off will have the main advantage of being positive. The $\SO(3) \times \SO(3)$
$\delta$-distribution splits into a product of two $\SO(3)$ terms: for any $g = (g^{+}, g^{-}) \in \SO(3) \times \SO(3)$, $\delta(g) = \delta^{\SO(3)}(g^{+}) \delta^{\SO(3)}(g^{-})$. Given our
parametrization of the space of functions on $\SO(3) \times \SO(3)$, we can define a regularized distribution for $\SO(3) \times \SO(3)$ using $\SU(2)$ heat-kernels:
\beq
\forall g=(g^{+}, g^{-}) \in \SO(3) \times \SO(3), \qquad \delta_{t}(g) \equiv \delta_{t}^{\SU(2)}(g^{+}) \delta_{t}^{\SU(2)}(g^{-})\,.
\eeq

We therefore define the
regulated partition function as:
\bes\label{partition_edge_reg}
\cZ_{t} &\equiv& \int \extd \mu_{\cP}(\overline{\psi}, \psi) \, \e^{- V_{t}[\overline{\psi}, \psi]} \,.
\ees
$V_{t}$ is the regulated interaction, associated to the kernel:
\bes
\cV_{t} \equiv \frac{1}{[\delta_{t}(\one)]^{4}} \prod_{\{i j k \}} \delta_{t}(G_{jk}^{i} G_{ij}^{k} G_{ki}^{j}) 
= \frac{1}{[\delta_{t}(\one)]^{4}} \prod_{\{i j k \}} \delta_{t}(H_{ijk}) \,   
\ees
where the product runs over the 10 possible choices of 3 colours among 5. Note that we chose a symmetric regularization in the colours, hence the rescaling by $\frac{1}{[\delta_{t}(\one)]^{4}}$. 

The same kind of comments as in $3d$ apply here. First, this
choice of symmetric regularization is convenient as it will allow to easily average over colour attributions. Second, we could have as well chosen a non-symmetric regularization, compatible with the
map $\Upsilon$. For example, a regularized interaction
\bes
S^{t}_{int,5}[\vphi] &=& \lambda \int [\extd g_{i} ]^{10} \, \delta_{t}(g_{2} g_{2}'^{\inv})  \delta_{t}(g_{3} g_{3}'^{\inv})  \delta_{t}(g_{4} g_{4}'^{\inv})  \delta_{t}(g_{6} g_{6}'^{\inv}) 
\delta_{t}(g_{7} g_{7}'^{\inv})  \delta_{t}(g_{9} g_{9}'^{\inv})\nn \\ 
&\times& \vphi_1(g_1, g_2', g_3', g_4') \vphi_2(g_4, g_5, g_6', g_7') 
\vphi_3(g_7, g_3, g_8, g_9') \vphi_4(g_9, g_6, g_2, g_{10}) \vphi_5(g_{10}, g_8, g_5, g_1) 
+ \; \; {\rm{c.c}} \nn \\
\ees
in triangle variables corresponds to a regularized interaction kernel 
\beq
\cV^{5}_{t} \equiv \prod_{\{i j 5 \}} \delta_{t}(H_{ijk})
\eeq
in edge variables, which in turn implies an explicit factorization of the amplitudes in terms of bubbles of colour $5$. This is the exact analogue of (\ref{amplitude_edge}) in the theory with cut-off.

\
With the symmetric regularization, this type of factorization is recovered as a bound only, but for arbitrary colour $\ell$. Moreover this bound will be always saturated in power counting. It
is obtained by bounding four redundant flatness conditions in all the interactions. For instance, if we use the colour $5$ as before, we have:
\beq
| \cV_{t} | \leq \prod_{\{i j 5 \}} \delta_{t}(H_{ij5}) \,,
\eeq 
where now the product runs over the 6 flatness conditions involving the colour $5$. Thanks to the positivity of the regularization, and the convolution properties of the heat-kernel, this immediately
yields:
\bes\label{amplitude_edge_reg}
|\cA^{\cG}_{t}| \leq (\lambda \overline{\lambda})^{\frac{\cN}{2}} \int [\extd G]^{3 \cN} \left(\prod_{b \in \cB_5}  \prod_{e \in E_b}  \delta_{\langle e,b \rangle t}( \overrightarrow{\prod_{\tau
\supset e}} (G^{\tau}_{e})^{\epsilon^{\tau}_{e}})\right) 
\left( \prod_{\tau \in T_5} \delta(G^{\tau}_{43} G^{\tau}_{31} G^{\tau}_{14}) \delta(G^{\tau}_{32} G^{\tau}_{21} G^{\tau}_{13}) \delta(G^{\tau}_{24} G^{\tau}_{41} G^{\tau}_{12}) \right) \quad
\ees
where for any edge $e$ in a bubble $b$, we denote by $\langle e,b \rangle$ the number of tetrahedra in $b$ that contain $e$. 

\

\begin{figure}[h]\label{trees}
  \centering
  \subfloat[A tree of edges.]{\label{tree_tetra1}\includegraphics[scale=0.4]{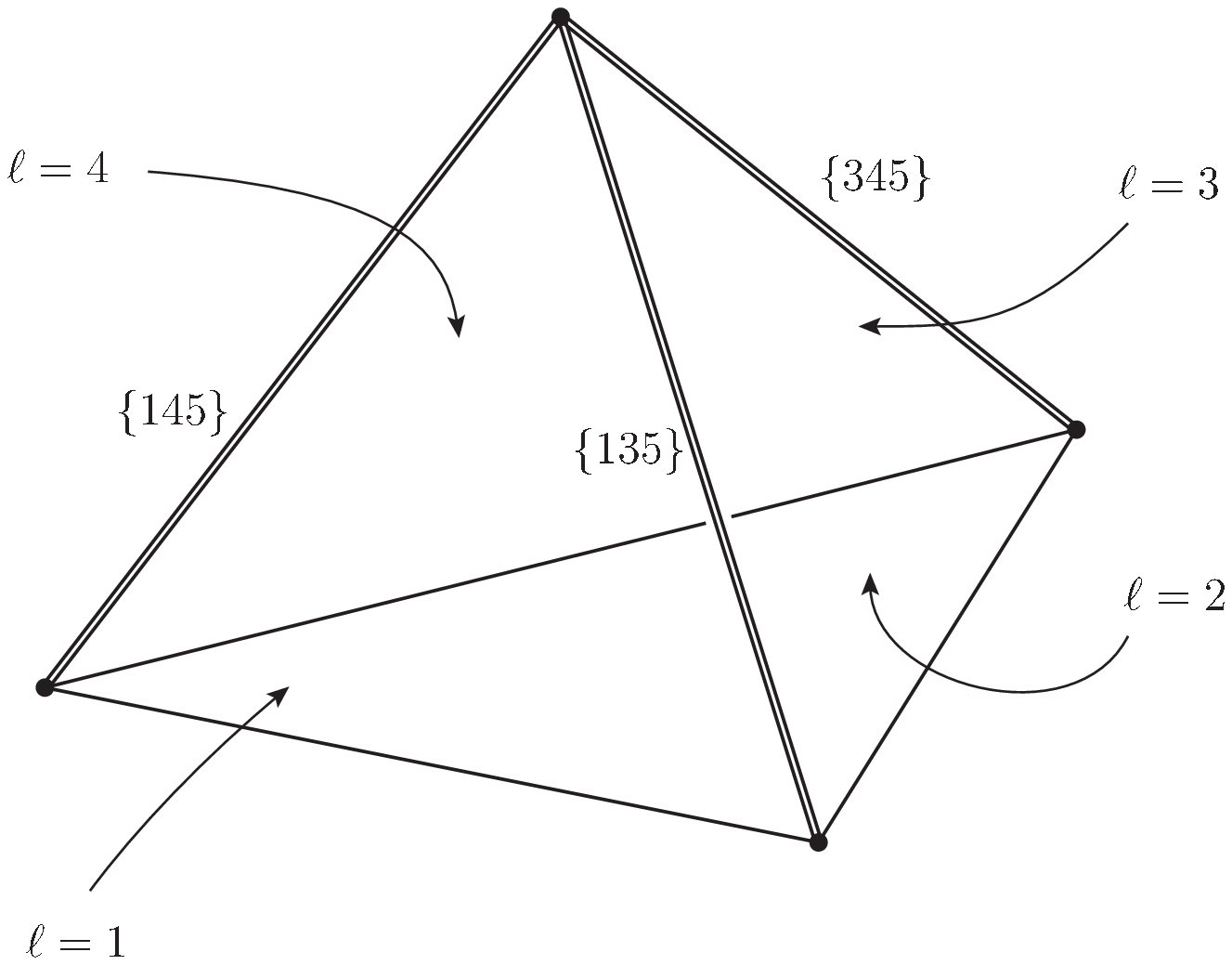}}                
  \subfloat[Another tree.]
{\label{tree_tetra2}\includegraphics[scale=0.4]{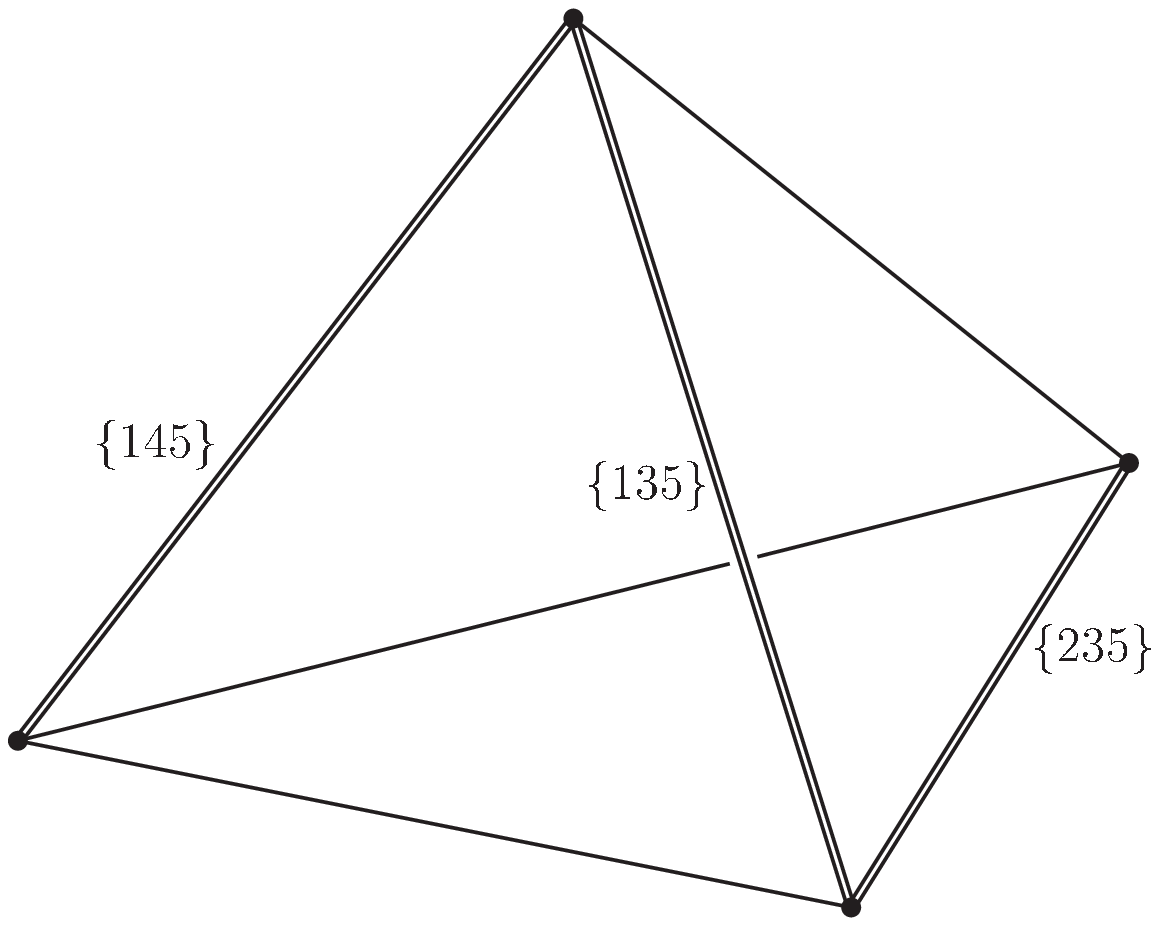}}
\subfloat[Not a tree.]
{\label{not_tree_tetra}\includegraphics[scale=0.4]{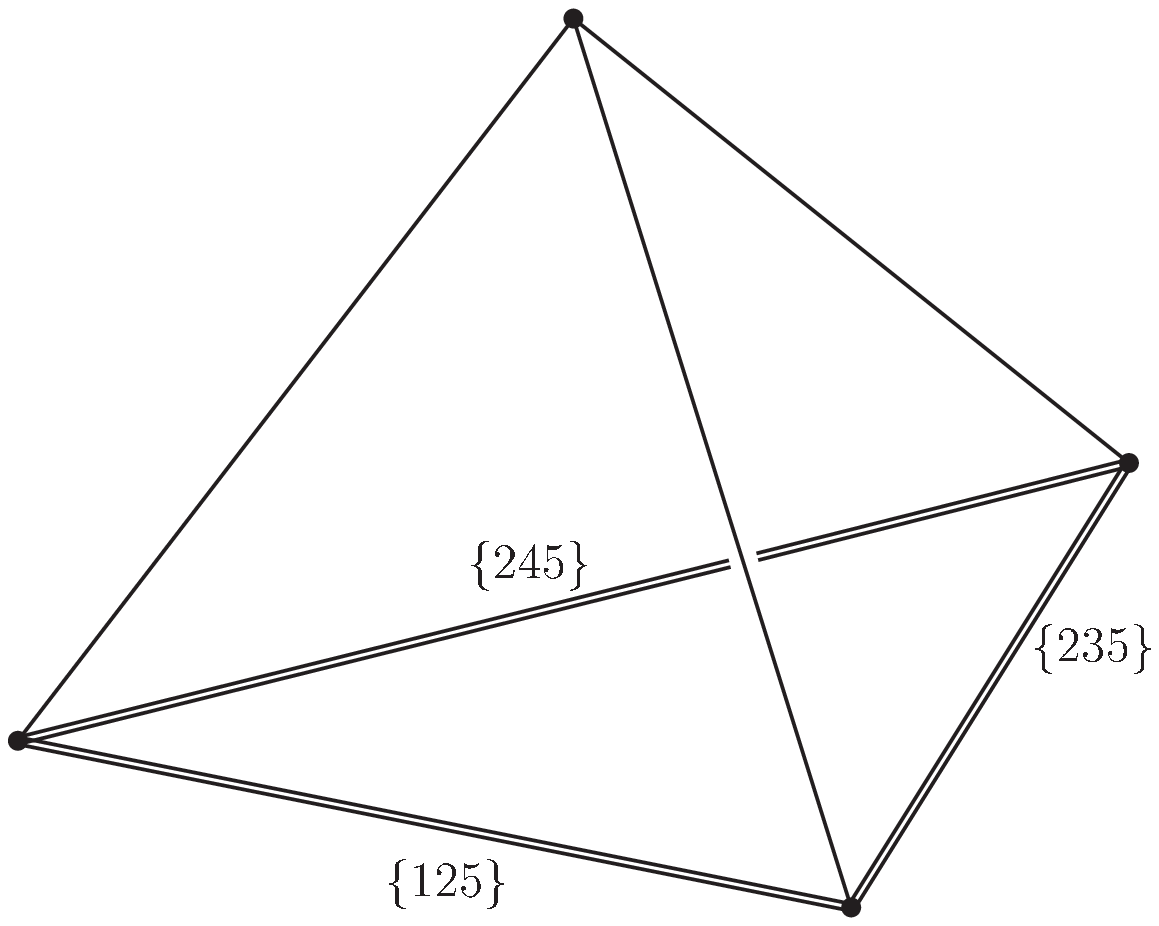}}
  \caption{Different possibilities for trivializing edge interactions in a tetrahedron of colour $5$ (double lines): a tree in Fig. \ref{tree_tetra1} and Fig. \ref{tree_tetra2}; edges associated to a
same triangle in Fig. \ref{not_tree_tetra}.}
\end{figure}

We now have to integrate the remaining propagators, using the two possible sets of variables evoked before. 

\

We start with the first strategy and we look for integrating variables associated to a tree of edges in each tetrahedron. 
There are two kinds of such trees: three edges sharing a same vertex; or three edges such that the first one shares a vertex with the second, the second with the third, but the third does not share
any vertex with the first one (see Fig. \ref{tree_tetra1} and \ref{tree_tetra2}). Of course we want to use the colours to define these trees, so that the same simplification takes place in all the
tetrahedra in the simplicial complex. The notations are as follows: we will associate the colour $\{\ell_1 \ell_2 \ell_3\}$ to an edge involving flatness constraints $H_{\ell_1 \ell_2 \ell_3}$. Such an
edge is therefore dual to a (maximal) connected subgraph of $\cG$, involving only lines of colours $\ell_1$, $\ell_2$ and $\ell_3$.\footnote{Note that this convention differs from the one used in 3d to
label vertices, as here the latter would amount to labelling an edge by the two line colours on which its dual graph does {\it{not}} have support.}

For definiteness, we will use variables involved in flatness conditions around edges of colours $\{345\}$, $\{145\}$ and $\{135\}$, a choice that corresponds to a tree of the first kind (i.e. Fig
\ref{tree_tetra1}). 

Since each strand is connected to two interactions, the integrals to
compute are not simple convolutions, therefore difficult. To circumvent this problem we simply bound all the heat-kernels implementing flatness constraints of colours $\{345\}$, $\{145\}$ and $\{135\}$
by their value at the identity, and then integrate the propagators. This yields:
\bes
|\cA^{\cG}_{t}| &\leq& (\lambda \overline{\lambda})^{\frac{\cN}{2}} \int [\extd G]^{3 \cN} \left(\prod_{b \in \cB_5}  \prod_{e \in E_{b}(345)\cup E_{b}(145) \cup E_{b}(135)}  \delta_{\langle e,b
\rangle t}(\one) \right) \nn \\
&&\times \left(\prod_{b \in \cB_5}  \prod_{e \in E_{b}(125)\cup E_{b}(235) \cup E_{b}(245)}  \delta_{\langle e,b \rangle t}( \overrightarrow{\prod_{\tau \supset e}}
(G^{\tau}_{e})^{\epsilon^{\tau}_{e}})\right) 
\left( \prod_{\tau \in T_5} \delta(G^{\tau}_{43} G^{\tau}_{31} G^{\tau}_{14}) \delta(G^{\tau}_{32} G^{\tau}_{21} G^{\tau}_{13}) \delta(G^{\tau}_{24} G^{\tau}_{41} G^{\tau}_{12}) \right) \nn \\
&\leq& (\lambda \overline{\lambda})^{\frac{\cN}{2}}  \left(\prod_{b \in \cB_5}  \prod_{e \in E_{b}(345)\cup E_{b}(145) \cup E_{b}(135)}  \delta_{\langle e,b \rangle t}(\one) \right)  \nn \\
&& \times \int [\extd G]^{\frac{3}{2} \cN} \left( \prod_{b \in \cB_5}  \prod_{e \in E_{b}(125)\cup E_{b}(235) \cup E_{b}(245)}  \delta_{\langle e,b \rangle t}( \overrightarrow{\prod_{\tau \supset e}}
(G^{\tau}_{e})^{\epsilon^{\tau}_{e}}) \right)\,.
\ees
 
The only term left to integrate is a product of integrals associated to connected $\phi^3$ graphs (whose lines are strands of the initial graph). Each of these is dual to an edge of colour $\{125\}$,
$\{235\}$ or $\{245\}$. Integrating a maximal tree
of strands in each of these graphs, then bounding the last $\delta$-function by its value at the identity, we obtain the general bound:
\beq
|\cA^{\cG}_{t}| \leq (\lambda \overline{\lambda})^{\frac{\cN}{2}}  \left(\prod_{b \in \cB_5}  \prod_{e \in E_{b}(345)\cup E_{b}(145) \cup E_{b}(135)}  \delta_{\langle e,b \rangle t}(\one) \right)
\left( \prod_{e \in E(125)\cup E(235) \cup E(245)}  \delta_{|e| t}( \one) \right)\,,
\eeq
where 
\beq
|e| \equiv \sum_{b \in \cB_\ell \, , \, b \supset e}  \langle e,b \rangle.
\eeq 
coincides with the number of $4$-simplices containing the edge $e$. 

Remarking that, when $t$ goes to zero:
\beq
\forall a > 0, \qquad \frac{\delta_{at}(\one)}{\delta_{t}(\one)} \rightarrow a^{- 3}
\eeq
we can rewrite this bounds using powers of heat-kernels with the same parameter, for instance $t$. This allows to show that for any constant $K$ such that
\beq
K > K_0 \equiv \left(\prod_{b \in \cB_5}  \prod_{e \in E_{b}(345)\cup E_{b}(145) \cup E_{b}(135)}  \langle e,b \rangle^{- 3} \right)
\left( \prod_{e \in E(125)\cup E(235) \cup E(245)}  |e|^{- 3} \right)
\eeq
we asymptotically have:
\bes\label{bound1}
|\cA^{\cG}_{t}| &\leq& K \,(\lambda \overline{\lambda})^{\frac{\cN}{2}} \, [\delta_{t}(\one)]^{\gamma} \nn \\
\gamma &=& |E(125)| + |E(235)| + |E(245)| + \sum_{b \in \cB_5} \left(|E_{b}(345)| + |E_{b}(145)| + |E_{b}(135)|\right) \,.
\ees
This formula will be the central tool in the bounds on pseudomanifolds we will derive below. Before moving to a second important formula following from the factorized expression for the amplitudes, we
remark that we can simply choose $K = 1$.

\

The second strategy we suggested to bound formula (\ref{amplitude_edge_reg}) also starts from a splitting of edge colours into two parts:
\bes
|\cA^{\cG}_{t}| &\leq& (\lambda \overline{\lambda})^{\frac{\cN}{2}} \int [\extd G]^{3 \cN} \left(\prod_{b \in \cB_5} 
\prod_{e \in E_{b}(125)\cup E_{b}(235) \cup E_{b}(245)}  \delta_{\langle e,b \rangle t}(\overrightarrow{\prod_{\tau \supset e}} (G^{\tau}_{e})^{\epsilon^{\tau}_{e}}) \right) \nn \\
&&\times  \left(\prod_{b \in \cB_5}  \prod_{e \in E_{b}(345)\cup E_{b}(145) \cup E_{b}(135)}  \delta_{\langle e,b \rangle t}( \overrightarrow{\prod_{\tau \supset e}}
(G^{\tau}_{e})^{\epsilon^{\tau}_{e}})\right)
\left( \prod_{\tau \in T_5} \delta(G^{\tau}_{43} G^{\tau}_{31} G^{\tau}_{14}) \delta(G^{\tau}_{32} G^{\tau}_{21} G^{\tau}_{13}) \delta(G^{\tau}_{24} G^{\tau}_{41} G^{\tau}_{12}) \right) \nn 
\ees
We can integrate the propagators with respect to variables $G^{\tau}_{23}$ and $G^{\tau}_{24}$, which are only involved in the first factor:
\bes
|\cA^{\cG}_{t}| &\leq& (\lambda \overline{\lambda})^{\frac{\cN}{2}} \int [\extd G]^{2 \cN} \left( \prod_{b \in \cB_5} \prod_{e \in E_{b}(125)} 
\delta_{\langle e,b \rangle t}(\overrightarrow{\prod_{\tau \supset e}} (G^{\tau}_{21})^{\epsilon^{\tau}_{e}}) \right) \nn \\
&&  \left( \prod_{b \in \cB_5}  
  \left( \prod_{e \in E_{b}(235)} \delta_{\langle e,b \rangle t}(\overrightarrow{\prod_{\tau \supset e}} (G^{\tau}_{21} G^{\tau}_{13})^{\epsilon^{\tau}_{e}}) \right)
  \left( \prod_{e \in E_{b}(245)} \delta_{\langle e,b \rangle t}(\overrightarrow{\prod_{\tau \supset e}} (G^{\tau}_{21} G^{\tau}_{14})^{\epsilon^{\tau}_{e}}) \right) \right)  \nn \\
&&\times   \left(\prod_{b \in \cB_5}  \prod_{e \in E_{b}(345)\cup E_{b}(145) \cup E_{b}(135)}  \delta_{\langle e,b \rangle t}( \overrightarrow{\prod_{\tau \supset e}}
(G^{\tau}_{e})^{\epsilon^{\tau}_{e}})\right)
\left( \prod_{\tau \in T_5} \delta(G^{\tau}_{43} G^{\tau}_{31} G^{\tau}_{14}) \right) 
\ees
The interesting feature of this formula is the following: the last line is the integrand of the Boulatov amplitude of the 3d coloured graph obtained from $\cG$ by deleting the lines of colour $2$! 
Since the connected components of this graph are the bubbles in $\cB_{2}$, it is tempting to factorize their 3d amplitudes.
The simplest way do so is to bound the $\delta$-functions appearing in the first two lines that involve variables from the third. We wrote the last inequality in such a way that these are exactly the
terms in the second line. 
Therefore:
\bes
|\cA^{\cG}_{t}| &\leq& (\lambda \overline{\lambda})^{\frac{\cN}{2}} \left( \prod_{b \in \cB_5} \prod_{e \in E_{b}(235) \cup E_{b}(245)} \delta_{\langle e,b \rangle t}(\one) \right) 
\int [\extd G]^{\frac{\cN}{2}} \left( \prod_{b \in \cB_5} \prod_{e \in E_{b}(125)}  \delta_{\langle e,b \rangle t}(\overrightarrow{\prod_{\tau \supset e}} (G^{\tau}_{21})^{\epsilon^{\tau}_{e}})
\right) \nn \\
&&\times   \int [\extd G]^{\frac{3 \cN}{2}} \left(\prod_{b \in \cB_5}  \prod_{e \in E_{b}(345)\cup E_{b}(145) \cup E_{b}(135)}  \delta_{\langle e,b \rangle t}( \overrightarrow{\prod_{\tau \supset e}}
(G^{\tau}_{e})^{\epsilon^{\tau}_{e}})\right)
\left( \prod_{\tau \in T_5} \delta(G^{\tau}_{43} G^{\tau}_{31} G^{\tau}_{14}) \right) 
\ees
The last line is now exactly a product of bubble 3d amplitudes: $\prod_{b \in \cB_2} \cA^{b}_{t}$. As for the integral in the first line, it is associated with the graph made of all the strands of
colour $(125)$ which, as in the previous case, we bound by:
\beq
\int [\extd G]^{\frac{\cN}{2}} \left( \prod_{b \in \cB_5} \prod_{e \in E_{b}(125)}  \delta_{\langle e,b \rangle t}(\overrightarrow{\prod_{\tau \supset e}} (G^{\tau}_{21})^{\epsilon^{\tau}_{e}})
\right)
\leq \prod_{e \in E(125)}  \delta_{|e| t}( \one) \, .
\eeq

The net result is that, for any constant Q such that:
\beq
Q > Q_0 \equiv \left(\prod_{b \in \cB_5}  \prod_{e \in E_{b}(235) \cup E_{b}(245)}  \langle e,b \rangle^{- 3} \right) \left( \prod_{e \in E(125)} |e|^{-3} \right)
\eeq
we asymptotically have:
\bes\label{bound2}
|\cA^{\cG}_{t}| &\leq& Q \,(\lambda \overline{\lambda})^{\frac{\cN}{2}} \, [\delta_{t}(\one)]^{\eta} \prod_{b \in \cB_2} |\cA^{b}_{t}| \nn \\
\eta &=& |E(125)| +\sum_{b \in \cB_5} \left(|E_{b}(235)| + |E_{b}(245)|\right) \,.
\ees
As before, we remark that $Q = 1$ is a valid choice.

\subsection{Bounding pointlike singularities}


In this section, we would like to use equation (\ref{bound1}) to give a bound on the amplitudes of simplicial complexes with pointlike singularities. As in 3d, we therefore need to introduce a
combinatorial quantity that 
determines whether a bubble is spherical or not. Unlike in 3d however, this cannot be the genus, since the Euler characteristic of a manifold in odd dimensions is 0, irrespectively of its topology. We
therefore propose 
to use the notion of degree \cite{large-N} instead:  $\omega = \sum_{J} g_J$, where the sum runs over the jackets of the 4-coloured graph dual to the triangulation of the boundary of the 4d bubble
around vertices of the simplicial complex, and $g_J$ is the genus of the jacket. Indeed, the degree is 0 only for a specific set of simplicial decompositions of the sphere (those associated to melonic
graphs \cite{critical}). As a result, in this 4d case we will prove a bound on simplicial complexes which have non-melonic bubbles, that is on a subclass of manifolds as well as on singular pseudomanifolds.
But, remarking that {\it{all}} the jackets of a non-spherical bubble have genus bigger than $1$, we will refine the bound for singular pseudomanifolds, which will in the end take the same form as in 3d.

\
Let us consider a connected and closed coloured graph $\cG$. We propose to use the bound (\ref{bound1}) to deduce a bound on the divergent degree $\gamma_{\cG}$ of $\cG$. Since the prefactor $K$ can be
chosen to be $1$, only the parameter $\gamma$ in (\ref{bound1}) matters, and $\gamma_{\cG} \leq \gamma$, that is:
\bes\label{bound1'}
\gamma_{\cG} &\leq& |E(125)| + |E(235)| + |E(245)| + \sum_{b \in \cB_5} \left(|E_{b}(345)| + |E_{b}(145)| + |E_{b}(135)|\right) \,.
\ees
Since this expression involves only quantities associated to edges and $4$-bubbles of the simplicial complex, let us show that the sum of the degrees of the same bubbles can also be expressed in terms
of similar quantities, namely we first prove:
\begin{lemma}\label{lemmaAdd}
\beq\label{sum_deg}
\sum_{b \in \cB_5} \omega(b) = 3 |\cB_{5}| +  \frac{3 \cN}{2} - \sum_{b \in \cB_5} |E_{b}|
\eeq
\end{lemma}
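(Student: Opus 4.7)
The plan is to apply the 3d jacket genus formula (the lemma proven earlier in the Boulatov section) to each 4-bubble $b \in \cB_5$, viewing it as a 4-coloured graph whose dual is a 3-dimensional triangulated complex (the link of the vertex of colour $5$ to which $b$ is dual), and then to average/sum over the three jackets and over the bubbles.

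First, I would fix $b \in \cB_5$. As a 4-coloured graph with colours in $\{1,2,3,4\}$, its dual is a 3d simplicial complex whose ``tetrahedra'' are the nodes of $b$ (equivalently the $4$-simplices of $\cG$ incident to the vertex dual to $b$), and whose ``edges of colour $(\ell\ell')$'' are the connected components of the subgraph of $b$ retaining only colours $\ell,\ell'$. The key identification is that these components are precisely the 3-bubbles of $\cG$ of colour set $\{\ell,\ell',5\}$ sitting inside $b$, so they are counted by $|E_b(\ell\ell'5)|$ in the notation used above. By the 3d lemma, each jacket $J=(\sigma,\sigma^{\inv})$ of $b$ has genus
\[
g_J(b)=1+\frac{1}{2}\Bigl(|T_b|-\sum_{\ell=1}^{4}|E_b(\sigma(\ell)\sigma(\ell+1)5)|\Bigr),
\]
where $|T_b|$ is the number of nodes of $b$.

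Next I would sum over the three jackets of $b$. The combinatorial input is that the three cyclic classes of permutations of $\{1,2,3,4\}$ yield altogether $3\times 4=12$ consecutive pairs, and a direct enumeration shows each of the six unordered pairs $\{\ell,\ell'\}\subset\{1,2,3,4\}$ occurs exactly twice. Therefore
\[
\sum_{J}\sum_{\ell}|E_b(\sigma(\ell)\sigma(\ell+1)5)|=2\sum_{\{\ell,\ell'\}}|E_b(\ell\ell'5)|=2|E_b|,
\]
and hence
\[
\omega(b)=\sum_{J}g_J(b)=3+\frac{3|T_b|}{2}-|E_b|.
\]

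Finally I would sum over $b\in\cB_5$. The only nontrivial piece is $\sum_b|T_b|$, for which I would use that each $4$-simplex of $\cG$ (i.e.\ each node) contains exactly one vertex of colour $5$, so it belongs to exactly one bubble in $\cB_5$; this gives $\sum_b|T_b|=\cN$. Putting the pieces together yields
\[
\sum_{b\in\cB_5}\omega(b)=3|\cB_5|+\frac{3\cN}{2}-\sum_{b\in\cB_5}|E_b|,
\]
as claimed. The only step requiring any thought is the double-counting of pairs in the average over the three jackets; the rest is bookkeeping with the dictionary between the 4d coloured graph and the 3d structure of its bubbles.
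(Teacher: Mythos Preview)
Your proof is correct and follows essentially the same route as the paper: apply the 3d jacket genus formula to each bubble $b\in\cB_5$, sum over its three jackets to obtain $\omega(b)=3+\tfrac{3}{2}|T_b|-|E_b|$ (using that each of the six edge-colour pairs appears exactly twice among the three jackets), and then sum over bubbles using $\sum_b|T_b|=\cN$. The only cosmetic difference is that the paper phrases the double-counting as ``each jacket has support over four edge colours out of six'' and writes $\sum_b|T_b|=2|T^5|=\cN$ via colour-$5$ tetrahedra, whereas you count directly; the arguments are otherwise identical.
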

\begin{proof}
Indeed, by definition of the degree of a bubble $b \in \cB_{5}$, and by formula (\ref{g_jacket}), we have:
\beq		
\omega(b) = \sum_{ J= (\sigma, \sigma^{- 1})} \left( 1 + \frac{1}{2} \left( |T_{b}^{5}| - \sum_{\ell = 1}^{4} |E_{b}(\sigma(\ell) \sigma(\ell + 1))| \right) \right)\,,
\eeq
where $\sigma$ are cycles over $\{ 1, \ldots, 4\}$.\footnote{Note that we also used the fact that labelling jackets by cycles over vertex colours as in (\ref{g_jacket}), is equivalent to labelling them
by cycles over edge colours.} There are three different jackets in $b$, and each of them has support over four edge colours out of six, therefore:
\bes
\omega(b) &=& 3 +  \frac{1}{2} \left( 3 |T_{b}^{5}| - \frac{3 \times 4}{6} |E_b| \right) \nn \\
&=& 3 +  \frac{3 |T_{b}^{5}|}{2} - |E_b|\,.
\ees
Summing over the bubbles, and using
\beq
\sum_{b \in \cB_5} |T_{b}^{5}| = 2 |T^{5}| = \cN
\eeq
we finally obtain the claimed equality. \end{proof}
We therefore just need to make $\sum_{b \in \cB_5} |E_{b}|$ appear on the right-hand-side of inequality (\ref{bound1'}) To this aim, and as in the 3d case, we prove a combinatorial lemma:
\begin{lemma}\label{comb_lemma2}
 For any distinct edge colour $i$:
\beq
|\cB_{5}| + |E(i)| - \sum_{b \in \cB_{5}} |E_{b}(i)| \leq 1 \, .
\eeq
\end{lemma}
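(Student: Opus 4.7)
The plan is to prove this as a direct 4d analog of the 3d combinatorial lemma (\ref{lemma_comb}) from \cite{vertex}, noting that the statement is needed (and makes sense) when the edge colour triple $i$ contains the bubble colour $5$.

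My first step is to rewrite each of the three quantities as the number of connected components of a specific monochromatically restricted subgraph of $\cG$. For any subset $S \subseteq \{1,2,3,4,5\}$ of line colours, let $\cG_S$ denote the spanning subgraph of $\cG$ keeping only the lines whose colour lies in $S$, and let $C(\cG_S)$ be its number of connected components. Then by the definitions of 4-bubbles and of edges of the dual simplicial complex,
\beq
|\cB_5| \,=\, C(\cG_{\{1,2,3,4\}})\,, \qquad |E(i)| \,=\, C(\cG_{i})\,.
\eeq
The key combinatorial identification is
\beq
\sum_{b \in \cB_5} |E_b(i)| \,=\, C(\cG_{i \setminus\{5\}})\,,
\eeq
which follows from the fact that any path in $\cG_{i\setminus\{5\}}$ uses no line of colour $5$ and therefore stays inside one and the same bubble $b\in\cB_5$; hence the components of $\cG_{i\setminus\{5\}}$ partition according to their enclosing bubble, and within each bubble the colour-$i$ edges of the simplicial complex are exactly those components (since $b$ contains no lines of colour $5$).

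Next, I invoke the supermodularity of the connected-component count on a graph: for any two edge subsets $A,B$ on the same vertex set,
\beq
C(\cG|_{A}) + C(\cG|_{B}) \,\leq\, C(\cG|_{A\cup B}) + C(\cG|_{A\cap B})\,,
\eeq
which is equivalent to submodularity of the graphic matroid rank $r$ via $C = |V| - r$. Applying this with $A$ the set of lines of colours in $\{1,2,3,4\}$ and $B$ the set of lines of colours in $i$, and using the standing assumption $5 \in i$, one gets $A\cup B = E(\cG)$ and $A\cap B =$ lines of colours in $i\setminus\{5\}$. The inequality thus reduces to
\beq
|\cB_5| + |E(i)| \,\leq\, C(\cG) + \sum_{b\in\cB_5}|E_b(i)|\,.
\eeq

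Finally, since $\cG$ is assumed connected, $C(\cG) = 1$, and the claim follows by rearranging. I do not anticipate a serious obstacle: the only subtle point is the identification $\sum_b|E_b(i)| = C(\cG_{i\setminus\{5\}})$, which is transparent once one realizes that no line of colour $5$ connects distinct 4-bubbles. This mirrors exactly the 3d argument used for Lemma (\ref{lemma_comb}) in \cite{vertex}, up to the replacement of vertex-bubbles by edge-bubbles and of line-colour labels by triples.
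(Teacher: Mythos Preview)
Your proof is correct, and it takes a genuinely different route from the paper's. The paper constructs an auxiliary bipartite graph $\cC_{i,5}(\cG)$ with node set $\cB_5 \cup E(i)$, linking a bubble $b$ to an edge $e$ whenever $e$ appears in the triangulation of $b$; it then shows this graph is connected (using that two bubbles sharing a colour-$5$ tetrahedron automatically share an edge of colour $i$), and concludes via the tree bound $|\text{nodes}|\leq 1+|\text{links}|$ together with $L(b)\leq |E_b(i)|$.

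Your argument instead recasts all three quantities as connected-component counts of colour-restricted spanning subgraphs and invokes the supermodularity of $C(\cdot)$ (equivalently, submodularity of the graphic-matroid rank). This is cleaner and more conceptual: once the identifications are made, the inequality drops out of a single standard matroid fact, and the role of the hypothesis $5\in i$ becomes completely transparent (it is precisely what makes $A\cup B$ cover all line colours, so that $C(\cG|_{A\cup B})=C(\cG)=1$). The paper's bipartite-graph proof, by contrast, is more hands-on and self-contained, requiring no external lemma; its implicit use of $5\in i$ is hidden in the step ``$b_k$ and $b_{k+1}$ share a tetrahedron, hence an edge of colour $i$'', which only works because every edge of a colour-$5$ tetrahedron has a colour triple containing $5$. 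Your explicit flagging of this hypothesis is a useful clarification, since the paper's statement ``for any distinct edge colour $i$'' is literally false when $5\notin i$ (then $\sum_b|E_b(i)|=|E(i)|$ and the claim would force $|\cB_5|\leq 1$).
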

\begin{proof} 
Consider the graph $\cC_{i,5}(\cG)$ whose nodes are the elements of $\cB_5 \cup E(i)$, and links are constructed as follows: there is a link between a bubble $b \in \cB_5$ and $e \in E(i)$ if and only
if $e$ appears in the triangulation
of $b$.
\
We first show that, because $\cG$ itself is assumed to be connected, $\cC_{i,5}(\cG)$ is also connected. Let $b_i$ and $b_f$ be two elements of $\cC_{i,5}(\cG) \cap \cB_5$. The connectivity of $\cG$
ensures that there exists
a sequence of $p$ bubbles $(b_i \equiv b_0 , b_1 , \cdots , b_{p-1} , b_p \equiv b_f)$, such that: for any $k \in \llbracket 0 , p\rrbracket$, $b_k$ and $b_{k + 1}$ share a tetrahedron $\tau_k$. 
Hence: for any $k \in \llbracket 0 , p\rrbracket$, $b_k$ and $b_{k + 1}$ share an edge $e_k$. So, in $\cC_{i, 5}(\cG)$, $(b_i \equiv b_0 , e_0 , b_1 , e_1, \cdots , b_{p-1} , e_{p - 1}, b_p \equiv
b_f)$ is a connected path
from $b_i$ to $b_f$. Finally, remarking that any element of $\cC_{i, 5}(\cG) \cap E(i)$ is by definition connected to at least one element of $\cC_{i, 5}(\cG) \cap \cB_5$, we conclude that any two
nodes of $\cC_{i, 5}(\cG)$
are connected.
\
Call $L$ the number of links of $\cC_{i, 5}(\cG)$. Its number of nodes being equal to $|\cB_5| + |E(i)|$, the connectivity implies that:
\beq
|\cB_5| + |E(i)| \leq 1 + L\,.
\eeq
But for any $b \in \cB_5$, the number of lines $L(b)$ connected to $b$ in $\cC_{i, 5}(\cG)$ verifies: $L(b) \leq |E_{b}(i)|$. This is because $E_{b}(i)$ is the set of available edges of colour $i$ in
the bubble (some of which
might be identified when gluing the different bubbles together), and $L(b)$ is the true number of edges of colour $i$ in $b$, once all the identifications of edges have been taken into account. Since
by construction
\beq
L= \sum_{b \in \cB_5} L(b)\,,
\eeq 
we conclude that:
\beq
|\cB_5| + |E(i)| \leq 1 + \sum_{b \in \cB_5} L(b) \leq 1 + \sum_{b \in \cB_5} |E_{b}(i)|\,.
\eeq
\end{proof}

This, together with (\ref{bound1'}) and (\ref{sum_deg}), immediately yields the wanted result:
\beq
\gamma_{\cG} \leq 3 + \frac{3 \cN}{2}  - \sum_{b \in \cB_{5}} \omega(b) \, .
\eeq
Of course, we have a similar bound for any colour $\ell$. 

This formula is particularly simple when the coupling constant $\lambda$ is appropriately rescaled:
\beq\label{rescaling_4d}
 \lambda \rightarrow \frac{\lambda}{\delta_{t}(\one)^{3/2}}\,,
\eeq
which is also the setting of the $1/N$ expansion \cite{large-N}. Thus we have obtained the following:
\begin{proposition}
With the rescaling of the coupling constant (\ref{rescaling_4d}), the divergence degree $\gamma_{\cG}$ of a graph $\cG$ verifies, for any colour $\ell$:
\beq
\gamma_{\cG} \leq 3  - \sum_{b \in \cB_{\ell}} \omega(b) \, \label{bound}.
\eeq
\end{proposition}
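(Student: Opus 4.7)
My plan is to combine the scaling bound (\ref{bound1'}), which has already been derived from the factorized form of the regularized amplitudes, with the two combinatorial lemmas (\ref{sum_deg}) and Lemma \ref{comb_lemma2}, in order to eliminate all reference to individual edge counts and recover the sum of bubble degrees $\sum_b \omega(b)$. Since the symmetric regularization (\ref{partition_edge_reg}) treats all colours on the same footing, the derivation done with $\ell=5$ extends to any other colour choice without modification, so it suffices to establish the bound for $\ell=5$.

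The first step is to turn the three global edge counts $|E(125)|,|E(235)|,|E(245)|$ appearing in (\ref{bound1'}) into bubble-local edge counts by applying Lemma \ref{comb_lemma2} separately to each of these three colours. Summing the three resulting inequalities gives
\begin{equation*}
|E(125)|+|E(235)|+|E(245)|\;\leq\;3-3|\cB_5|+\sum_{b\in\cB_5}\bigl(|E_b(125)|+|E_b(235)|+|E_b(245)|\bigr).
\end{equation*}
Inserting this into (\ref{bound1'}) and observing that $\{125,235,245,345,145,135\}$ exhausts the six edge colours present in a bubble of colour $5$ (there are exactly $\binom{4}{2}=6$ triplets containing the index $5$, all of which label edges dual to connected subgraphs inside the bubble), the inner double sum collapses to $\sum_{b\in\cB_5}|E_b|$. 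This yields $\gamma_\cG\leq 3-3|\cB_5|+\sum_{b\in\cB_5}|E_b|$.

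The second step is to eliminate $\sum_{b\in\cB_5}|E_b|$ via Lemma \ref{lemmaAdd}, which asserts that $\sum_{b\in\cB_5}|E_b|=3|\cB_5|+\tfrac{3\cN}{2}-\sum_{b\in\cB_5}\omega(b)$. Substituting, the $|\cB_5|$-terms cancel and one obtains
\begin{equation*}
\gamma_\cG\;\leq\;3+\frac{3\cN}{2}-\sum_{b\in\cB_5}\omega(b).
\end{equation*}

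The third and final step is to absorb the $\cN$-dependent term through the rescaling (\ref{rescaling_4d}) of the coupling constant: each of the $\cN$ interaction vertices of $\cG$ contributes an extra factor $\delta_t(\one)^{-3/2}$ from the replacement $\lambda\to\lambda/\delta_t(\one)^{3/2}$ (counting $\lambda$ and $\bar\lambda$ vertices together through the $(\lambda\bar\lambda)^{\cN/2}$ prefactor), shifting the divergence degree by exactly $-\tfrac{3\cN}{2}$ and producing the uniform bound claimed in the proposition. I do not anticipate any real obstacle; the derivation is essentially bookkeeping, the only mild subtlety being to check that the six triplets of edge colours involving $5$ cover all edges of an arbitrary bubble of colour $5$, which follows directly from the colouring conventions of Section~3.1.
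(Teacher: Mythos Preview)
Your proof is correct and follows essentially the same approach as the paper: the paper simply states that Lemma~\ref{comb_lemma2}, together with (\ref{bound1'}) and (\ref{sum_deg}), ``immediately yields'' the intermediate bound $\gamma_\cG \leq 3 + \tfrac{3\cN}{2} - \sum_{b\in\cB_5}\omega(b)$, and then applies the rescaling (\ref{rescaling_4d}). You have spelled out precisely the bookkeeping that the paper leaves implicit --- applying Lemma~\ref{comb_lemma2} three times, recognizing that the six triplets $\{ij5\}$ exhaust the edge colours in a colour-$5$ bubble so that the sums collapse to $\sum_b|E_b|$, and then substituting (\ref{sum_deg}) --- so there is no substantive difference.
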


Given that any pointlike singularity implies a degree greater than zero, we therefore see that, like in 3d, the more pointlike singularities of the same colour a simplicial complex has, the more its
amplitude is suppressed with respect to the leading order 
($\sim \delta_{t}(\one)^{3}$). 
We can make this point a bit more precise, making use of a lemma, a simple proof of which can be found in the review \cite{tensorReview}\footnote{This result is actually generalizable to any dimension, since a graph with a planar
jacket has a trivial \textit{regular genus}, a property that in turn characterizes spheres (see \cite{FerriGagliardi, Vince_gene} and references therein).}:
\begin{lemma}
 If a connected $4$-coloured graph $\cG$ possesses a spherical jacket, then $\cG$ is dual to a sphere.
\end{lemma}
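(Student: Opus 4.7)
My plan is to exhibit a Heegaard splitting of the closed orientable $3$-pseudomanifold $M$ dual to $\cG$ whose splitting surface is precisely the chosen spherical jacket $J$, so that the Heegaard genus of $M$ is at most $g_J = 0$. Once this is achieved, the classical theorem that a closed orientable $3$-manifold admitting a genus-zero Heegaard decomposition is $S^3$ concludes the proof. As a bonus, this forces $M$ to be a genuine manifold, since the gluing of two $3$-balls along their $S^2$ boundary is automatically non-singular; thus the a priori pseudomanifold character of $M$ disappears in the spherical-jacket case.

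Concretely, I would write the jacket as $J = (\sigma, \sigma^{-1})$ with $\sigma = (c_0\, c_1\, c_2\, c_3)$ and split the colour set into the two alternating pairs $A = \{c_0, c_2\}$ and $B = \{c_1, c_3\}$. The subgraphs $\cG_A$ and $\cG_B$ obtained by retaining only lines of colours in $A$ or in $B$ respectively are each $2$-regular, hence disjoint unions of bicoloured cycles. In the dual complex $M$, define $H_A$ (respectively $H_B$) as a closed regular neighbourhood of the union of all tetrahedra together with the triangles dual to lines in $\cG_A$ (respectively $\cG_B$). An Euler-characteristic count, entirely analogous to the one used to derive (\ref{g_jacket}), shows that each of $H_A$ and $H_B$ is a disjoint union of handlebodies of total genus $g_J$, that $\partial H_A = \partial H_B = J$, and that $M = H_A \cup_J H_B$.

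Specialising to $g_J = 0$: each connected component of $H_A$ (and of $H_B$) is a handlebody of genus $0$, i.e.\ a $3$-ball; since $\partial H_A = J$ is a single $2$-sphere, $H_A$ reduces to a single $3$-ball, and likewise for $H_B$. Gluing two $3$-balls along their common boundary $S^2$ yields $S^3$, so $M \cong S^3$. The main obstacle is the middle step, namely the rigorous verification that $H_A$ and $H_B$ are bona fide handlebodies whose total genus equals $g_J$; this is the essentially combinatorial content of the lemma, and the bookkeeping parallels the derivation of the jacket genus formula (\ref{g_jacket}), with the alternating colour pairs playing the role of the two sides of the Heegaard surface.
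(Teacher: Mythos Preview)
The paper does not actually prove this lemma: it simply refers the reader to Proposition~3 of \cite{tensorReview}, and the accompanying footnote points to the crystallization-theory fact (Ferri--Gagliardi, Vince) that regular genus zero characterises spheres. So there is no self-contained argument in the paper to compare against, and your Heegaard-splitting outline is precisely the classical mechanism underlying that cited characterisation in dimension three.

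Two points deserve tightening, however. First, your description of $H_A$ and $H_B$ as ``a closed regular neighbourhood of the union of all tetrahedra together with the triangles dual to lines in $\cG_A$'' is not well-posed: the union of all tetrahedra already fills $M$, so both $H_A$ and $H_B$ would equal $M$. The correct picture is the one implicit in Fig.~\ref{jacket_v}: the jacket square cuts each tetrahedron into two wedges, one containing the edge of colour $(c_0 c_2)$ and the other the edge of colour $(c_1 c_3)$; $H_A$ is the union of the first type of wedge and $H_B$ of the second, glued along half-triangles. Equivalently, $H_A$ is a regular neighbourhood of the $1$-subcomplex of $M$ formed by the edges of colour $(c_0 c_2)$ together with their endpoints (the vertices of colours $c_0$ and $c_2$), and symmetrically for $H_B$.

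Second, and more substantively, your ``bonus'' remark that the argument automatically upgrades a pseudomanifold to a manifold hides a genuine difficulty. With the correct description above, $H_A$ contains the vertices of colours $c_0,c_2$ and $H_B$ those of colours $c_1,c_3$; if any of these is singular, the corresponding piece is a cone on a higher-genus surface near that vertex and is \emph{not} a handlebody, regardless of what an Euler-characteristic count says (a cone on $\Sigma_g$ has $\chi=1$ just like a ball). The Ferri--Gagliardi Heegaard argument is stated for manifolds, so in the a~priori pseudomanifold setting you must either first argue combinatorially that a planar jacket forces all $3$-bubbles to be planar (hence $M$ is a manifold, after which your argument runs), or bypass Heegaard entirely and use the reduction argument standard in the tensor-model literature: a planar jacket always contains a face of length two, i.e.\ a $1$-dipole in $\cG$, whose contraction preserves both planarity of the jacket and the PL-homeomorphism type of the dual complex, so one reduces inductively to the elementary two-node graph, which is $S^3$. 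This latter route is the one actually taken in \cite{tensorReview}.
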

\begin{proof}
 See Proposition 3 in \cite{tensorReview}.
\end{proof}
In particular, this implies that if a bubble $b$ is not spherical, all its jackets have genus at least $1$, and therefore: $\omega(b) \geq 3$. This allows to prove the following corollary:
\begin{corollary}
 With the rescaling of the coupling constant (\ref{rescaling_4d}), the divergence degree $\gamma_{\cG}$ of any connected vacuum graph $\cG$ verifies, for any colour $\ell$:
\beq
\gamma_{\cG} \leq 3 (1 - N^{s}_{\ell}) \,,
\eeq
where $N^{s}_{\ell}$ is the number of singular vertices of colour $\ell$. 
\end{corollary}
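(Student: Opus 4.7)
The plan is to read off the corollary directly from the proposition and the spherical-jacket lemma immediately preceding it, by splitting the sum over 4-bubbles of colour $\ell$ according to whether they are spherical or not.

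First, I start from the proposition, which with the rescaling (\ref{rescaling_4d}) gives, for the chosen colour $\ell$,
\begin{equation}
\gamma_{\cG} \;\leq\; 3 - \sum_{b\in\cB_\ell}\omega(b).
\end{equation}
Recall that each $b\in\cB_\ell$ is a 3-dimensional coloured graph triangulating the boundary of the dual vertex of colour $\ell$, so $b$ represents a regular (``spherical'') vertex of the simplicial complex precisely when $b$ is dual to the 3-sphere, and represents a singular vertex otherwise. Thus, by definition, $N^s_\ell$ is the number of bubbles $b\in\cB_\ell$ that are \emph{not} dual to the 3-sphere.

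Second, I use the spherical-jacket lemma contrapositively: if $b\in\cB_\ell$ is not dual to the 3-sphere, then none of its three jackets can have genus $0$, so each jacket satisfies $g_J\geq 1$, whence $\omega(b)=\sum_J g_J \geq 3$. For the remaining (spherical) bubbles, we only need the trivial fact that $\omega(b)\geq 0$, which follows from $g_J\geq 0$ for each jacket. Splitting the sum accordingly,
\begin{equation}
\sum_{b\in\cB_\ell}\omega(b) \;=\; \sum_{\substack{b\in\cB_\ell \\ b\text{ singular}}}\omega(b) \;+\; \sum_{\substack{b\in\cB_\ell \\ b\text{ spherical}}}\omega(b) \;\geq\; 3\,N^s_\ell.
\end{equation}

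Combining the two inequalities yields $\gamma_\cG \leq 3 - 3 N^s_\ell = 3(1-N^s_\ell)$, which is the claim. There is essentially no obstacle: the nontrivial ingredients (the factorised bound, Lemmas~\ref{lemmaAdd} and \ref{comb_lemma2}, and the spherical-jacket characterisation) have already been established; the corollary is just the observation that non-spherical bubbles cost at least three units of degree each, so that each singular vertex of colour $\ell$ suppresses the amplitude by an additional factor of $\delta_t(\one)^{-3}$ relative to the leading regular contribution.
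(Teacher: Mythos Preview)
Your proof is correct and follows exactly the same approach as the paper: apply the proposition $\gamma_\cG \leq 3 - \sum_{b\in\cB_\ell}\omega(b)$, use the spherical-jacket lemma contrapositively to get $\omega(b)\geq 3$ for each non-spherical bubble, and conclude.
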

This result is very similar to what we have proven in 3d, showing suppression of singular pseudomanifolds with respect to the leading order ($\gamma_{\cG} = 3$), therefore consisting only of
manifolds. In particular, in both cases singular pseudomanifolds all have convergent amplitudes. 

Finally, as anticipated in the beginning of this section, we notice that equation (\ref{bound}) also constrain the amplitudes of a special class of manifolds: those which have non-melonic bubbles.
\begin{corollary}
With the rescaling of the coupling constant (\ref{rescaling_4d}), the divergence degree $\gamma_{\cG}$ of any connected vacuum graph $\cG$ verifies, for any colour $\ell$:
\beq
\gamma_{\cG} \leq 3 - N^{nm}_{\ell} \,,
\eeq
where $N^{nm}_{\ell}$ is the number of non-melonic bubbles of colour $\ell$. 
\end{corollary}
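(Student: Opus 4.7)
The proof plan follows directly from the preceding proposition, equation~(\ref{bound}), combined with a standard characterization of melonic graphs as those of vanishing degree. The overall strategy is to split the sum $\sum_{b \in \cB_\ell} \omega(b)$ into contributions from melonic and non-melonic bubbles, and bound each contribution separately.

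First, I would recall that, by definition, the degree of a coloured graph is a sum of genera of jackets, which are closed orientable surfaces. Hence $\omega(b) \geq 0$ for every bubble $b$, and the contribution from melonic bubbles (for which $\omega(b) = 0$ by definition) can simply be dropped from the right-hand side of~(\ref{bound}) without weakening the inequality. This leaves
\beq
\gamma_{\cG} \leq 3 - \sum_{b \in \cB_\ell^{nm}} \omega(b)\,,
\eeq
where $\cB_\ell^{nm} \subset \cB_\ell$ is the subset of non-melonic bubbles of colour $\ell$.

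Next, I would invoke the key combinatorial fact that a $4$-coloured graph is melonic if and only if its degree vanishes (this is proven for instance in \cite{tensorReview, critical} and underlies the very definition of melonic graphs in the $1/N$ expansion). Since the degree is a non-negative integer, this characterization immediately implies $\omega(b) \geq 1$ for every $b \in \cB_\ell^{nm}$. Substituting this lower bound yields
\beq
\gamma_{\cG} \leq 3 - \sum_{b \in \cB_\ell^{nm}} 1 = 3 - |\cB_\ell^{nm}| = 3 - N_\ell^{nm}\,,
\eeq
which is the claimed corollary.

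There is essentially no obstacle here: the entire content of the corollary is the rephrasing of the previous proposition in terms of the combinatorial quantity $N_\ell^{nm}$, via the equivalence between melonicity and vanishing of the degree. The only point requiring a bit of care is to make explicit the citation for this equivalence, and to emphasize that the resulting bound is weaker than the one of the preceding corollary on singular pseudomanifolds (which used the stronger $\omega(b) \geq 3$ for non-spherical bubbles) but is instead sensitive to a larger class of configurations, namely all manifolds containing at least one non-melonic $4$-bubble of a given colour.
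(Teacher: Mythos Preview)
Your proof is correct and follows exactly the approach implicit in the paper: apply equation~(\ref{bound}) and use that non-melonic bubbles are precisely those with $\omega(b) \geq 1$ (the paper does not spell out a proof for this corollary, but this is the only reasonable reading given the surrounding text and the cited characterization of melonic graphs in \cite{critical, tensorReview}).
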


\subsection{Jacket bound}

The notion of jacket of a graph $\cG$ we used in 3d generalizes to any dimension. They are closed surfaces labelled by pairs $(\sigma , \sigma^{\inv})$ of cyclic permutations of $\{ 1 , \ldots , 5
\}$, 
in which the graph $\cG$ can be regularly embedded \cite{FerriGagliardi, Vince_gene}\footnote{In contrast with the 3d case, we construct the jackets in terms of data related to the graph $\cG$ itself,
as opposed to its dual simplicial complex. 
If the two descriptions are of course equivalent, we find more convenient to work with $\cG$ itself in 4d, since representing or even giving an intuitive picture of simplicial complexes is necessarily
more difficult than in 3d.}.
The jacket $J = (\sigma , \sigma^{\inv})$ is the closed surface constituted of all the faces of colours $(\sigma(\ell) \sigma(\ell + 1))$ in $\cG$, glued along there common links. Its genus can be
easily computed
in terms of combinatorial data associated to $\cG$:
\begin{lemma}
 The jacket $J = (\sigma , \sigma^{\inv})$ of a $5$-coloured graph $\cG$ has genus:
\beq
g_{J} = 1 + \frac{3 \cN}{4} - \frac{1}{2} \sum_{\ell = 1}^{5} |\cF(\sigma(\ell) \sigma(\ell + 1))|\,,
\eeq
where $\cN$ is the order of $\cG$, and $\cF( i j )$ is the set of faces of colour $(i j)$ (dual to the set of triangles of colour $(ij)$ in the simplicial complex).

\end{lemma}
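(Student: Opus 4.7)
The plan is to mimic the 3d jacket genus computation carried out in the previous section: since $J$ is a closed orientable surface, its genus is related to its Euler characteristic by
\beq
2 - 2 g_J \;=\; \chi_J \;=\; |\cF_J| - |\cE_J| + |\cV_J|\,,
\eeq
and it suffices to count each of the three sets on the right in terms of combinatorial data of $\cG$.

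First I would identify the cellular structure of $J$ inside $\cG$. By construction of the jacket associated to the cyclic pair $(\sigma, \sigma^{\inv})$, one attaches to every node of $\cG$ a pentagonal 2-cell, whose boundary visits the five lines incident to the node in the cyclic order $\sigma$. Gluing these pentagons along common lines of $\cG$ produces the surface $J$. From this description one reads off: the faces of $J$ are in bijection with the nodes of $\cG$, so that $|\cF_J| = \cN$; the edges of $J$ are in bijection with the lines of $\cG$, and since $\cG$ is $5$-regular with each line shared by exactly two nodes, $|\cE_J| = \frac{5\cN}{2}$; finally the vertices of $J$ are precisely the bicoloured cycles (i.e.\ the faces of $\cG$) whose colours are the pairs $(\sigma(\ell)\sigma(\ell+1))$ of colours consecutive in $\sigma$, so that $|\cV_J| = \sum_{\ell=1}^{5} |\cF(\sigma(\ell)\sigma(\ell+1))|$.

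Plugging these three counts into the Euler formula gives
\beq
2 - 2 g_J \;=\; \cN \;-\; \frac{5\cN}{2} \;+\; \sum_{\ell=1}^{5} |\cF(\sigma(\ell)\sigma(\ell+1))|\,,
\eeq
and solving for $g_J$ yields the claimed identity. The only non-routine step is the identification of vertices of $J$ with the bicoloured cycles indexed by $(\sigma(\ell)\sigma(\ell+1))$: this is the point where the choice of the cyclic permutation $\sigma$ enters, and it should be justified by checking locally around each line that the two pentagons glued along it close up, after a chain of such identifications, into a cycle supported only on the two colours $(\sigma(\ell)\sigma(\ell+1))$ adjacent in $\sigma$ to the colour of that line. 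Once this local check is in place, the counting and the application of Euler's formula are immediate, so I expect the combinatorial verification of the face/edge/vertex structure of the jacket to be the only mildly subtle point of the argument.
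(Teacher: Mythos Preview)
Your proof is correct and follows the same strategy as the paper: apply the Euler formula to a cellular decomposition of $J$ and count faces, edges and vertices in terms of the combinatorial data of $\cG$. The only difference is that you use the dual cellular decomposition: you take the 2-cells of $J$ to be pentagons attached to the nodes of $\cG$ and the vertices of $J$ to be the bicoloured cycles $\cF(\sigma(\ell)\sigma(\ell+1))$, whereas the paper in 4d takes the bicoloured cycles as the 2-cells and the nodes of $\cG$ as the vertices of $J$. Since the Euler characteristic is symmetric under exchanging faces and vertices, both computations yield the same result; in fact your convention coincides with the one the paper itself uses in the 3d lemma you say you are mimicking.
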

 \begin{proof}
By definition, the Euler characteristic of $J$ is:
\beq
\chi_J = 2 - 2 g_J = |\cF_J| - |\cE_J| + |\cV_J|
\eeq
where $\cF_J$, $\cE_J$ and $\cV_J$ are the set of faces, edges and vertices of J. But, by construction:
\beq
|\cF_J| = \sum_{\ell} |\cF(\sigma(\ell) \sigma(\ell + 1))| \;, \qquad |\cE_J| = \cL \;, \qquad |\cV_J| = \cN \;,
\eeq
where $\cL$ is the number of lines in $\cG$. Since moreover $\cL = 2 \cN$, we conclude that:
\beq
2 - 2 g_J = - \frac{3 \cN}{2} + \sum_{\ell} |\cF(\sigma(\ell) \sigma(\ell + 1))| \,.
\eeq 
 \end{proof}

Jackets have been used in the Ooguri model to compute bounds on amplitudes \cite{large-N}. Interestingly, our construction allows to slightly strengthen these results, which we demonstrate now.
Discarding prefactors, formula (\ref{bound2}) immediately implies the following bound on the degree of divergence of a graph $\cG$:
\beq
\gamma_{\cG} \leq |E(125)| + \sum_{b \in \cB_5} \left(|E_{b}(235)| + |E_{b}(245)|\right) + \sum_{b \in \cB_2} \gamma_{3d}(b)\,,
\eeq 
where $\gamma_{3d}(b)$ is the degree of divergence of the $3d$ amplitudes represented by the bubble graph $b$.
Since the choice of colours is arbitrary this generalizes immediately. For any cycle $\sigma = (\ell_1  \ell_2  \ell_3  \ell_4 \ell_5)$ of $(12345)$:
\beq\label{ineq}
\gamma_{\cG} \leq |E(\ell_1 \ell_3 \ell_5)| + \sum_{b \in \cB_{\ell_1}} \left(|E_{b}(\ell_1 \ell_2 \ell_3)| + |E_{b}(\ell_1 \ell_3 \ell_4)|\right) + \sum_{b \in \cB_{\ell_3}} \gamma_{3d}(b)\,.
\eeq 

We would like to deduce from this an inequality involving the genus of the jacket $J$ associated to the cycle $\sigma$. This can be done in three steps. First remark that for any $i \neq j$, 
and $\ell$ different from both $i$ and $j$, the set of faces $\cF(i j)$ can be partitioned in terms of the faces $\cF_{b}(i j)$ of the bubble graphs $b \in \cB_{\ell}$:
\beq
\cF(ij) = \underset{b \in \cB_{\ell}}{\cup} \cF_{b}(i j)\,. 
\eeq
But since the set of faces $\cF_{b}(ij)$ is dual to the set of edges $\E_{b}(\ell ij)$ in the simplicial complex associated to $b$, we immediately obtain the following equality of cardinals:
\beq
|\cF(i j)| = \sum_{b \in \cB_\ell} |E_{b}(\ell i j)| \,.
\eeq
In particular, the sum over $\cB_{\ell_1}$ in (\ref{ineq}) is equal to $|\cF(\ell_2 \ell_3)| + |\cF(\ell_3 \ell_4)|$. Second, we can use lemma \ref{comb_lemma2} to bound $|E(\ell_1 \ell_3 \ell_5)|$:
\beq
|E(\ell_1 \ell_3 \ell_5)| \leq 1 - |\cB_{\ell_3}| + \sum_{b \in \cB_{\ell_3}} |E_{b}(\ell_1 \ell_3 \ell_5)| = 1 - |\cB_{\ell_3}| + |\cF(\ell_1 \ell_5)|\,.
\eeq
Finally, we can use a bound of the type (\ref{ineq_3d}) to bound the $\gamma_{3d}(b)$ terms:
\beq
\sum_{b \in \cB_{\ell_3}} \gamma_{3d}(b) \leq \sum_{b \in \cB_{\ell_3}} \left( 1 + |E_{b}(\ell_3 \ell_1 \ell_2)| + |E_{b}(\ell_3 \ell_4 \ell_5)| \right) = |\cB_{\ell_3}| + |\cF(\ell_1 \ell_2)| +
|\cF(\ell_4 \ell_5)|\,.
\eeq
All in all we obtain:
\beq
\gamma_{\cG} \leq 1 + \sum_{\ell = 1}^{5} |\cF( \sigma(\ell) \sigma(\ell + 1))| = 3 + \frac{3 \cN}{2} - 2 g_J
\eeq
This is our final result, which as in 3d is particularly nice once the coupling constant has been appropriately rescaled.
\begin{proposition}\label{jacket_4d}
With the rescaling of the coupling constant (\ref{rescaling_4d}), the divergence degree $\gamma_{\cG}$ of a connected vacuum graph $\cG$ verifies, for any jacket $g_J$:
\beq
\gamma_{\cG} \leq 3  - 2 g_J\,.
\eeq
\end{proposition}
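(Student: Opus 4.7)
The plan is to start from the factorized bound (\ref{bound2}), which already packages the amplitude as a product of heat-kernel scalings times 3d Boulatov bubble amplitudes indexed by $\cB_2$, and then systematically rewrite every quantity appearing in that bound in terms of combinatorial data that, summed appropriately, reconstruct the jacket Euler characteristic.

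Concretely, I first drop the prefactor $Q$ (taking $Q=1$) and the combinatorial lemma just established to replace $\cA^{\cG}_t$ by $[\delta_t(\one)]^{\gamma_\cG}$, which yields
\[
\gamma_{\cG}\ \leq\ |E(125)| \,+\, \sum_{b\in\cB_5}\bigl(|E_b(235)| + |E_b(245)|\bigr) \,+\, \sum_{b\in\cB_2}\gamma_{3d}(b)\,.
\]
Because of the manifest symmetry of the regularization in the colour indices, the same bound holds with $5$ and $2$ replaced by any pair $\ell_1,\ell_3$ appearing in a chosen cyclic permutation $\sigma=(\ell_1\ell_2\ell_3\ell_4\ell_5)$; writing it in this general form (equation (\ref{ineq})) is the starting point for trying to make the genus $g_J$ of the jacket $J=(\sigma,\sigma^{-1})$ appear.

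Next, I would convert every combinatorial quantity on the right-hand side into face counts $|\cF(ij)|$ of the initial 5-coloured graph, since the jacket genus formula supplied by the preceding lemma depends precisely on $\sum_{\ell}|\cF(\sigma(\ell)\sigma(\ell+1))|$. For this I would use the elementary partition $\cF(ij)=\bigsqcup_{b\in\cB_\ell}\cF_b(ij)$ (valid for any third colour $\ell$), together with the duality $|\cF_b(ij)|=|E_b(\ell ij)|$ between faces in the bubble graph and edges of the corresponding colour triple in the bubble's simplicial complex. This immediately converts the $\cB_{\ell_1}$-sum into $|\cF(\ell_2\ell_3)|+|\cF(\ell_3\ell_4)|$. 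The lone term $|E(\ell_1\ell_3\ell_5)|$ is handled by Lemma \ref{comb_lemma2}, giving $|E(\ell_1\ell_3\ell_5)|\leq 1-|\cB_{\ell_3}|+|\cF(\ell_1\ell_5)|$. Finally, each 3d bubble amplitude $\gamma_{3d}(b)$ is bounded using the 3d inequality (\ref{ineq_3d}), which after summation over $\cB_{\ell_3}$ yields $|\cB_{\ell_3}|+|\cF(\ell_1\ell_2)|+|\cF(\ell_4\ell_5)|$ and, crucially, cancels the $-|\cB_{\ell_3}|$ coming from the previous step.

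Adding everything together produces exactly $\gamma_{\cG}\leq 1+\sum_{\ell=1}^5|\cF(\sigma(\ell)\sigma(\ell+1))|$, and substituting the jacket genus formula $\sum_\ell|\cF(\sigma(\ell)\sigma(\ell+1))|=2-2g_J+3\cN/2$ gives $\gamma_{\cG}\leq 3+3\cN/2-2g_J$. The $3\cN/2$ piece is precisely the vertex scaling absorbed by the rescaling (\ref{rescaling_4d}), so the final form reads $\gamma_{\cG}\leq 3-2g_J$. The main obstacle I anticipate is purely bookkeeping: making sure that the three conversions (bubble-sum to global face count, tree-bound on $|E(\ell_1\ell_3\ell_5)|$, and 3d inequality on $\gamma_{3d}(b)$) combine so that the spurious $\pm|\cB_{\ell_3}|$ terms cancel exactly and that the same colour triples appear on both sides of the equality without over- or under-counting; the geometric input (positivity of the heat kernel, factorization into Boulatov bubbles, and the jacket Euler formula) has already been done upstream.
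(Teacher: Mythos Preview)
Your proposal is correct and follows essentially the same route as the paper: starting from the factorized bound (\ref{bound2}), generalizing to an arbitrary cycle $\sigma$, converting the bubble sums into face counts via the partition $\cF(ij)=\bigsqcup_{b\in\cB_\ell}\cF_b(ij)$ and the duality $|\cF_b(ij)|=|E_b(\ell ij)|$, bounding $|E(\ell_1\ell_3\ell_5)|$ via Lemma~\ref{comb_lemma2}, and bounding the 3d bubble amplitudes via (\ref{ineq_3d}) so that the $\pm|\cB_{\ell_3}|$ contributions cancel. The resulting inequality $\gamma_{\cG}\leq 1+\sum_\ell|\cF(\sigma(\ell)\sigma(\ell+1))|$ and its conversion via the jacket genus formula and the rescaling (\ref{rescaling_4d}) are exactly the paper's argument.
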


In particular, we notice that the best (strongest) bound is obtained considering the {\it maximum} of the genera of all the 12 jackets of the graph $\cG$.
Also, we see that all graphs with a jacket of genus greater than $2$ have convergent amplitudes, just like singular pseudomanifolds. 

\

This has to be compared with the known jacket bound, which is weaker and is a direct corollary of the previous result:
\begin{corollary}
 With the rescaling of the coupling constant (\ref{rescaling_4d}), the divergence degree $\gamma_{\cG}$ of a connected vacuum graph $\cG$ verifies:
\beq
\gamma_{\cG} \leq 3  - \frac{1}{6} \omega(\cG)\,.
\eeq
\end{corollary}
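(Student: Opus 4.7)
The plan is to deduce this corollary directly from Proposition~\ref{jacket_4d} by averaging the bound $\gamma_{\cG} \leq 3 - 2 g_J$ over the set of all jackets of $\cG$. Since the inequality holds uniformly in $J$, summing over the full family of jackets and dividing by the cardinality of that family produces an inequality involving the average genus, which by definition is $\omega(\cG)$ up to a normalization factor.

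Concretely, I would first recall that in the 4d case jackets are labelled by unordered pairs $(\sigma, \sigma^{\inv})$ of cyclic permutations of the colour set $\{1,\ldots,5\}$, so that the total number of jackets is $\tfrac{(5-1)!}{2} = 12$. Next I would write down the bound of Proposition~\ref{jacket_4d} for each such $J$ and sum:
\beq
12 \, \gamma_{\cG} \;\leq\; \sum_{J} \bigl( 3 - 2 g_J \bigr) \;=\; 36 - 2 \sum_{J} g_J \;=\; 36 - 2 \, \omega(\cG)\,,
\eeq
where in the last equality I use the definition $\omega(\cG) \equiv \sum_{J} g_J$ (the same normalization used throughout the paper when degrees were computed for bubbles). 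Dividing by $12$ yields $\gamma_{\cG} \leq 3 - \tfrac{1}{6} \omega(\cG)$.

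There is essentially no obstacle: the argument is purely a linear-averaging step, so the only care needed is to make sure that the counting of jackets ($12$ in dimension $4$) and the normalization of $\omega(\cG)$ match the conventions used in the statement of Proposition~\ref{jacket_4d} and in the earlier lemma computing the sum of bubble degrees. It is also worth noting, as an aside within the proof, that the corollary is strictly weaker than the proposition: the latter controls $\gamma_{\cG}$ by the \emph{maximum} $\sup_{J} g_J$, while the former only sees the average $\tfrac{1}{12}\sum_J g_J$, so any improvement over the $\tfrac{1}{6}\omega(\cG)$ bound from the literature comes from the proposition rather than this corollary.
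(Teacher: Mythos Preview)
Your proof is correct and matches the paper's own argument essentially verbatim: the paper simply recalls that $\omega(\cG) = \sum_{J} g_J$ and averages the bound of Proposition~\ref{jacket_4d} over the $12$ jackets. Your additional remarks on the jacket count and on the corollary being strictly weaker than the proposition are accurate and welcome elaborations.
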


\begin{proof}
 The degree of $\cG$ is defined by:
\beq
\omega(\cG) = \sum_{J} g_J\,,
\eeq
so that averaging the previous bound over the $12$ jackets of $\cG$ immediately gives the standard jacket bound. 
\end{proof}

\section{Conclusion}
We have used a vertex reformulation of the Boulatov model in 3d and an edge reformulation of the Ooguri model in 4d to obtain new scaling bounds for their Feynman amplitudes. The standard formulations
in terms of variables associated to edges (Boulatov) and triangles (Ooguri) have the advantage of bringing the same amplitudes in a geometrically clear simplicial path integral form (in the Lie
algebra representation) or of highlighting the topological nature of the model, weighting only flat discrete connections (in group representation). The new formulations, on the other hand, seem more
natural from a field theory point of view, in light of the fact that the symmetry group of the theory (diffeomorphism/translation symmetry) acts on such variables, and, as we have shown in this paper, allows
a more direct and powerful analysis of the scaling of the amplitudes when the cut-off is removed. In particular, we have obtained, in both 3d and 4d topological models: a) a bound focused on the
contribution of {\it bubbles} (3-cells or 4-cells, in 3d and 4d respectively) around vertices of the simplicial complex dual to the GFT diagrams, where the presence of conical singularities
distinguishes arbitrary pseudomanifolds from manifolds, showing that the singular complexes are generically suppressed in perturbation theory (that is, at any order or for any number of simplices); b)
a bound depending on the genera of the so-called {\it jackets} (embedded surfaces) of the same complex, stronger than the known ones that allowed to prove the dominance of spherical manifolds in the
limit of large values of the cut-off. In the process, we have obtained several further insights into the combinatorial structure of the diagrams themselves and the properties of the amplitudes.

\

Let us comment on a few interesting aspects of the results obtained, and of the procedure used to obtain them. First, as we anticipated, the entire construction relies on the gauge invariance/closure
condition imposed in the initial GFT field. Therefore it could be considered, at least until similar results are reproduced by other means, a characteristic feature of GFT models of the type we
considered (and in more direct relation with spin foam and simplicial gravity approaches), not easily extendable to the simpler tensor models. Second, at the same time the geometric interpretation and
content of the data entering the GFT amplitudes and states has not been really used in a crucial way in our analysis, which was essentially based on combinatorial aspects, thus more in line with the
usual analysis of tensor models. The full geometric content will become more relevant, we expect, when considering open GFT Feynman diagrams and non-trivial boundary states. From both a physical and
mathematical point of view, we expect the type of boundary states used to affect sensibly the sum over complexes, that is the dynamics of the theory. Of course, this will be even more true in
non-topological, 4d gravity models.

\

On top of settling the important issue of the relative weight of regular versus singular configurations in the GFT sum over complexes, at least for this class of models, the results obtained in this
work, and in particular the improved jacket bounds, will be relevant first of all for future studies of renormalizability and critical behaviour of the same models. This is indeed the next item on the
agenda.

We expect them to be directly relevant also for closely related models. These include, for example, the dynamical version of topological models, that is their modification including group laplacian
operators in the kinetic term. The additional propagators would modify the power counting, but on the one hand they just bring additional decays with the cut-off, on the other hand they may also be an
alternative way to obtain the rescaling of the coupling constant that we imposed in deriving uniform bounds. Thus we expect our scaling bounds  and in general our analysis to be still relevant. Still,
only a proper study can confirm this expectation. More important, our result will be relevant for 4d gravity models. In fact, all current spin foam and GFT models for 4d gravity are constructed
starting from the topological Ooguri model by adding suitable restrictions on the group, Lie algebra or representation data appearing in the GFT action and summed over in the Feynman amplitudes. As a
result, they share many properties of topological models and we expect our analysis, if not our results, to be easily extended to them. As we had pointed out already, the fact that our proofs did not
rely on combinatorial/topological invariance properties (e.g. invariance of the amplitudes under Pachner or dipole moves) should make the extension  to gravity models less difficult.

The edge representation we develop for the topological 4d GFT model has a potential interest for the 4d gravity models from a purely geometric perspective as well. In fact, as mentioned above, such
gravity models are constructed by constraining the variables associated to triangles of the simplicial complex and that are interpreted as the discretization of the B-field in topological BF theory,
an interpretation that is confirmed explicitly by the simplicial path integral form of the GFT amplitudes in the Lie algebra representation. The sought-for effect of the constraints on discrete
B-variables is to enforce geometricity of the configurations summed over in the resulting simplicial path integral, or, in other words, to allow the reconstruction from the same discrete B-variables
of a discrete tetrad field encoding the metric information of the simplicial complex. The resulting discrete tetrad would be associated indeed to the edges of the simplicial complex, as edge vectors.
Therefore, we expect that the edge reformulation of the constrained topological GFT models, following our construction, will lead more naturally to a re-formulation of them in which metric information
will be more transparent and easier to analyse. In the same 4d gravity models we may expect a further transformation of variables to be available, which would express the GFT field and amplitudes in
terms of variables associated to the vertices of the simplicial complex. Beside purely geometric considerations, this expectation is motivated by the fact that the constraints reducing topological BF
theory to gravity in a simplicial setting also break the translation invariance of the theory (acting on edges of the complex), leaving as natural transformations (although not necessarily symmetries
\cite{diffeosRegge}) only vertex translations, as a simplicial counterpart of diffeomorphism invariance of the continuum. We leave this for future work. 

\section*{Acknowledgements}
This work is partially supported by a Sofja Kovalevskaja Award by the A. von Humboldt Stiftung, which is gratefully acknowledged. It is a pleasure to thank Aristide Baratin and James Ryan for fruitful
discussions.


\begin{thebibliography}{99}
\bibitem{iogft} D. Oriti, in  {\sl Approaches to Quantum Gravity}, D. Oriti, ed., Cambridge
University Press, Cambridge (2009), [arXiv: gr-qc/0607032]
\bibitem{ProcCapeTown} D. Oriti, 
in  {\sl Foundations of space and time}, G. Ellis, J. Murugan, A. Weltman (eds), Cambridge University Press, Cambridge (2011), arXiv:1110.5606 [hep-th].

\bibitem{tensorReview} R. Gurau, J. Ryan, arXiv:1109.4812 [hep-th]

\bibitem{GFTfluid} D. Oriti, Proceedings of Science PoS(QG-Ph)030, [arXiv:0710.3276]

\bibitem{vincentTensor} V. Rivasseau, arXiv:1112.5104 [hep-th]

\bibitem{mm} F. David,  Nucl. Phys. B257, \textbf{45} (1985); P. Ginsparg, [arXiv: hep-th/9112013]; P. Di Francesco, P. Ginsparg, J. Zinn-Justin, Phys. Rept. 254 (1995) 1-133,hep-th/9306153; P.
Ginsparg, G. Moore, hep-th/9304011

\bibitem{DT} J. Ambjorn, A. Goerlich, J. Jurkiewicz, R. Loll, arXiv:1203.3591 [hep-th]

\bibitem{lqg} T. Thiemann, {\it Modern canonical quantum General Relativity}, Cambridge University Press, Cambridge (2007); A. Ashtekar, J. Lewandowski (2004) Background independent quantum
gravity: A status report, {\em Class Quant Grav} {\bf 21} R53-R152; C. Rovelli, {\it Quantum Gravity}, Cambridge University Press, Cambridge (2006)
 
\bibitem{sf} A. Perez, Living Reviews, to appear, arXiv:1205.2019 [gr-qc]; 
A. Perez, Spin foam models for quantum gravity, Class.Quant.Grav. 20 (2003) R43,
arXiv:gr-qc/0301113.

\bibitem{tensor} M. Gross,  Nucl. Phys. Proc. Suppl. \textbf{25A}, 144-149, (1992); J. Ambjorn, B. Durhuus, T. Jonsson,  Mod. Phys. Lett. \textbf{A6}, 1133-1146, (1991); N. Sasakura, Mod.Phys.Lett. A6
(1991) 2613-2624

\bibitem{large-N}
  R.~Gurau,
  Annales Henri Poincare {\bf 12}, 829 (2011)
  [arXiv:1011.2726 [gr-qc]];  
  R.~Gurau and V.~Rivasseau,
  Europhys.\ Lett.\  {\bf 95}, 50004 (2011)
  [arXiv:1101.4182 [gr-qc]];
  R.~Gurau,
  arXiv:1102.5759 [gr-qc].

\bibitem{uncoloring} V. Bonzom, R. Gurau, V. Rivasseau, arXiv:1202.3637 [hep-th]

\bibitem{vertex}
  S.~Carrozza and D.~Oriti,
  Phys.\ Rev.\  D {\bf 85}, 044004 (2012)
  [arXiv:1104.5158 [hep-th]].

\bibitem{vertex_short}
  S.~Carrozza,
  arXiv:1112.2886 [gr-qc].

\bibitem{razvanVirasoro} R. Gurau, Nucl.Phys. B852 (2011) 592-614, arXiv:1105.6072 [hep-th]

\bibitem{razvanUniversality} R. Gurau, arXiv:1111.0519 [math.PR]

\bibitem{mikecarlo} M. Reisenberger, C. Rovelli, Class. Quant. Grav. 18 (2001) 121-140, gr-qc/0002095

\bibitem{regge} H. Hamber, Gen.Rel.Grav. 41 (2009) 817-876, arXiv:0901.0964 [gr-qc]; R. M. Williams, Nucl. Phys. Proc. Suppl. 57 (1997) 73-81, gr-qc/9702006

\bibitem{1stRegge} J. Barrett, Class.Quant.Grav. 11 (1994) 2723-2730, hep-th/9404124; M. Caselle, A. D'Adda, L. Magnea, Phys.Lett. B232 (1989) 457; I. Drummond, Nucl.Phys. B273 (1986) 125, H. Ren, 
Nucl.Phys. B301 (1988) 661

\bibitem{cboulatov}
  J.~Ben Geloun, J.~Magnen and V.~Rivasseau,
  Eur.\ Phys.\ J.\  C {\bf 70}, 1119 (2010)
  [arXiv:0911.1719 [hep-th]].

\bibitem{majidfreidel}
  L.~Freidel and S.~Majid,
  Class.\ Quant.\ Grav.\  {\bf 25}, 045006 (2008)
  [arXiv:hep-th/0601004].

\bibitem{karim_fourier}
  E.~Joung, J.~Mourad and K.~Noui,
  J.\ Math.\ Phys.\  {\bf 50}, 052503 (2009)
  [arXiv:0806.4121 [hep-th]].
   
  
\bibitem{ad_nc} 
  A.~Baratin and D.~Oriti,
  Phys.\ Rev.\ Lett.\  {\bf 105}, 221302 (2010)
  [arXiv:1002.4723 [hep-th]].  
  
\bibitem{GFT-BC} A. Baratin, D. Oriti,   New J. Phys. 13 (2011) 125011, arXiv:1108.1178 [gr-qc]
  
\bibitem{GFT-Holst} A. Baratin, D. Oriti,   Phys. Rev. D85 (2012) 044003, arXiv:1111.5842 [hep-th]  

\bibitem{GFTdiffeos} 
  A.~Baratin, F.~Girelli and D.~Oriti,
  Phys.\ Rev.\ D {\bf 83}, 104051 (2011)
  [arXiv:1101.0590 [hep-th]].
  
\bibitem{GirelliLivine} 
  F.~Girelli and E.~R.~Livine,
  Class.\ Quant.\ Grav.\  {\bf 27}, 245018 (2010)
  [arXiv:1001.2919 [gr-qc]].
  
\bibitem{GFT-EPRL} L. Freidel, K. Krasnov, Class. Quant. Grav. \textbf{25}, 125018 (2008) [arXiv: 0708.1595];  J. Engle, R. Pereira, C. Rovelli, Nucl. Phys. B \textbf{798}, 251 (2008), [arXiv:
0708.1236]; J. Engle, E. Livine, R. Pereira, C. Rovelli, Nucl. Phys. B 
\textbf{799}, 136 (2008), [arXiv:0711.0146];   J. Ben Geloun, R. Gurau, V. Rivasseau, Europhys.Lett. 92 (2010) 60008, arXiv:1008.0354 [hep-th]

\bibitem{GW} H. Grosse, R. Wulkenhaar, Commun. Math. Phys. 256 (2005) 305-374, hep-th/0401128; V. Rivasseau, F. Vignes-Tourneret, R. Wulkenhaar,  Commun. Math. Phys. 262 (2006) 565-594,
hep-th/0501036 

\bibitem{josephvincent} J. Ben Geloun, V. Rivasseau,   arXiv:1111.4997 [hep-th]

\bibitem{josephsamary} J. Ben Geloun, D. Ousmane Samary,   arXiv:1201.0176 [hep-th]

\bibitem{graphity} T. Konopka, F. Markopoulou, L. Smolin, hep-th/0611197;  T. Konopka, F. Markopoulou, S. Severini, Phys. Rev. D77 (2008) 104029, arXiv:0801.0861 [hep-th]

\bibitem{FerriGagliardi} M. Ferri and C. Gagliardi, {\sl Crystallisation moves.},
Pacific J. Math. Volume 100, Number 1 (1982), 85-103. 

\bibitem{Vince_gene} A. Vince, {\sl n-Graphs}, Discrete Mathematics, Volume 72, Issues 1–3, 367-380 (1988).

\bibitem{Vince_2d} A. Vince, {\sl The classification of closed surfaces using colored graphs}, Graphs and Combinatorics, 9:75 84 (1993).
\bibitem{critical} V. Bonzom, R. Gurau, A. Riello, V. Rivasseau, Nucl.Phys. B853 (2011) 174-195, arXiv:1105.3122 [hep-th]

\bibitem{boulatov} D. V Boulatov, Mod.Phys.Lett. A{\bf 7}:1629-1646 (1992), [arXiv:hep-th/9202074]

\bibitem{ooguri} H. Ooguri, Mod. Phys. Lett. A7, 2799 (1992), hep-th/9205090

\bibitem{r_lost} 
  R.~Gurau,
  Class.\ Quant.\ Grav.\  {\bf 27}, 235023 (2010)
  [arXiv:1006.0714 [hep-th]].

\bibitem{generalizedGFT} D. Oriti, Phys. Rev. D73 (2006) 061502, gr-qc/0512069; D. Oriti,  Class. Quant. Grav. 27 (2010) 145017, arXiv:0902.3903 [gr-qc]; D. Oriti, T. Tlas, Class. Quant. Grav. 27
(2010) 135018, arXiv:0912.1546 [gr-qc]

\bibitem{GFTrenorm} L. Freidel, R. Gurau and D. Oriti, Phys. Rev. D \textbf{80}, 044007 (2009), [arXiv:0905.3772]; J. Magnen, K. Noui, V. Rivasseau and M. Smerlak, Class. Quant. Grav. \textbf{26},
185012 (2009), [arXiv:0906.5477]; V. Rivasseau, PoS CNCFG2010 (2010) 004, arXiv:1103.1900 [gr-qc]

\bibitem{diffeosRegge} B.Dittrich, [arXiv:0810.3594[gr-qc]]; B. Dittrich, arXiv:1201.3840 [gr-qc]; R. Gambini, J. Pullin, Class. Quant. Grav. 26 (2009) 035002 . arXiv:0807.2808 [gr-qc]; M. Rocek, R.
M. Williams,  Phys. Lett. B104 (1981) 31; L. Freidel, D. Louapre, Nucl. Phys. B662 (2003) 279-298,  gr-qc/0212001; 

\bibitem{francesco} 
  F.~Caravelli,
  arXiv:1012.4087 [math-ph].
  
\bibitem{jimmy} 
  J.~P.~Ryan,
  Phys.\ Rev.\ D {\bf 85}, 024010 (2012)
  [arXiv:1104.5471 [gr-qc]].
  
\bibitem{matteovalentin} V. Bonzom, M. Smerlak, Lett.Math.Phys. 93 (2010) 295-305, arXiv:1004.5196 [gr-qc]; V. Bonzom, M. Smerlak, arXiv:1008.1476 [math-ph], V. Bonzom, M. Smerlak, arXiv:1103.3961
[gr-qc]
  
\bibitem{ValentinJoseph} 
  J.~Ben Geloun and V.~Bonzom,
  Int.\ J.\ Theor.\ Phys.\  {\bf 50}, 2819 (2011)
  [arXiv:1101.4294 [hep-th]].

\bibitem{leejoao} C. Contaldi, J. Magueijo, L. Smolin, Class. Quant. Grav. 24 (2007) 3691-3700, astro-ph/0611695

\end{thebibliography}
\end{document}